\documentclass[aps,pra,10pt,floatfix,amsmath,onecolumn,notitlepage,nofootinbib,longbibliography,superscriptaddress]{revtex4-2}

\usepackage[ruled]{algorithm2e}
\SetKwFor{For}{for (}{) $\lbrace$}{$\rbrace$}
\usepackage{amssymb}
\usepackage{graphicx}
\usepackage{graphics}
\usepackage{amsmath}
\usepackage{amsthm}
\usepackage{color}
\usepackage{dsfont}
\usepackage{wrapfig}
\usepackage[normalem]{ulem}
\usepackage{longtable}

\usepackage{tabularray}
\usepackage{float}
\usepackage{mathtools}
\usepackage[colorlinks=true, linkcolor=blue, citecolor=red]{hyperref}
\usepackage{verbatim}
\usepackage{xcolor}
\usepackage[nodayofweek]{datetime}
\usepackage{wrapfig}

\usepackage{multirow}

\newcommand{\p}{\mathrm{p}}
\newcommand{\pr}{\mathrm{PR}}
\newcommand{\chsh}{\mathrm{CHSH}}

\usepackage{color}

\def\>{\rangle}
\def\<{\langle}

\def\id{\mathsf{id}}

\renewcommand{\geq}{\geqslant}
\renewcommand{\leq}{\leqslant}

\newcommand{\A}{\mathbf{A}}
\newcommand{\B}{\mathbf{B}}
\newcommand{\C}{\mathbf{C}}

\newtheorem{theorem}{Theorem}
\newtheorem*{theorem*}{Theorem}

\newtheorem{lemma}{Lemma}
\newtheorem*{lemma*}{Lemma}

\theoremstyle{definition}

\newtheorem{definition}{Definition}
\newtheorem*{definition*}{Definition}

\theoremstyle{remark}

\newcommand\nseq{\stackrel{\mathclap{\normalfont\mbox{\small ns}}}{=}}

\newcommand{\xtens}{\mathbin{\mathop{\otimes}\limits_{\max}}}
\newcommand{\ntens}{\mathbin{\mathop{\otimes}\limits_{\min}}}

\newcommand{\tr}{{\rm Tr}}

\newcommand{\ket}[1]{|#1\rangle}

\newcommand{\npr}{g}

\newcommand{\1}{\mathds{1}}

\def\chsh{{\rm CHSH}}

\definecolor{cool_green}{rgb}{0.0, 0.5, 0.0}

\begin{document}

\title{Correlation Self-Testing of Quantum Theory against Generalised Probabilistic Theories with Restricted Relabelling Symmetry}
\author{Kuntal Sengupta}
\email{kuntal.sengupta@neel.cnrs.fr}
\affiliation{Univ.\ Grenoble Alpes, CNRS, Grenoble INP, Institut N\'eel, 38000 Grenoble, France} 
\affiliation{Department of Mathematics, University of York, Heslington, York, YO10 5DD, United Kingdom}
\author{Mirjam Weilenmann}
\email{mirjam.weilenmann@inria.fr}
\affiliation{Inria, Télécom Paris - LTCI, Institut Polytechnique de Paris, 91120 Palaiseau, France}
\affiliation{Department of Applied Physics, University of Geneva, Switzerland}
\author{Roger Colbeck}
\email{roger.colbeck@kcl.ac.uk}
\affiliation{Department of Mathematics, King's College London, Strand, London, WC2R 2LS, United Kingdom}
\affiliation{Department of Mathematics, University of York, Heslington, York, YO10 5DD, United Kingdom}

\date{$4^{\text{th}}$ November 2025}

\begin{abstract} 
  Correlation self-testing of quantum theory involves identifying a task or set of tasks whose optimal performance can be achieved only by theories that can realise the same set of correlations as quantum theory in every causal structure. Following this approach, previous work has ruled out various classes of generalised probabilistic theories whose joint state spaces have a certain regularity in the sense of a (discrete) rotation symmetry of the bipartite state spaces. Here we consider theories whose bipartite state spaces lack this regularity. We form them by taking the convex hull of all the local states and a finite number of non-local states. We show that a criterion of compositional consistency is needed in such theories: for a measurement effect to be valid, there must exist at least one measurement that it is part of. This goes beyond previous consistency criteria and corresponds to a strengthening of the no-restriction hypothesis. We show that quantum theory outperforms these theories in a task called the adaptive CHSH game, which shows that they can be ruled out experimentally. We further show a connection between compositional consistency and Tsirelson's bound.
\end{abstract}
\maketitle

\section{Introduction}

Within quantum theory, separated parties can realise correlations that are impossible to create classically. This is known as nonlocality and, as well as being a striking foundational feature, it also has applications, e.g., in cryptography~\cite{Pirandola20}. These nonlocal correlations remain non-signalling, i.e., they do not allow the separated parties to communicate, and form a subset of all non-signalling correlations~\cite{Cirelson93,pr}. Understanding why only a subset of the non-signalling correlations can be realised in quantum theory is an important open question in quantum foundations.

One way to approach this question is to start from non-signalling correlations and identify information-theoretic principles that restrict this set to the set of quantum correlations. A few proposed principles include non-triviality of communication complexity~\cite{PhysRevLett.96.250401}, impossibility of nonlocal computation~\cite{PhysRevLett.99.180502}, information causality~\cite{Pawlowski2009}, macroscopic locality~\cite{Navascues2009} and local orthogonality~\cite{Fritz2013}. Although these approaches provide insight into the properties of quantum correlations and reduce the set of allowed non-signalling correlations, none is known to single out the set of quantum correlations~\cite{Navascues2015}. Whether a given principle is natural or not is somewhat subjective, which leads us to instead consider the possibility of a task in which quantum theory performs optimally. \textit{Correlation self-testing} (which we abbreviate to \textit{self-testing} henceforth) of quantum theory~\cite{PhysRevLett.125.060406,PhysRevA.102.022203} follows this approach and asks whether there is an information-theoretic task that can only be optimally performed using quantum correlations. If such a task were found then the underlying information-theoretic requirement for optimally winning the task might point to a physical principle.

In~\cite{PhysRevLett.125.060406,PhysRevA.102.022203} the Adaptive CHSH (ACHSH) game was proposed as a candidate task for correlation self-testing of quantum theory. There, quantum theory was self-tested against a significant collection of alternative theories defined in the framework of Generalised Probabilistic Theories (GPTs). For instance, it was self-tested against theories constructed by min- and max-tensor products of any finite dimensional single system state spaces and locally tomographic~\cite{Janotta_2011} theories with specific single system state spaces, independent of the composition. It is known from two examples~\cite{realQM,PhysRevA.110.022225} that the ACHSH game itself is not sufficient for self-testing quantum theory in general. Real quantum mechanics can win the ACHSH game with the same score as quantum theory but can be ruled out with another game in the same causal structure~\cite{realQM}. In~\cite{PhysRevA.110.022225}, a theory was found that outperforms quantum theory in the ACHSH game. However, so far this theory does not recover all correlations in the bipartite Bell scenario, see e.g.,~\cite[Section 5.6]{wreo36425} for a treatment of the chained Bell inequalities. Whether an extension of the theory of~\cite{PhysRevA.110.022225} may resolve this issue in the future is unknown.  
For the theories considered in this work, the ACHSH game will turn out to be sufficient, hence we restrict our considerations to this game.

The quantum advantage in the ACHSH game comes from agents being able to perform entanglement swapping in which two parties, each of whom separately share entanglement with a third can end up sharing an entangled state by post-selecting on the outcome of a joint measurement performed by the third party. Theories whose compositions are formed with both min- and max-tensor products of single system state spaces do not allow this: in the min-tensor product, all bipartite states are local, while in the max-tensor product, the set of joint measurements is not rich enough (cf.\ Definition~\ref{def::MinAndMax}). Therefore, both these type of compositions perform no better than local theories in the ACHSH game~\cite{PhysRevLett.125.060406,PhysRevA.102.022203}. In contrast, examples of state spaces with bipartite composition rules beyond the min- and max-tensor product are known to allow entanglement swapping, while being able to realise all non-signalling correlations~\cite{PhysRevLett.102.110402,Skrzypczyk_2009}. It is hence natural to ask whether such theories could achieve higher scores than quantum theory in the ACHSH game. Our main results answer this question in the negative.

In~\cite{PhysRevLett.125.060406,PhysRevA.102.022203}, state spaces were considered that are closed under all relabelling operations (i.e., relabellings of inputs, outputs and parties). Using the set of correlations alone it is not always possible to deduce whether the underlying state space has all of these symmetries (see Section~\ref{SubSec::Correlations} for an example).   
It is hence reasonable to consider theories without these symmetries.\footnote{Note that the theory that outperforms quantum theory in the ACHSH game~\cite{PhysRevA.110.022225} also dropped some of these symmetries.} 
In this paper, we consider theories whose state spaces are symmetric at the level of single systems, but not at the level of bipartite systems (in contrast to, for example, the min- and max-tensor product compositions). Our examples include state spaces obtained by removing particular extremal states from the boxworld state space, and modifications of these where each extremal state is mixed with noise. We consider cases with bipartite systems in which local tomography requires either two or three binary outcome measurements. Since the smallest quantum state requires three measurements for state tomography, the case with three measurements provides a closer comparison to quantum theory. Although our state space models are asymmetric in general, we retain symmetry under party swap. We present a series of results to show that quantum theory can be self-tested against all these asymmetric theories using the ACHSH game. 

A consideration that is significant in this analysis is that ensuring compositional consistency in such theories requires more stringent restrictions on the measurement effects than the usual no-restriction hypothesis~\cite{PhysRevA.81.062348} and other criteria considered in~\cite{PhysRevLett.102.110402,PhysRevA.73.012101,PhysRevLett.119.020401,dallarno2023signaling}. The criterion we propose requires both that when an effect is applied to subsystems of a larger system the remaining subsystems reside in a valid state~\cite{PhysRevLett.102.110402,PhysRevA.73.012101} and that this effect can be part of a measurement in which every effect is consistent in the manner just stated. We call this property (when applied to arbitrarily large systems) complete state space preservability, by analogy with complete positivity in quantum theory. Because of the difficulty of dealing with arbitrarily large systems, we present a necessary condition, \textit{minimal $k$-preservability}, that an effect must satisfy in order to be completely state space preserving (where $k$ relates to the number of subsystems in the larger system (see later)). For the family of theories we looked at, \textit{minimal $k$-preservability} imposes stronger constraints than previous criteria~\cite{PhysRevLett.102.110402,PhysRevA.73.012101,PhysRevLett.119.020401,dallarno2023signaling} in each case. In addition, we show that for the state spaces considered here, minimal 2-preservability can be used to recover Tsirelson's bound.

The structure of our paper is as follows: in Section~\ref{Section::Preliminaries}, we review the mathematical framework of GPTs and previous results. In Sections~\ref{Section::EffectPolytope} and~\ref{Section::EffectPolytope3Fid}, we provide an analytic construction of the effect polytopes for any given noisy asymmetric state space in our model. In addition, we provide a complete list of the set of extreme effects for each such case and a formula for calculating the number of extreme effects of such effect spaces. To do this, we introduce an algorithm that finds the vertices of a polytope by taking a larger polytope whose vertices are known and cutting it to form the polytope of interest. In Section~\ref{Sec::MinimalPreservability}, we introduce our preservability criterion (minimal $k$-preservability), before checking the effects obtained in Sections~\ref{Section::EffectPolytope} and~\ref{Section::EffectPolytope3Fid} using it in Section~\ref{Section::Min2PresCouplers}. We find that entanglement swapping is impossible in most of the state spaces considered. Section~\ref{Section::CorrelationSelfTesting} contains the main results, in particular that quantum theory can be correlation self-tested using the ACHSH game against every state space considered in this work. Finally, in Section~\ref{Section::TsirelsonBound} we draw a connection between minimal $k$-preservability and Tsirelson's bound for the state spaces in question.  

\section{Preliminaries}
\label{Section::Preliminaries}
\subsection{Correlation Self-Testing and the Adaptive CHSH Game}
\label{subsection::achsh_game}
The setup of correlation self-testing is as follows: given a physical theory, $\mathcal{P}$, and a theory, $\mathcal{T}$, if $\mathcal{P}$ can produce correlations within a causal structure\footnote{A causal structure is a collection of variables arranged as nodes of a directed acyclic graph where some of the nodes are labelled as observed. It represents the causal relations among the variables.} that cannot be produced by $\mathcal{T}$ in the same causal structure, then there is an information processing task in which $\mathcal{P}$ outperforms $\mathcal{T}$.  More generally, for a set of theories $\{\mathcal{T}_i\}_{i=1}^{n}$, suppose there is a set of tasks (or just one) that singles out $\mathcal{P}$ from the set $\{\mathcal{T}_i\}_{i=1}^{n}$. Such a set of tasks is said to be a \textit{correlation self-test} of $\mathcal{P}$ against $\{\mathcal{T}_i\}_{i=1}^{n}$. The overall goal is to find a set of tasks that single out quantum theory within all GPTs.

\begin{figure}[h]
    \centering
    \includegraphics[width=0.35\textwidth]{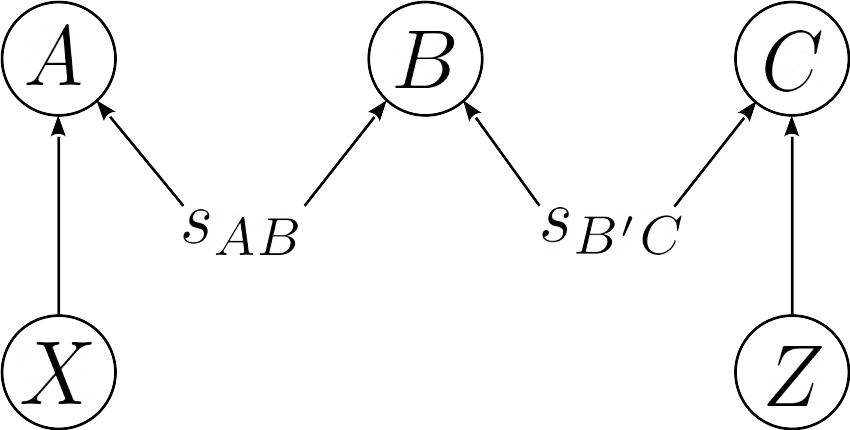}
    \caption{Causal structure for the Adaptive CHSH game. Bob shares the resource $s_{AB}$ with Alice and the resource $s_{B'C}$ with Charlie. A referee asks questions to Alice and Charlie labelled by random variables $X$ and $Z$ respectively. Bob performs a joint measurement on his share of resources, the outcomes of which are labelled by the random variable $B$. Alice and Charlie perform local measurements on their subsystems, the outcomes of which are labeled by random variables $A$ and $C$. The value of all the random variables determine the score in the game. There are no non-classical tripartite resources shared by all the three parties (shared tripartite randomness is allowed).}
    \label{ACHSH}
\end{figure}

The Adaptive CHSH game has been shown to rule out a variety of theories in this way~\cite{PhysRevLett.125.060406,PhysRevA.102.022203}. It uses the CHSH game~\cite{PhysRevLett.23.880}, which is played between two cooperating parties, Alice and Bob. A referee asks them random questions labelled by the random variables $X$ and $Y$ which can take values $x,y \in \{0,1\}$. They return answers labelled by the random variables $A$ and $B$ taking values $a,b \in \{0,1\}$. The parties cannot communicate during the game and they win if $a \oplus b = xy$ (where $\oplus$ denotes the {\sc xor}). Using quantum theory, the maximum winning probability is $\left(1+1/\sqrt{2}\right)/2 \approx 0.85$, also known as Tsirelson's bound~\cite{Cirelson1980}. There are 8 variations of this game (equivalent to one another by relabellings of inputs or outputs) whose winning conditions are $a \oplus b = xy \oplus \gamma_{1}x \oplus\gamma_2 y \oplus \gamma_0$, where $\gamma_0,\gamma_1,\gamma_2 \in \{0,1\}$. The standard game described above corresponds to the choice $\gamma_0=\gamma_1=\gamma_2=0$. There exists a maximally nonlocal theory that can perfectly win this game and therefore outperform quantum theory~\cite{Cirelson93,pr}. Hence, the CHSH game cannot correlation self-test quantum theory.

The Adaptive CHSH game is as follows: in the causal structure displayed in Fig.~\ref{ACHSH}, three players Alice, Bob and Charlie play a cooperative game, in which Bob supplies two bits denoted by $B$ taking values $(b_0,b_1) \in \{(0,0),(0,1),(1,0),(1,1)\}$ to a referee, and Alice and Charlie are each asked uniformly distributed binary questions denoted by $X$ and $Z$ and provide binary answers $A$ and $C$. The parties win the game if $a \oplus c= xz \oplus (b_0 \oplus b_1)x \oplus z \oplus b_0$ is satisfied. Note that the possible values of $B$ correspond to four variations of the CHSH game mentioned in the previous paragraph. In quantum theory, this game can be won with a maximum winning probability of $\left(1+1/\sqrt{2}\right)/2 $~\cite{PhysRevLett.125.060406,PhysRevA.102.022203}. In short, an optimal quantum strategy involves Bob sharing two copies of a maximally entangled qubit state, one with Alice and one with Charlie and then performing a Bell basis measurement on his pair of qubits. Alice and Charlie perform local measurements that generate probability distributions that win one of the CHSH games with a high probability.  For completeness we present an optimal strategy in Appendix~\ref{sub::QStrat} (see also~\cite{PhysRevLett.125.060406,PhysRevA.102.022203} for other results related to this game).

\subsection{Generalised Probabilistic Theories}
A typical lab experiment involves \textit{state preparation}, describing the initial states of the systems involved, \textit{transformations} of states, describing how the systems evolve, and \textit{measurements} on them. \textit{Generalised Probabilistic Theories} (GPTs) provide a general mathematical framework in which such processes can be studied. 

\begin{definition}\label{Def::GPT}\textbf{(State Space, Effect Space, GPT)} Let $\mathbb{V}$ be a finite dimensional real vector space  and $\mathbb{V}^*$ its dual vector space. 
\begin{enumerate}
    \item A \emph{state space} $\mathcal{S}$  is a  compact and convex subset of $\mathbb{V}$ such that there exists an element $u \in \mathbb{V}^*$ (called the \emph{unit} effect) with the property that  $\left<u,s\right>$=1 for any $s \in \mathcal{S}$. The sub-normalised state space $\mathcal{S}_{\leq}$ associated with $\mathcal{S}$ is defined as the convex hull of $\mathcal{S}$ and the zero vector. The \emph{state space cone} is the set of positive multiples of every state.
    \item \label{item::effect_space} The \emph{maximal effect space} $\mathcal{E_S}$ of a state space $\mathcal{S}$ is a compact and convex subset of $\mathbb{V}^*$ defined as $$
    \mathcal{E}_{\mathcal S} \coloneqq \{e \in \mathbb{V}^*\ |\ \left<e,s\right> \in [0,1]\ \forall \ s \in \mathcal{S}\} 
    .$$
    An \emph{effect space} $\mathcal{E}$ is any subset of $\mathcal{E_S}$  such that $u,v_0 \in \mathcal{E}$ where $v_0$ is the zero effect with the property $\left<v_0,s\right>=0 \ \forall \ s \in \mathcal{S}$.
    The \emph{effect space cone} is the set of positive multiples of all effects.
    
    \item Let $\mathbf{S}$ and $\mathbf{E}$ be a collection of state and effect spaces (one for each type of elementary system). A \emph{GPT} is a triple $\left(\mathbf{S},\mathbf{E}, \boxtimes \right)$, where $\boxtimes$ is a set of composition rules that specify how to form composite state spaces from smaller ones. This composition rule has the following properties when acting on elementary systems:
    \begin{enumerate}
        \item for two state spaces $\mathcal{S}_\A, \mathcal{S}_\B\in\mathbf{S}$, with underlying vector spaces $\mathbb{V}_\A$ and $\mathbb{V}_\B$ describing systems labelled by $\A$ and $\B$, and $\boxtimes_{\A\B} \in \boxtimes$, the set $\mathcal{S}_{\A\B}\coloneqq \mathcal{S}_\A \boxtimes_{\A\B} \mathcal{S}_\B$ is a state space with underlying vector space $\mathbb{V}_\A\otimes\mathbb{V}_\B$ and for all $s_{\A\B} \in \mathcal{S}_{\A\B}$, $e_\A\in \mathcal{E}_\A$ and $e_\B\in \mathcal{E}_\B$
    $$
    (\id_\A \otimes e_\B) \left(s_{\A\B}\right)  \in \mathcal{S}_{\A_\leq} \quad \text{and } \quad (e_\A \otimes \id_\B) \left(s_{\A\B}\right) \in \mathcal{S}_{\B_\leq}
    $$
    where $\id_{\A/\B} : \mathbb{V}_{\A/\B} \to \mathbb{V}_{\A/\B}$ is the identity map and $\otimes$ is the tensor product,
    \item for any collection of states $\{(r_{i})_{\A}\}_{i=1}^n \subseteq \mathcal{S}_\A,\{(s_{i})_{\B}\}_{i=1}^n \subseteq \mathcal{S}_\B$, and set $\{\lambda_i\}$ satisfying $\lambda_i\geq0$, $\sum_i\lambda_i=1$, 
    $$\sum_{i=1}^n \lambda_i (r_i)_{\A} \otimes (s_i)_{\B} \in \mathcal{S}_\A \boxtimes_{\A\B} \mathcal{S}_\B,$$ 
    for any $\mathcal{S}_\A, \mathcal{S}_\B \in \mathbf{S}$ and any composition rule $\boxtimes_{\A\B} \in \boxtimes$ between $\mathcal{S}_{\A}$ and $\mathcal{S}_{\B}$,
    \item the effect space $\mathcal{E}_{\A\B}$ of $\mathcal{S}_{\A\B}$, is a subset of the maximal effect space $\mathcal{E}_{\mathcal{S}_{\A\B}}$ defined in~\eqref{item::effect_space}.
    \end{enumerate}
These properties extend naturally to allow multiple combinations of elementary systems, where the composition rule should be associative, i.e., $(\mathcal{S}_\A\boxtimes_{\A\B}\mathcal{S}_\B)\boxtimes_{(\A\B)(\C)}\mathcal{S}_\C=\mathcal{S}_\A\boxtimes_{\A(\B\C)}(\mathcal{S}_\B\boxtimes_{\B\C}\mathcal{S}_\C)$ etc.
\end{enumerate} 
\end{definition}
The definition of the state space implies that the description of any state in the theory can be completely encoded in the entries of a finite vector and that state tomography can be performed using a finite number of measurements. A minimal set of measurements with which state tomography can be performed are called a set of \emph{fiducial} measurements. In addition, since the bipartite state space is a subset of $\mathbb V_\A \otimes \mathbb V_\B$, any bipartite state can be identified by local tomography. This means that the state spaces for any GPT can be described using probability tables corresponding to the probabilities of outcomes when local fiducial measurements are made on the state. For instance, in the case with two fiducial measurements labelled $x\in\{0,1\}$ each having two outcomes labelled $a\in\{0,1\}$, any local state can be written in the form
\begin{equation}
    \left(\begin{array}{c}\p(0|0)\\\p(1|0)\\\hline\p(0|1)\\\p(1|1)\end{array}\right),
\end{equation}
where each probability is $\p(a|x)$, and any bipartite state can be written
\begin{equation}
    \left(\begin{array}{cc|cc}
\p(0,0|0,0) & \p(0,1|0,0) & \p(0,0|0,1) & \p(0,1|0,1)\\[6pt]
\p(1,0|0,0) &\p(1,1|0,0) & \p(1,0|0,1) & \p(1,1|0,1)\\[6pt]
\hline
\vphantom{\Big(}\p(0,0|1,0) &\p(0,1|1,0) & \p(0,0|1,1) & \p(0,1|1,1) \\[6pt]
\p(1,0|1,0) & \p(1,1|1,0) & \p(1,0|1,1) & \p(1,1|1,1)\\
\end{array}\right),
\label{Eq::TsirelsonNotation}
\end{equation}
where each probability is $\p(a,b|x,y)$ ($y\in\{0,1\}$ being the measurement on the second system and $b\in\{0,1\}$ its outcome). Although displayed as a matrix for convenience, we will usually consider this state as a vectors in $\mathbb{R}^{16}$. This representation naturally generalises to more parties, inputs and outputs.

For any pair of effects $e_\A,e_\B \in \mathcal{E}$ and state $s_{\A\B} \in \mathcal{S}_{\A\B}$, since both $(\id_\A \otimes e_\B) \left(s_{\A\B}\right)$ and $(e_\A \otimes \id_\B )\left(s_{\A\B}\right)$ are valid sub-normalised states,
\begin{equation}
    (e_\A \otimes e_\B) \left(s_{\A\B}\right) = (\id_\A \otimes e_\B)(e_\A \otimes \id_\B) \left(s_{\A\B}\right) = (e_\A \otimes \id_\B) (\id_\A \otimes e_\B)\left(s_{\A\B}\right) \in [0,1],
\end{equation}
and hence all product effects $e_\A \otimes e_\B$ are elements of the bipartite effect space. Additionally, local actions of effects on respective subsystems always commute. Thus, the \textit{no-signalling} conditions naturally emerge, i.e.,
\begin{equation}
\begin{split}
\sum_b \p(a,b|x,y) &= \sum_b\p(a,b|x,y') =\p(a|x) \quad \text{for all }a,x,y,y'\ \ \ \ \text{and}\\
    \sum_a \p(a,b|x,y) &= \sum_a\p(a,b|x',y) =\p(b|y) \quad \text{for all }b,x,x',y\, .
\end{split}  
\label{Eq::NoSignalling}
\end{equation}

Although in this work we use constructions that start with the state space, an alternative approach is to first define an effect space $\mathcal{E}$ and then choose a set $\mathcal{S} \subseteq \mathcal{S_E}$ appropriately as the state space, where $\mathcal{S_E}:=\{s\in\mathbb{V}\ |\ \langle e,s\rangle\in[0,1]\ \forall\ e\in\mathcal{E}, \langle u,s\rangle = 1 \}$. 

Given two state spaces $\mathcal{S}_\A$ and $\mathcal{S}_\B$, a composition rule $\boxtimes_{\A\B}$
specifies a composite state space. We present here two examples of composition rules that
allow one to construct the joint state space, regardless of the types of the systems being composed. These are the minimal and maximal tensor product compositions.

\begin{definition}\label{def::MinAndMax}\textbf{(Minimal and Maximal Tensor Products)} Let $\mathcal{S}_\A \subset \mathbb{V}_{\A}$ and $\mathcal{S}_\B  \subset \mathbb{V}_\B$ be two  state spaces and $\mathcal{E}_{\mathcal{S}_\A}$ and $\mathcal{E}_{\mathcal{S}_\B}$ be their corresponding maximal effect spaces. Then
\begin{itemize}
    \item the \emph{minimal (min-) tensor product} of $\mathcal{S}_\A$ and $\mathcal{S}_\B$ is defined as
    $$
    \mathcal{S}_\A \ntens \mathcal{S}_\B \coloneqq \mathrm{ConvHull}\{s_\A \otimes s_\B\ |\  s_\A \in \mathcal{S}_\A, s_\B \in \mathcal{S}_\B\},
    $$
       \item the \emph{maximal (max-) tensor product} of $\mathcal{S}_\A$ and $\mathcal{S}_\B$ is defined as
    $$
    \mathcal{S}_\A \xtens \mathcal{S}_\B \coloneqq \{s_{\A\B} \in \mathbb{V}_\A \otimes \mathbb{V}_\B \ |\ \left<e_\A \otimes e_\B, s_{\A\B}\right> \in [0,1]\ \forall \  e_\A \in \mathcal{E}_{\mathcal{S}_\A}, e_\B \in \mathcal{E}_{\mathcal{S}_\B}, \left<u_\A \otimes u_\B, s_{\A\B}\right> = 1\ \}.
    $$
\end{itemize}
\end{definition}

 The maximal tensor product state space, as defined, is the largest set of bipartite states for which marginalisation to single system gives a valid state in the single system state space. Therefore, for any arbitrary composition $\boxtimes_{\A\B}$ of state spaces, we have $\mathcal{S}_{\A} \ntens \mathcal{S}_{\B} \subseteq \mathcal{S}_{\A} \boxtimes_{\A\B} \mathcal{S}_{\B} \subseteq \mathcal{S}_{\A} \xtens \mathcal{S}_{\B}$.

In the following, we provide examples of GPTs to illustrate the GPT framework. This treatment follows that of Barrett~\cite{PhysRevA.75.032304}, but we include a detailed description of specific effect spaces that we will use later. We give an example of how to recast qubit quantum theory as a GPT, with states given by probability distributions in Appendix~\ref{Appendix::QubitQTasGPT}.

\subsubsection{Generalised Local Theories}
\label{Subsection::ExampleGLT}
A \textit{generalised local theory} (GLT) refers to any GPT where the single system state space allows all probability distributions and in which every multipartite state is separable across all bi-partitions\footnote{Classical probability theory is a GLT (see Appendix~\ref{Appendix::ClassicalGPT} for an example). GPTs are non-classical if they require more than one fiducial measurement to characterise.}. We provide two examples of non-classical GLTs that are relevant for this paper. Consider the \textit{gbit} state space $\mathcal{G}_{\rm m}^{\rm n}$ of a single system for which state tomography requires $\rm m$ fiducial measurements having $\rm n$ outcomes each\footnote{In principle, one can have a theory where the number of outcomes depends on the choice of measurement but we avoid this for simplicity of notation.}, such that any valid probability distributions on these measurements and outcomes are possible. The min-tensor product of two such gbit state spaces $\mathcal{G}_{\rm m_\A}^{\rm n_\A}$ and $\mathcal{G}_{\rm m_\B}^{\rm n_\B}$ is always a generalised local theory. The two examples are cases of this composition when i) $\rm m_\A=m_\B=n_\A=n_\B=2$ and when ii) $\rm m_\A=m_\B=3$ and $\rm n_\A=\rm n_\B=2$.

When $\rm m=n=2$, the state space $\mathcal{G}_{\rm 2}^{\rm 2}$ can be characterised as the convex hull of four extreme deterministic states, in particular
$$
s_1=\begin{pmatrix}
    1\\
    0\\
    \hline
    1\\
    0\\
\end{pmatrix}, \quad s_2=\begin{pmatrix}
    1\\
    0\\
    \hline
    0\\
    1\\
\end{pmatrix}, \quad s_3=\begin{pmatrix}
    0\\
    1\\
    \hline
    1\\
    0\\
\end{pmatrix}, \quad s_4=\begin{pmatrix}
    0\\
    1\\
    \hline
    0\\
    1\\
\end{pmatrix}.
$$

The extreme states of the min-tensor product composition of two $\mathcal{G}_2^2$ state spaces are locally deterministic and can be calculated by taking the tensor product of $s_i$ with $s_j$ for $i,j\in \{1,2,3,4\}$, e.g., 
\begin{equation}\label{eq:LDS}
s_1 \otimes s_1 = \left(\begin{array}{cc|cc}
    1 & 0 & 1 & 0  \\
    0 & 0 & 0 & 0  \\
     \hline
    1 & 0 & 1 & 0  \\
    0 & 0 & 0 & 0  \\
\end{array}\right).
\end{equation}
The resultant joint state space is a polytope characterised by the convex hull of these 16 states. We denote this state space as $\mathbb{H}^{[0]}_{(2,2)}$, where the superscript $[\npr]$ denotes that there are $\npr$ entangled extremal states (in this case $\npr=0$) and $(2,2)$ signifies $(\rm m=2,n=2)$. Any polytope can be characterised either by the convex hull of its extreme states (vertex description) or by a set of inequalities defining its facets (facet description) (see e.g., Section~2.2.4 of~\cite{boyd2004convex}). Assuming that the normalisation and no-signalling conditions hold (cf.~\eqref{Eq::NoSignalling}), the state space $\mathbb{H}^{[0]}_{(2,2)}$ can be characterised by 24 facets, of which 16 are positivity facets (corresponding to $\p(a,b|x,y) \geq 0\ \forall \ a,b,x,y$), and the remaining 8 are called CH facets. To list the CH facets, consider the following 4 vectors in $\mathbb{R}^{16}$:
\begin{equation}
e_{\rm CH_1}=\left(
\begin{array}{cc|cc}
 0 & 0 & 1 & 0 \\
 0 & 1 & 0 & 0 \\ \hline
 0 & -1 & 0 & 1 \\
 0 & 0 & 0 & 0 \\
\end{array}
\right), e_{\rm CH_2}=\left(
\begin{array}{cc|cc}
 0 & 0 & 0 & 0 \\
 0 & -1 & 0 & 1 \\ \hline
 0 & 0 & 1 & 0 \\
 0 & 1 & 0 & 0 \\
\end{array}
\right), e_{\rm CH_3}= \left(
\begin{array}{cc|cc}
 0 & 0 & 0 & 1 \\
 0 & 1 & 0 & 0 \\ \hline
 0 & -1 & 1 & 0 \\
 0 & 0 & 0 & 0 \\
\end{array}
\right),
e_{\rm CH_4}=\left(
\begin{array}{cc|cc}
 0 & 0 & 0 & 0 \\
 0 & -1 & 1 & 0 \\ \hline
 0 & 0 & 0 & 1 \\
 0 & 1 & 0 & 0 \\
\end{array}
\right).
\label{Eq::CHFacets}
\end{equation}
The 8 CH facets are $\{\left<e_{\rm CH_i},\mathbf{x}\right> \leq 1\}_{i=1}^4$ and $\{\left<e_{\rm CH_i},\mathbf{x}\right> \geq 0\}_{i=1}^4$ where $\mathbf{x} \in \mathbb{R}^{16}$ and the inner product is defined as the sum of element-wise products. The second set of 4 inequalities are given by $\{\left<u-e_{\rm CH_i},\mathbf{x}\right> \leq 1\}_{i=1}^4$, where $u$ is the unit effect. Each CH facet inequality is saturated by 8 local deterministic states. To find the corresponding maximal effect polytope, first recall that for a vector $e\in \mathbb{V}^*$ to be an effect, it must satisfy $\left<e, s\right> \in [0,1]$ for any state $s$ in the state space (see Def.~\ref{Def::GPT}). Since we defined state spaces to be convex and compact, it is sufficient to check whether $\left<e, s_{i}\right> \in [0,1]$ for every extreme state $s_i$ of the state space. We denote the extreme states of $\mathbb{H}^{[0]}_{(2,2)}$ as $\mathrm{Vert}\left[\mathbb{H}^{[0]}_{(2,2)}\right]$. The set of facet-defining inequalities of the effect polytope $\mathcal{E}_{\mathbb{H}^{[0]}_{(2,2)}}$ is then given by: 
\begin{equation}
    \mathrm{Facets}\left[ \mathcal{E}_{ \left[\mathbb{H}^{[0]}_{(2,2)}\right]} \right] \coloneqq \Biggl\{ \mathbf{x} . s_{\mathrm{vertex}} \geq 0 \ |\   s_{\mathrm{vertex}} \in \mathrm{Vert}\left[\mathbb{H}^{[0]}_{(2,2)}\right] \Biggr\} \bigcup \Biggl\{ \mathbf{x} . s_{\mathrm{vertex}} \leq 1 \ |\   s_{\mathrm{vertex}} \in \mathrm{Vert}\left[\mathbb{H}^{[0]}_{(2,2)}\right] \Biggr\}.
\label{eq::FacetDescription}
\end{equation}
\noindent
Finding the vertices of a polytope from its facets is called \textit{vertex enumeration}. For this work, we have used PANDA~\cite{Lorwald2015} to solve all vertex enumeration problems. In the present case, we find that the effect polytope has 90 extreme effects (see Appendix~\ref{Appendix::EffectPoly} for a full classification) of which 82 are separable effects and the remaining 8 are entangled effects of the form $ e_{\mathrm {CH}_i}$  and $u-e_{\mathrm{CH}_i}$ with $i=1,\ldots,4$. The 82 separable effects are the positivity effects and sums of them.

When $\rm m_\A=m_\B=3$ and $\rm n_\A=\rm n_\B=2$, the extreme (deterministic) states of the state space $\mathcal{G}_{3}^2$ are

$$
\left(
\begin{array}{c}
 1 \\
 0 \\
 \hline
 1 \\
 0 \\
 \hline
 1 \\
 0 \\
\end{array}
\right), \left(
\begin{array}{c}
 1 \\
 0 \\
 \hline
 1 \\
 0 \\
 \hline
 0 \\
 1 \\
\end{array}
\right), \left(
\begin{array}{c}
 1 \\
 0 \\
 \hline
 0 \\
 1 \\
 \hline
 1 \\
 0 \\
\end{array}
\right),\left(
\begin{array}{c}
 1 \\
 0 \\
 \hline
 0 \\
 1 \\
 \hline
 0 \\
 1 \\
\end{array}
\right),
\left(
\begin{array}{c}
 0 \\
 1 \\
 \hline
 1 \\
 0 \\
 \hline
 1 \\
 0 \\
\end{array}
\right),
\left(
\begin{array}{c}
 0 \\
 1 \\
 \hline
 1 \\
 0 \\
 \hline
 0 \\
 1 \\
\end{array}
\right),
\left(
\begin{array}{c}
 0 \\
 1 \\
 \hline
 0 \\
 1 \\
 \hline
 1 \\
 0 \\
\end{array}
\right),
\left(
\begin{array}{c}
 0 \\
 1 \\
 \hline
 0 \\
 1 \\
 \hline
 0 \\
 1 \\
\end{array}
\right).
$$
There are 64 local deterministic states of the state space polytope $\mathbb{H}^{[0]}_{(3,2)}$ formed by taking the min-tensor product of two $\mathcal{G}_3^2$ state spaces. Alternatively, $\mathbb{H}^{[0]}_{(3,2)}$ can be characterised by 36 positivity facets and 648 Bell facets. These Bell facets can be categorised into two equivalence classes: the first containing 72  CH facets and the second containing 576 $\mathrm{I}_{3322}$ facets~\cite{PhysRevLett.49.1220,PhysRevA.64.014102,Collins_2004}. In particular, consider the following two vectors in $\mathbb{R}^{36}$ in notation analogous to~\eqref{Eq::TsirelsonNotation}:
\begin{equation}\label{eq:FCH}
    {\rm F_{CH}}= \left(
\begin{array}{cc|cc|cc}
 0 & 0 & 1 & 0 & 0 & 0\\
 0 & 1 & 0 & 0 & 0 & 0\\ \hline
 0 & -1 & 0 & 1 & 0 & 0\\
 0 & 0 & 0 & 0 & 0 & 0\\ \hline
 0 & 0 & 0 & 0 & 0 & 0\\
 0 & 0 & 0 & 0 & 0 & 0\\
\end{array}
\right), \quad \quad {\rm F_{I_{3322}}}=\frac{1}{3} \left(
\begin{array}{cc|cc|cc}
 0 & 1 & 0 & 1 & 0 &
   0 \\
 0 & 0 & 0 & 0 & 0 & 1 \\ \hline
 0 & 0 & 0 & 0 & 0 & 0 \\ 
 0 & -1 & 0 & -1 & 1 & 0 \\ \hline
 0 & 0 & 0 & 1 & 0 & 0 \\
 0 & 1 & 0 & 0 & 0 & 0 \\
\end{array}
\right).
\end{equation}
The CH facets are given by $\left<{\rm F_{CH}},\mathbf{x}\right> \leq 1$ and the $\mathrm{I}_{3322}$ facets are given by $\left<{\rm F_{I_{3322}}},\mathbf{x}\right> \leq 1$ where $\mathbf{x} \in \mathbb{R}^{36}$. The remaining elements for each class can be found by applying all relabelling symmetries to $\mathrm{F_{CH}}$ and $\mathrm{I_{3322}}$ respectively and then discarding duplicates corresponding to different representations of the same effect. There are 32 extreme states that satisfy $\left<{\rm F_{CH}},s\right> = 1$ and 32 extreme states that satisfy $\left<{\rm F_{CH}},s\right> = 0$. On the other hand, there are 20 extreme states with $\left<{\rm F_{I_{3322}}},s \right> = 1$, 28 with $\left<{\rm F_{I_{3322}}},s \right> = 2/3$, 12 with $\left<{\rm F_{I_{3322}}},s \right> = 1/3$ and 4 with $\left<{\rm F_{I_{3322}}},s \right> = 0$. There are at most 18 extreme local states that simultaneously saturate facets from each class. For any pair of facets (one from each class), there can be at most 18 local deterministic states that simultaneously saturate both.

We perform a vertex enumeration similar to the previous example to find that the effect polytope $\mathcal{E}_{\mathbb{H}^{[0]}_{(3,2)}}$ is given by the convex hull of 27968 extreme effects. A classification of these effects is provided in Table~\ref{Table::LocalEffects3Fid} of Appendix~\ref{Appendix::Effects3Fid}.

\subsubsection{Box World}
\label{Subsection::ExampleBW}

 Any GPT of gbits is said to be nonlocal if it is not a subtheory of GLT. An example is the maximal tensor product of $\mathcal{G}_{\rm m_\A}^{\rm n_\A}$ and $\mathcal{G}_{\rm m_\B}^{\rm n_\B}$ also called \textit{box-world} (BW). For the case of $\rm m_\A=m_\B=n_\A=n_\B=2$, the extreme states include the 16 local deterministic states and 8 entangled (non-separable) states called PR boxes~\cite{Popescu2014,pr,PhysRevA.75.032304}. We denote this state space as $\mathbb{H}^{[8]}_{(2,2)}$ and list the 8 PR boxes: 
$$
\pr_1=\frac{1}{2}\left(
\begin{array}{cc|cc}
 1 & 0 & 1 & 0 \\
 0 & 1 & 0 & 1 \\
 \hline
 1 & 0 & 0 & 1 \\
 0 & 1 & 1 & 0 \\
\end{array}
\right), \pr_2=\frac{1}{2}\left(
\begin{array}{cc|cc}
 0 & 1 & 1 & 0 \\
 1 & 0 & 0 & 1 \\
 \hline
 1 & 0 & 1 & 0 \\
 0 & 1 & 0 & 1 \\
\end{array}
\right), \pr_3=\frac{1}{2}\left(
\begin{array}{cc|cc}
 1 & 0 & 0 & 1 \\
 0 & 1 & 1 & 0 \\
 \hline
 1 & 0 & 1 & 0 \\
 0 & 1 & 0 & 1 \\
\end{array}
\right), \pr_4=\frac{1}{2}\left(
\begin{array}{cc|cc}
 0 & 1 & 0 & 1 \\
 1 & 0 & 1 & 0 \\
 \hline
 1 & 0 & 0 & 1 \\
 0 & 1 & 1 & 0 \\
\end{array}
\right), 
$$

$$
\pr_1'=\frac{1}{2}\left(
\begin{array}{cc|cc}
 0 & 1 & 0 & 1 \\
 1 & 0 & 1 & 0 \\
 \hline
 0 & 1 & 1 & 0 \\
 1 & 0 & 0 & 1 \\
\end{array}
\right), \pr_2'=\frac{1}{2}\left(
\begin{array}{cc|cc}
 1 & 0 & 0 & 1 \\
 0 & 1 & 1 & 0 \\
 \hline
 0 & 1 & 0 & 1 \\
 1 & 0 & 1 & 0 \\
\end{array}
\right), \pr_3'=\frac{1}{2}\left(
\begin{array}{cc|cc}
 0 & 1 & 1 & 0 \\
 1 & 0 & 0 & 1 \\
 \hline
 0 & 1 & 0 & 1 \\
 1 & 0 & 1 & 0 \\
\end{array}
\right), \pr_4'=\frac{1}{2}\left(
\begin{array}{cc|cc}
 1 & 0 & 1 & 0 \\
 0 & 1 & 0 & 1 \\
 \hline
 0 & 1 & 1 & 0 \\
 1 & 0 & 0 & 1 \\
\end{array}
\right);
$$
The pairs $\pr_{{\rm i}}$ and $\pr_{\rm i}'$ are called \emph{isotropically opposite} since equal mixtures of them give the maximally mixed state. 
The probability representation we use for a state involves specifying the distributions of outcomes of a fixed set of fiducial measurements and a fixed labelling of their outcomes. In general, relabelling the measurements or their outcomes gives an alternative description of the same state. For instance, if both parties relabel the first and second measurements, then $\pr_1$ becomes $\pr_2$. For bipartite (in general multipartite) systems, we can consider local relabellings, i.e., relabelling the inputs and/or outputs for each subsystem, and global relabellings, i.e., relabelling the subsystems. We will discuss the second kind in more detail in Section~\ref{Sec::MinimalPreservability}.

The complete set of 8 PR boxes can be generated by taking any one of them and applying local relabellings, and similarly all the local deterministic states can be generated by applying relabellings to the state given in~\eqref{eq:LDS}. Hence, there are two classes of extreme states. The extreme points of the effect space for this state space are the 82 separable effects that occur in the generalised local theory with 2 inputs and 2 outputs per party discussed above.

The probability tables $\pr$ above have a direct correspondence to the variations of the CHSH game introduced in Section~\ref{subsection::achsh_game}. The 8 vectors $\{{\mathrm{C}_i \coloneqq 1/2 \pr_i}\}_{i=1}^4$ and $\{{\mathrm{C}_i' \coloneqq 1/2 \pr_i'}\}_{i=1}^4$ define the 8 CHSH games in the sense that the winning probability is given by $\langle\mathrm{C}_i^(\phantom{}'\phantom{}_{\phantom{i}}^),\p(A,B|X,Y)\rangle$. For instance, since the correlation table obtained after performing the fiducial measurements on $\pr_1$ coincides with the probability table of $\pr_1$, we have
\begin{equation}
  \mathrm{CHSH}_1[\p_{\pr_1}(A,B|X,Y)] \coloneqq \left<\mathrm{C}_1,\p_{\pr_1}(A,B|X,Y)\right> = 
  \left< \left(\begin{array}{cc|cc}
 1/4 & 0 & 1/4 & 0 \\
 0 & 1/4 & 0 & 1/4 \\ \hline
 1/4 & 0 & 0 & 1/4 \\
 0 & 1/4 & 1/4 & 0 \\
\end{array}
\right),\frac{1}{2}\left(
\begin{array}{cc|cc}
 1 & 0 & 1 & 0 \\
 0 & 1 & 0 & 1 \\
 \hline
 1 & 0 & 0 & 1 \\
 0 & 1 & 1 & 0 \\
\end{array}
\right)\right> = 1.
\label{Eq::CHSHGame}
\end{equation}
The vectors $\mathrm{C}_i$ and effects $e_{\mathrm{CH_i}}$ are related by the affine transformation\footnote{Note that the affine transformation does not directly generate the same vectors, but generates vectors that represent the same effect.} $\mathrm{C}_i=e_{\mathrm{CH_i}}/2+u/4$, and the CH facets of the state space $\mathbb{H}^{[0]}_{(2,2)}$ can be rewritten as $\{\left<{\mathrm{ \mathrm{C}_i}},\mathbf{x}\right> \leq 3/4\}_i$ together with $\{\left<{\mathrm{ \mathrm{C}_i}},\mathbf{x}\right> \geq 1/4\}_i$ or $\{\left<{\mathrm{ \mathrm{C}'_i}},\mathbf{x}\right> \leq 3/4\}_i$.

When $\rm m_\A=m_\B=3$ and $\rm n_\A=\rm n_\B=2$, the state space corresponding to the max-tensor product has 1408 extreme states, out of which 64 are local deterministic. We denote this polytope $\mathbb{H}^{[1344]}_{(3,2)}$, where 1344 denotes the number of entangled extreme states in the state space. These entangled states can be classified into 4 relabelling classes~\cite{PhysRevA.72.052312}. One representative from each class is as follows:
$$
\mathrm{N}_1=\frac{1}{2}\left(
\begin{array}{cc|cc|cc}
 1 & 0 & 1 & 0 & 0 & 1 \\
 0 & 1 & 0 & 1 & 0 & 1 \\
 \hline
 1 & 0 & 0 & 1 & 0 & 1 \\
 0 & 1 & 1 & 0 & 0 & 1 \\
 \hline
 0 & 0 & 0 & 0 & 0 & 0 \\
 1 & 1 & 1 & 1 & 0 & 2 \\
\end{array}
\right)\!,\,
\mathrm{N}_2=\frac{1}{2}\left(
\begin{array}{cc|cc|cc}
 1 & 0 & 1 & 0 & 0 & 1 \\
 0 & 1 & 0 & 1 & 1 & 0 \\
 \hline
 1 & 0 & 0 & 1 & 0 & 1 \\
 0 & 1 & 1 & 0 & 1 & 0 \\
 \hline
 0 & 1 & 0 & 1 & 0 & 1 \\
 1 & 0 & 1 & 0 & 1 & 0 \\
\end{array}
\right)\!,\,
\mathrm{N}_3=\frac{1}{2}\left(
\begin{array}{cc|cc|cc}
 1 & 0 & 1 & 0 & 1 & 0 \\
 0 & 1 & 0 & 1 & 0 & 1 \\
  \hline
 1 & 0 & 0 & 1 & 1 & 0 \\
 0 & 1 & 1 & 0 & 0 & 1 \\
  \hline
 1 & 0 & 1 & 0 & 1 & 0 \\
 0 & 1 & 0 & 1 & 0 & 1 \\
\end{array}
\right)\!,\, \mathrm{N}_4=\frac{1}{2}\left(
\begin{array}{cc|cc|cc}
 1 & 0 & 1 & 0 & 1 & 0 \\
 0 & 1 & 0 & 1 & 0 & 1 \\
 \hline
 1 & 0 & 0 & 1 & 0 & 1 \\
 0 & 1 & 1 & 0 & 1 & 0 \\
 \hline
 0 & 0 & 0 & 0 & 0 & 0 \\
 1 & 1 & 1 & 1 & 1 & 1 \\
\end{array}
\right)\!.
$$
Each entangled state violates multiple ${\rm F_{CH}}$-type and ${\rm F_{I_{3322}}}$-type facets and each ${\rm F_{CH}}$ or ${\rm F_{I_{3322}}}$ have multiple entangled states violating them as indicated in Tables~\ref{Tab:NLSClassvsFacets} and~\ref{Tab:FacetsvsNLSClasses}.

\begin{table}[H]
\centering
\begin{tabular}{ |c||c|c|c| }
\hline
  \hspace{5mm}Class\hspace{5mm} & \hspace{5mm}$\#$\hspace{5mm} & \hspace{5mm}$\# \rm F_{CH}$\hspace{5mm} & \hspace{5mm}$\# \rm F_{I_{3322}}$\hspace{5mm} \\
 \hline \hline
 $\rm N_1$   & 288   &  1  & 8\\ 
 $\rm N_2$   & 192   &  6  & 18\\
 $\rm N_3$   & 288   &  4  & 24\\
 $\rm N_4$   & 576   &  2  & 12\\ 
   \hline  
\end{tabular}
\caption{Table above summarises $``\#"$ the number of elements in each class, $``\# {\rm F_{CH}}"$ and $``\# {\rm F_{I_{3322}}}"$ the number of ${\rm F_{CH}}$-type and ${\rm F_{I_{3322}}}$-type facets violated by any element of the respective class.}
\label{Tab:NLSClassvsFacets}
\end{table}

\begin{table}[H]
\centering
\begin{tabular}{ |c||c|c|c|c|  }
\hline
  \hspace{5mm}Inequality\hspace{5mm} & \hspace{5mm}$\# \rm N_1$\hspace{5mm} & \hspace{5mm}$\# \rm N_2$\hspace{5mm} & \hspace{5mm}$\# \rm N_3$\hspace{5mm} & \hspace{5mm}$\# \rm N_4$\hspace{5mm} \\
 \hline \hline
 ${\rm F_{I_{3322}}}$   & 4 & 6 & 12 & 12 \\
  ${\rm F_{CH}}$       & 4 & 16 & 16 & 16\\ 
  \hline  
\end{tabular}
\caption{Table summarising the number of entangled states from each class violating a single facet.}
\label{Tab:FacetsvsNLSClasses}
\end{table}

The effect polytope of $\mathcal{E}_{\mathbb{H}^{[1344]}_{(3,2)}}$ has 248 extreme effects, all of which are separable~\cite{Short_2010} and all are also extreme effects of $\mathcal{E}_{\mathbb{H}^{[0]}_{(3,2)}}$. These 248 effects can be classified into 7 relabelling classes.  A classification of these effects can be found in~\cite{PhysRevA.108.062212} which we summarise in Table~\ref{Table::BWEffects3Fid} of Appendix~\ref{Appendix::Effects3Fid} for completeness.

\subsection{The Set of Correlations}
\label{SubSec::Correlations}
The probability state space is not necessarily in one-to-one correspondence with the set of correlations that can be generated from that state space. For instance, consider a theory in which the extremal points of the bipartite state space are all the local deterministic probability tables and that of $\pr_1$. With such a state space it is possible to generate the correlations of $\pr_1'$ by using the state $\pr_1$, and the fiducial measurements, by relabelling Alice's outcomes. Thus, although the probability table corresponding to $\pr_1'$ is not in the state space, the correlations it would give can be generated. This also means that given a set of correlations that a theory can produce, it is not always possible to uniquely identify the underlying state space.

\subsection{Entanglement Swapping in GPTs}
\label{subsec::EntSwap}
The standard entanglement swapping scenario in quantum theory involves Bob sharing one maximally entangled qubit pair with Alice and one with Charlie. Bob then performs a joint measurement on his two qubits in the Bell basis and announces his outcome. No matter which outcome occurs, conditioned on this outcome, Alice's qubit and Charlie's qubits are maximally entangled. More generally, we consider Bob's measurement to be entanglement swapping if at least one outcome the state between Alice and Charlie is entangled. This concept extends to GPTs, and, following~\cite{PhysRevLett.102.110402,Skrzypczyk_2009}, we define a coupler. 
\begin{definition}
Let $\mathcal{S}$ and $\mathcal{E}$ be a bipartite state space and compatible effect space respectively. 
An effect $e \in \mathcal{E}$ is said to be a \emph{coupler} if there exist states $s_{\A\B},\ s_{\B'\C}\in\mathcal{S}$ such that
\begin{equation}
    s_{\A\C|e} = \frac{\id_{\A} \otimes e \otimes \id_{\C}\left(s_{\A\B} \otimes s_{\B'\C}\right)}{\left<u,\id_{\A} \otimes e \otimes \id_{\C}\left(s_{\A\B} \otimes s_{\B'\C}\right)\right>}
\end{equation}
is an element of the state space $\mathcal{S}$ and is entangled.
\end{definition}
Note that an effect cannot be a coupler if it is separable, i.e., if it can be expressed as a sum of product effects. However, it can happen that a theory has entangled effects that are not couplers when there is no pair of entangled states $s_{\A\B}$ and $s_{\B'\C}$ in the state space that lead to an entangled output state on application of the effect. We will see examples of this in the following sections.

Entanglement swapping is a key ingredient in achieving a post-classical score in the ACHSH game as shown in~\cite{PhysRevLett.125.060406,PhysRevA.102.022203}. Theories like box-world can generate correlations that can perfectly win CHSH games but do not have any couplers (see~\cite{PhysRevA.73.012101} for the case when the single party state space is $\mathcal{G}^2_2$). In general, theories in which the state space is formed by the maximal tensor product (cf.\ Definition~\ref{def::MinAndMax}) have the smallest effect cone, with all effects being separable, and there is a trade-off between states and effects~\cite{Short_2010}. In order for a theory to have entanglement swapping, it needs to allow both entangled states and entangled effects, ruling out a state space formed by the min or max tensor products. In this regard, quantum theory lies in an intermediate spot in which for every quantum state $\rho$ and any number $t \in [0,1]$, $t\rho$ is an allowed effect (see Footnote~\ref{Foot::SelfDual}). In~\cite{PhysRevLett.102.110402} the authors considered the state space $\mathbb{H}^{[1]}_{(2,2)}[\pr_1]$, characterised by the convex-hull of $\mathbb{H}^{[0]}_{(2,2)}$ and $\pr_1$. When $\pr_1$ is added to the state space $\mathbb{H}^{[0]}_{(2,2)}$, $e_{\mathrm{CH}_1}$ (see~\eqref{Eq::CHFacets}) ceases to satisfy $\langle e_{\mathrm{CH}_1},\mathbf{x} \rangle \leq 1$ for all states $\mathbf{x}$; in particular, $\langle e_{\mathrm{CH}_1},\pr_1 \rangle = 3/2$. The facets of $\mathbb{H}^{[1]}_{(2,2)}[\pr_1]$ are otherwise the same as those of $\mathbb{H}^{[0]}_{(2,2)}$. Since $\pr_1$ is the only extremal state for which the inner product with $e_{\mathrm{CH}_1}$ is greater than 1, when scaled by $2/3$, the resultant vector, $2/3 e_{\mathrm{CH}_1}$, becomes a valid effect. In~\cite{PhysRevLett.102.110402} it was pointed out that this effect is a coupler. In addition, a correspondence was given between the facets of the state-space polytope and the extremal effects of its maximal effect space polytope. This correspondence only gives the extremal effects that lie on the extremal rays of the maximal effect space polytope, rather than the complete set of extremal effects, which we construct below.

We have seen in Subsection~\ref{SubSec::Correlations} that in the $[2,2]$ setting, the state space $\mathbb{H}^{[1]}_{(2,2)}[\pr_1]$ generates all non-signalling correlations. Therefore, $\mathbb{H}^{[1]}_{(2,2)}[\pr_1]$ is a potential example in which one might achieve a higher score in the ACHSH game compared to quantum theory. Here we revisit the example from~\cite{PhysRevLett.102.110402} but using $\pr_2$ instead of $\pr_1$ (because these two states are the same up to local relabellings, the fact that we treat $\pr_2$ instead makes no essential difference).

\begin{figure}[h]
  \begin{center}
    \includegraphics[width=0.45\textwidth]{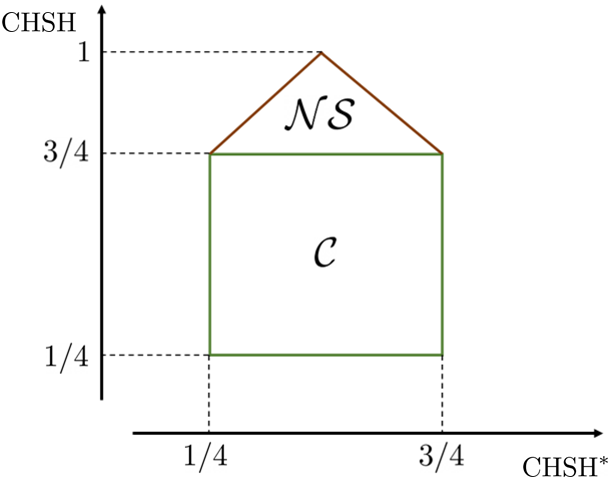}
  \end{center}
  \caption{A two dimensional slice of the set of correlations generated when fiducial measurements are performed on the states of the bipartite state space characterised by the 16 local deterministic states and one PR box~\cite{PhysRevLett.102.110402}. The vertical axes represent a CHSH inequality and the horizontal axes represent one of its symmetries obtained by relabelling the inputs. Local correlations, denoted by the square $\mathcal{C}$,  satisfy $ 1/4 \leq \mathrm{CHSH}[\p] \leq 3/4$ and $1/4 \leq \mathrm{CHSH^*}[\p] \leq 3/4$. } 
  \label{fig:House1} 
\end{figure}

We consider the state space $\mathbb{H}^{[1]}_{(2,2)}[\pr_2]$.  Solving the vertex enumeration problem for the effect polytope, we found that $\mathcal{E}_{\mathbb{H}^{[1]}_{(2,2)}}$ has 106 extreme effects, of which 82  are the extreme effects of $\mathbb{H}^{[8]}_{(2,2)}$.  We call these 82 effects the boxworld (BW) effects. Of the 24 non-BW effects, 9 are couplers. When applied by Bob to two halves of the state $\pr_2$ (one shared with Alice and one with Charlie), for one of these effects, $e_{\mathrm{pure}}$, the resultant state on Alice and Charlie is extremal, while for the other 8 collected in the set $E_{\mathrm{noisy}}$ the resultant state is not, where
\begin{equation}
e_{\mathrm{pure}} = \left(
\begin{array}{cc|cc}
 0 & 0 & 0 & 0 \\
 0 & -2/3 & 0 & 2/3 \\ \hline
 0 & 0 & 2/3 & 0 \\
 0 & 2/3 & 0 & 0 \\
\end{array}
\right) =  2e_{\mathrm{CH}_2}/3,
\label{Ex::CouplerSymm}
\end{equation}
and 
\begin{equation}
    E_{\mathrm{noisy}} = \left\{ \begin{array}{l}
        \!\left(
\begin{array}{cc|cc}
 0 & 1/2 & 0 & 0 \\
 0 & -1/2 & 0 & 1/2 \\ \hline
 0 & 0 & 1/2 & 0 \\
 0 & 1/2 & 0 & 0 \\
\end{array}
\right)\!,
\left(
\begin{array}{cc|cc}
 0 & 0 & 0 & 0 \\
 1/2 & -1/2 & 0 & 1/2 \\ \hline
 0 & 0 & 1/2 & 0 \\
 0 & 1/2 & 0 & 0 \\
\end{array}
\right)\!,
\left(
\begin{array}{cc|cc}
 0 & 0 & 1/2 & 0 \\
 0 & -1/2 & 0 & 1/2 \\ \hline
 0 & 0 & 1/2 & 0 \\
 0 & 1/2 & 0 & 0 \\
\end{array}
\right)\!,
\left(
\begin{array}{cc|cc}
 0 & 0 & 0 & 0 \\
 0 & -1/2 & 0 & 1 \\ \hline
 0 & 0 & 1/2 & 0 \\
 0 & 1/2 & 0 & 0 \\
\end{array}
\right)\!, \\
\!\left(
\begin{array}{cc|cc}
 0 & 0 & 0 & 0 \\
 0 & -1/2 & 0 & 1/2 \\ \hline
 1/2 & 0 & 1/2 & 0 \\
 0 & 1/2 & 0 & 0 \\
\end{array}
\right)\!,
\left(
\begin{array}{cc|cc}
 0 & 0 & 0 & 0 \\
 0 & -1/2 & 0 & 1/2 \\ \hline
 0 & 0 & 1/2 & 0 \\
 0 & 1 & 0 & 0 \\
\end{array}
\right)\!,
\left(
\begin{array}{cc|cc}
 0 & 0 & 0 & 0 \\
 0 & -1/2 & 0 & 1/2 \\ \hline
 0 & 0 & 1 & 0 \\
 0 & 1/2 & 0 & 0 \\
\end{array}
\right)\!,
\left(
\begin{array}{cc|cc}
 0 & 0 & 0 & 0 \\
 0 & -1/2 & 0 & 1/2 \\ \hline
 0 & 0 & 1/2 & 0 \\
 0 & 1/2 & 0 & 1/2 \\
\end{array}
\right)\!
    \end{array}\!\right\}.\label{Eq::MinCoupler}
\end{equation}
If Bob performs the joint measurement $\{e_{\mathrm{pure}},u-e_{\mathrm{pure}}\}$, then with probability $1/3$ the outcome corresponding to $e_{\rm pure}$ occurs and the post-measurement state is
\begin{equation}
    \frac{\id_{\A} \otimes e_{\mathrm{pure}} \otimes \id_{\C}\left(\left(\pr_2\right)_{\A\B} \otimes \left(\pr_2\right)_{\B'\C}\right)}{\left<u,\id_{\A} \otimes  e_{\mathrm{pure}} \otimes \id_{\C}\left(\left(\pr_2\right)_{\A\B} \otimes\left(\pr_2\right)_{\B'\C}\right)\right>} = \left(\pr_2\right)_{\A\C}.
\end{equation}
Likewise, if Bob measures $\{e_{\mathrm{noisy}},u-e_{\mathrm{noisy}}\}$ instead, where $e_{\rm noisy}$ is the first element of $E_{\rm noisy}$, we find that with probability $3/8$, the outcome corresponding to $e_{\rm noisy}$ occurs and the post-measurement state is
\begin{equation}\label{eq:15}
    \frac{\id_{\A} \otimes e_{\mathrm{noisy}} \otimes \id_{\C}\left(\left(\pr_2\right)_{\A\B} \otimes \left(\pr_2\right)_{\B'\C}\right)}{\left<u,\id_{\A} \otimes  e_{\mathrm{noisy}} \otimes \id_{\C}\left(\left(\pr_2\right)_{\A\B} \otimes\left(\pr_2\right)_{\B'\C}\right)\right>} = \frac{2}{3}(\pr_2)_{\A\C} + \frac{1}{3} \left(
\begin{array}{cc|cc}
 0 & 0 & 0 & 0 \\
 1 & 0 & 0 & 1 \\
 \hline
 1 & 0 & 0 & 1 \\
 0 & 0 & 0 & 0 \\
\end{array}
\right)_{\A\C}.
\end{equation}
The local deterministic state at the end of this equation has value $3/4$ for $\mathrm{CHSH}_2$, and hence the state on the right-hand-side of~\eqref{eq:15} is entangled. When another effect from the set $E_{\rm noisy}$ is used instead, we get a similar decomposition with the corresponding local deterministic state also having value $3/4$ for $\mathrm{CHSH_2}$. This implies that all of the effects in~\eqref{Eq::MinCoupler} are couplers.

Additionally, note that among the extremal effects that are couplers, only $e_{\mathrm{pure}}$ is ray-extremal. The couplers in the set $E_{\mathrm{noisy}}$ are not, which is why they were not found in~\cite{PhysRevLett.102.110402}. 

\section{Bipartite Compositions of \texorpdfstring{$\mathcal{G}_2^2$ }{}and their Effect Polytopes}
\label{Section::EffectPolytope}
We have shown in Subsection~\ref{SubSec::Correlations}, that both the bipartite gbit state spaces $\mathbb{H}^{[1]}_{(2,2)}$ and $\mathbb{H}^{[8]}_{(2,2)}$ generate all no-signalling correlations in the $[2,2]$ setting. We have seen that $\mathbb{H}^{[8]}_{(2,2)}$ has no couplers~\cite{PhysRevA.73.012101}, which also implies that the ACHSH game cannot be won with a success probability higher than that of GLT. The same argument does not hold for $\mathbb{H}^{[1]}_{(2,2)}$.
Here, we proceed to study more generally whether there are other bipartite gbit theories that support couplers and whether such theories outperform quantum theory in the ACHSH game.
 
To do so we consider a bipartite gbit state space characterised by the convex hull of $\mathbb{H}^{[0]}_{(2,2)}$ and $\npr$ noisy PR boxes of the form
\begin{equation}
    \pr_{i,\alpha_i} \coloneqq \alpha_i \pr_i + (1-\alpha_i)\frac{\mathbb{I}}{4},
    \label{Eq::NoiseModel}
\end{equation}
where $\alpha_i \in [1/2,1]$. In this work, we only consider the scenario where the amount of noise is the same on all of the PR boxes i.e., $\alpha_i=\alpha$ and denote such a state space by $\mathbb{H}^{[\npr]}_{\alpha (2,2)}$. Note that $\pr_{i,1/2}$ is local, so $\mathbb{H}^{[\npr]}_{1/2 (2,2)}=\mathbb{H}^{[0]}_{(2,2)}$.  Therefore in the following, we restrict the range of $\alpha$ to the interval $(1/2,1]$, unless specified otherwise. To check whether a state space of this form supports couplers, we find the corresponding extreme effects of the maximal effect space. Extreme effects are useful because if couplers are present in the effect polytope, at least one of the extreme effects must be a coupler.

In the following, we first describe the effect polytope for state spaces with one noisy PR box and then show how to generalise to two or more noisy PR boxes. Note that $\mathbb{H}^{[\npr]}_{\alpha(2,2)}$ is not a unique state space for a fixed $\npr,\alpha$ (it depends on the choice of $\npr$ noisy PR boxes).
 
\subsection{State spaces with 1 noisy PR-box extremal state} 
\label{Subsection::1Roof}

The 8 PR boxes are equivalent up to relabelling symmetries (see Section~\ref{Subsection::ExampleBW}) and so, without loss of generality, we consider $\pr_2$ here. The state space $\mathbb{H}^{[1]}_{\alpha(2,2)}[\pr_2]$ is characterised by 23 facets of which 16 are positivity facets and 7 are CH facets. These are the same as the facets of $\mathbb{H}^{[0]}_{(2,2)}$, but with $\left<e_{\rm CH_2},\mathbf{x}\right> \leq 1$ removed\footnote{Note that $\left<e_{\rm CH_2},\pr_{2,\alpha}\right> > 1$ when $\alpha \in (1/2,1]$.}. Furthermore, as discussed in Section~\ref{Subsection::ExampleGLT}, the maximal effect polytope of $\mathbb{H}^{[0]}_{(2,2)}$ has 90 extreme effects, which includes 82 BW effects~\cite{PhysRevA.73.012101} and 8 entangled effects $\{e_{\rm CH_i}\}_{i=1}^4$ and $\{u-e_{\rm CH_i}\}_{i=1}^4$. The maximal effect polytope of $\mathbb{H}^{[1]}_{\alpha(2,2)}[\pr_2]$ is the subset of the maximal effect polytope of $\mathbb{H}^{[0]}_{(2,2)}$ that is contained in the intersection of the half-spaces satisfying $\left< \mathbf{x}, \mathrm{PR}_{2,\alpha} \right> \leq 1$ and $\left< \mathbf{x}, \mathrm{PR}_{2,\alpha} \right> \geq 0$, where $\mathbf{x} \in \mathbb{R}^{16}$.  Since $\left<e_{\rm CH_2},\pr_{2,\alpha}\right> > 1$,  $e_{\rm CH_2}$ and $u-e_{\rm CH_2}$ cease to be valid effects of $\mathbb{H}^{[1]}_{\alpha (2,2)}[\pr_2]$. The remaining 88 extreme effects of $\mathbb{H}^{[0]}_{(2,2)}$ are valid effects for $\mathbb{H}^{[1]}_{\alpha (2,2)}[\pr_2]$ and remain extreme. 

To find the new extremal effects for this state space we take each of the effects of $\mathbb{H}^{[0]}_{(2,2)}$ that are not valid for $\mathbb{H}^{[1]}_{\alpha (2,2)}[\pr_2]$ and form line segments from them to each of the other extremal effects, identifying the point on the line where the vector becomes a valid effect. The set formed in this way then needs to be reduced to its extreme elements (see Appendix~\ref{Appendix::EffectPoly} for more details).
Using this technique we find that all additional extremal effects for this state space are vectors $\mathbf{x} \in \mathbb{R}^{16}$ that satisfy $\left< \mathbf{x}, \mathrm{PR}_{2,\alpha} \right> = 0$ or $\left< \mathbf{x}, \mathrm{PR}_{2,\alpha} \right> = 1$. We found that these come in 4 types up to relabelling. A candidate effect of each type lying on the hyperplane $\left< \mathbf{x}, \mathrm{PR}_{2,\alpha} \right> = 0$ is as follows: 
\begin{equation}
    \begin{split}
        \text{Type 1}&: \quad \frac{1-\alpha}{\alpha} e_{\rm CH_2} +\left(1-\frac{1-\alpha}{\alpha}\right)\left(
\begin{array}{cc|cc}
 0 & 1 & 0 & 0 \\
 1 & 0 & 0 & 0 \\ \hline
 0 & 0 & 0 & 0 \\
 0 & 0 & 0 & 0 \\
\end{array}
\right), \\
\text{Type 2}&: \quad  \frac{1-\alpha}{3\alpha-1}e_{\rm CH_2} + \left(1- \frac{1-\alpha}{3\alpha-1}\right) \left(
\begin{array}{cc|cc}
 1 & 1 & 0 & 0 \\
 1 & 0 & 0 & 0 \\ \hline
 0 & 0 & 0 & 0 \\
 0 & 0 & 0 & 0 \\
\end{array}
\right),\\
\text{Type 3}&: \quad \frac{3-\alpha}{3\alpha+1} e_{\rm CH_2} + \left(1-\frac{3-\alpha}{3\alpha+1}\right)\left(
\begin{array}{cc|cc}
 0 & 1 & 0 & 0 \\
 0 & 0 & 0 & 0 \\ \hline
 0 & 0 & 0 & 0 \\
 0 & 0 & 0 & 0 \\
\end{array}
\right) \eqqcolon e_{\rm m,\alpha},\\
\text{Type 4}&: \quad \frac{2}{3}e_{\rm CH_2} \eqqcolon e_{\rm p,\alpha}.
    \end{split}
\end{equation}
On the hyperplane $\left< \mathbf{x}, \mathrm{PR}_{2,\alpha} \right> = 0$, there are 12 effects of Type 1, 8 of Type 2, 8 of Type 3 and 1 of Type 4. Their complementary effects (the effects formed by subtracting them from the unit effect) are also extreme effects and lie on the hyperplane $\left< \mathbf{x}, \mathrm{PR}_{2,\alpha} \right> = 1$.

Collecting all of these we find that the maximal effect polytope of $\mathbb{H}^{[1]}_{\alpha  (2,2)}[\pr_2]$ is the convex hull of 146 extreme effects. These include 82 BW effects, 6 CH type effects, 29 effects satisfying $\left< \Tilde{e} , \mathrm{PR}_{2,\alpha} \right>=1$ and 29 effects satisfying $\left< \Tilde{e} , \mathrm{PR}_{2,\alpha} \right>=0$. Note that when $\alpha \to 1$, all the effects satisfying $\left< \Tilde{e} , \mathrm{PR}_{2,\alpha} \right>=1$ from the first two types converge to deterministic effects and their complementary effects ($u-\Tilde{e}$) converge to the complementary deterministic effects. This leaves 106 extremal effects of $\mathbb{H}^{[1]}_{(2,2)}[\pr_2]$ in agreement with the example from Section~\ref{subsec::EntSwap}.

We next consider which of these extreme effects are couplers. A short calculation shows that 
\begin{align}    
       \chsh_2 \left[ \frac{\id_{\A} \otimes e_{\B_1\B_2} \otimes \id_{\C}\left(\left(\pr_{2,\alpha}\right)_{\A\B_1} \otimes \left(\pr_{2,\alpha}\right)_{\B_2\C} \right)}{\left<u,\id_{\A} \otimes e_{\B_1\B_2} \otimes \id_{\C}\left(\left(\pr_{2,\alpha}\right)_{\A\B_1} \otimes \left(\pr_{2,\alpha}\right)_{\B_2\C} \right)\right>} \right]   = 
       \begin{cases}     
       \hfill \frac{\alpha +2}{4} \quad &\text{if } e \in \mathrm{Type\ 1} \\
        \hfill \frac{\alpha  (\alpha +10)-4}{20 \alpha -8} \quad &\text{if } e \in \mathrm{Type\ 2} \\
        \hfill \frac{5\alpha^2+2\alpha+4}{4(\alpha+2)} \quad &\text{if } e \in \mathrm{Type\ 3} \\
       \hfill \frac{\alpha ^2+1}{2}   \quad &\text{if } e \in \mathrm{Type\ 4} \\
        \end{cases} ,
        \label{Eq::PostStateCHSH}
\end{align}
which are shown in Fig.~\ref{Fig::CHSHScoresExtreme}. Effects in Type 3 are couplers in the range $(1+\sqrt{41})/10 < \alpha \leq 1$ and the effect in Type 4 is a coupler for $1/\sqrt{2} < \alpha \leq 1$. Note that $\alpha > 1/\sqrt{2}$ corresponds to the state spaces having nonlocality strictly larger than Tsirelson's bound. Additionally, for the couplers of Type 3, the post measurement state will have a CHSH value larger than Tsirelson's bound when $\alpha > 1/10(\sqrt{2}+\sqrt{2(1+20\sqrt{2})})$. Similarly, for the coupler from Type 4, this corresponds to $\alpha > 1/\root 4\of 2$.  These are the only extremal effects that are couplers.

\begin{figure}
    \centering
    \includegraphics[width=0.65\textwidth]{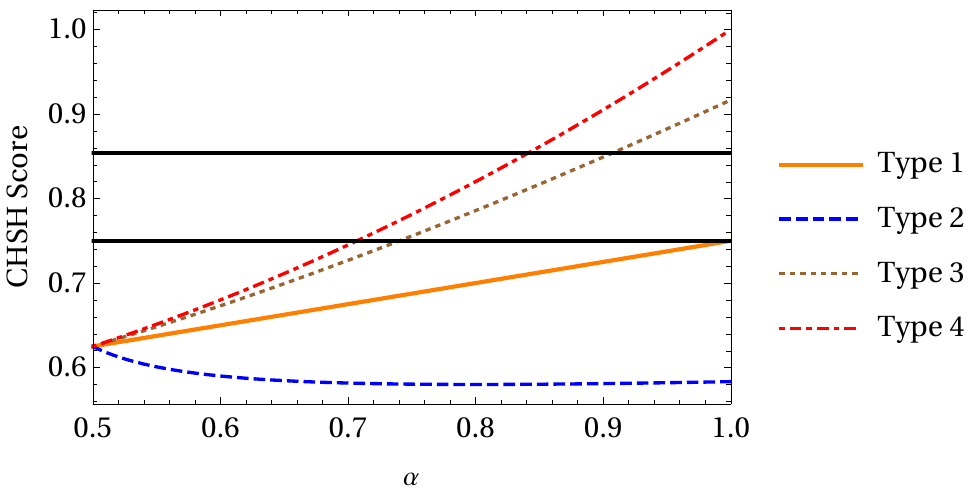}
    \caption{ A plot of the CHSH scores of the renormalised states obtained when an effect $\Tilde{e}$ from each of the four types is applied in the middle half of the 4-partite state $\pr_{2,\alpha} \otimes \pr_{2,\alpha}$. The red line is obtained when $\Tilde{e} \in \mathrm{Type\ 4}$.  The brown line is obtained for any $\Tilde{e} \in \mathrm{Type\ 3}$.   The yellow and blue lines are obtained when any $\Tilde{e}$ is taken from Type 1 and Type 2 respectively. The straight horizontal black lines represent the classical score $3/4$ and Tsirelson's bound.}
    \label{Fig::CHSHScoresExtreme}
\end{figure}

We now consider the probability of successful coupling by using these coupling effects within measurements $\{e_{\mathrm{p},\alpha},u-e_{\mathrm{p},\alpha}\}$ and $\{e_{\mathrm{m},\alpha},u-e_{\mathrm{m},\alpha}\}$. If Bob shares two copies of the state $\rm{PR}_{\alpha}$, one with Alice and another with Charlie, then the probability of successful entanglement swapping can be expressed in terms of $\alpha$ as
\begin{equation}
    p_{\mathrm{success}}=\left<u,\id \otimes e \otimes  \id\left( \rm{PR}_{\alpha} \otimes  \rm{PR}_{\alpha}\right)\right> =\begin{cases}
        \frac{1}{1+2\alpha} \quad \text{if } e = e_{\mathrm{p}}\\
        \frac{2+\alpha}{2+6\alpha} \quad \text{if } e = e_{\mathrm{m}}
    \end{cases}.
    \label{eq:p_perfect_coupling}
\end{equation}
  Note that when $\alpha=1$, $e_{p,\alpha=1}=e_{\rm pure}$ and $e_{m,\alpha=1} \in E_{\mathrm{noisy}}$ with the success probability being $1/3$ for $e_{\rm p}$ and $3/8$ for $e_{\rm m}$ which matches with the example discussed in Section~\ref{subsec::EntSwap}.

\subsection{State spaces with 2 noisy PR-box extremal states}
\label{Subsection::2Roofs}

 In this section, we consider state spaces with 2 noisy PR boxes and perform a similar analysis. There are $\binom{8}{2}=28$ pairs of PR boxes. A pair of PR boxes $(\pr_i,\pr_j)$ is said to be equivalent to another pair $(\pr_k,\pr_l)$ if there exists a local relabelling operation $R$ such that $R[\pr_i]=\pr_k$ and $R[\pr_j]=\pr_l$. We found that there are two classes of pairs of PR boxes. $(\pr_1,\pr_2)$ are an instance of the first class and $(\pr_2,\pr_2')$ are an instance of the second (cf.\ the definition of $\{\pr_i\}$ in Section~\ref{Subsection::ExampleBW}). Therefore, analysis of these two cases covers all possibilities with 2 PR boxes up to symmetry. We will first look into state spaces where the pair is isotropically opposite and then investigate the other case.

We use $\mathbb{H}^{[2]}_{\alpha (2,2)}[\pr_{2,2'}]$ to denote the state space characterised by the convex hull of $\mathbb{H}^{[1]}_{\alpha (2,2)}[\pr_2]$ and the noisy PR box $\pr_{2,\alpha}'$. This state space is characterised by 16 positivity facets and 6 Bell facets. In particular, $\left\<u-e_{\rm CH_2},\mathbf{x}\right> \leq 1$ which is a facet of $\mathbb{H}^{[1]}_{\alpha (2,2)}[\pr_2]$ is no longer a facet of $\mathbb{H}^{[2]}_{\alpha (2,2)}[\pr_{2,2'}]$\footnote{Note that $\left\<u-e_{\rm CH_2},\pr_{2,\alpha}'\right> > 1$.}. The maximal effect space of $\mathbb{H}^{[2]}_{\alpha (2,2)}[\pr_{2,2'}]$ is the subset of the maximal effect space of $\mathbb{H}^{[1]}_{\alpha (2,2)}[\pr_{2}]$ that is contained in the intersection of the half spaces satisfying $\left< \mathbf{x}, \mathrm{PR}_{2',\alpha} \right> \leq 1$ and $\left< \mathbf{x}, \mathrm{PR}_{2',\alpha} \right> \geq 0$. Using the same technique as in the case of one entangled state we calculated all the extreme effects of the maximal effect space and found that there are no couplers. The extreme effects include 82 BW effects, 6 CH type effects, 12 effects shared by the hyperplanes $\left< \mathbf{x}, \mathrm{PR}_{2,\alpha} \right>=0$ and $\left< \mathbf{x}, \mathrm{PR}_{2,\alpha}' \right>=1$, 12 effects shared by the hyperplanes $\left< \mathbf{x}, \mathrm{PR}_{2,\alpha} \right>=1$  and $\left< \mathbf{x}, \mathrm{PR}_{2,\alpha}' \right>=0$ and 8 Type 2 effects lying on each of these four hyperplanes, making it a total of 144 extreme effects.  We refer the reader to Appendix~\ref{Appendix::EffectPoly22p} for more details. 

Note that a related case with two noisy PR-boxes was considered in~\cite{Skrzypczyk_2009}, but with different amounts of noise on each of the added PR-boxes. A special case of this is when the noise is the same for both, and~\cite{Skrzypczyk_2009} found that no couplers occur here, which our analysis above agrees with.
 
Next, we consider the second state space $\mathbb{H}^{[2]}_{\alpha (2,2)}[\pr_{1,2}]$ where the two noisy PR boxes are not isotropically opposite to each other. We found that the only extreme effects of $\mathbb{H}^{[1]}_{\alpha (2,2)}[\pr_{2}]$ that cease to be valid now are $e_{\rm CH_1}$ and $e_{\rm CH_{1'}}$. Following the same construction as above, we found that the extreme effects lying on the hyperplane $\left< \mathbf{x}, \mathrm{PR}_{1,\alpha} \right>=0$
are exactly of the form of Types 1, 2, 3 and 4 and hence their complementary effects lie on the hyperplane $\left< \mathbf{x}, \mathrm{PR}_{1,\alpha} \right>=1$ and are extreme. These constitute the new extreme effects. Then, extreme points of the maximal effect space comprise 82 BW effects, 4 CH type effects and 29 effects from each of the 4 hyperplanes, giving a total of 202 extreme effects (see Appendix~\ref{Appendix::EffectPoly12} for more details). Since all the extreme effects of $\mathbb{H}_{\alpha (2,2)}^{[1]}[\pr_2]$, except $e_{\rm CH_1}$ and $e_{\rm CH_{1'}}$, are still extreme effects here, this state space does have couplers, in particular the Type 3 and Type 4 effects.  In Section~\ref{Sec::MinimalPreservability}, we will use a compositional consistency criterion to show that all these couplers are in fact inconsistent, and hence entanglement swapping is not possible for these state spaces with 2 entangled states.

\subsection{General Algorithm for state spaces with \texorpdfstring{$\npr$}{\npr} noisy PR-box extremal states}
\label{Subsection::mRoofs}
Finally, let us focus on the general case, $\mathbb{H}^{[\npr]}_{\alpha (2,2)}$, which is the state space characterised by the convex hull of 16 local deterministic boxes and $\npr$ noisy PR boxes. For convenience in this discussion consider the case where the $\npr$ PR boxes are $\pr_{i,\alpha}$ for $i \in \{1,\ldots,\npr\}$. We find the extreme points of the effect polytope of $\mathbb{H}^{[\npr]}_{\alpha (2,2)}$ using the following steps.
\begin{itemize}
\item[] \textbf{Step 1.} For each $i$, find the extremal effects of the state space $\mathbb{H}^{[1]}_{\alpha(2,2)}[\pr_i]$ by applying a local relabelling symmetry to the extremal effects of $\mathcal{E}_{\mathbb{H}^{[1]}_{\alpha (2,2)}[\pr_{2}]}$ found previously in Section~\ref{Subsection::1Roof}. [More precisely, if $R_i$ is a relabelling operation such that $R_i[\pr_2]=\pr_i$, then the extreme effects of $\mathbb{H}^{[1]}_{\alpha (2,2)}[\pr_i]$ are
    $$
    \mathrm{Extreme}\left[ \mathcal{E}_{\mathbb{H}^{[1]}_{\alpha (2,2)}[\pr_i]} \right] = \left\{R_i[e]\ |\ e \in    \mathrm{Extreme}\left[ \mathcal{E}_{\mathbb{H}^{[1]}_{\alpha (2,2)}[\pr_2]} \right]  \right\}.
    $$    
    \item[] \textbf{Step 2.} Define $V_E$ as the union of all the extreme effects found in each case, i.e., 
    $$ V_E \coloneqq \bigcup_{i=1}^m \mathrm{Extreme}\left[\mathcal{E}_{\mathbb{H}^{[1]}_{\alpha (2,2)}[\pr_{i}]}\right].$$
  \item[]  \textbf{Step 3.} Take each of the elements of $V_E$ and compute the inner product with each $\pr_i$, discarding any whose inner product is outside $[0,1]$.
\end{itemize}
In Step 3, discarding suffices since all new effects that might arise are captured when computing the extremal effects of $\mathcal{E}_{\mathbb{H}^{[1]}_{\alpha (2,2)}[\pr_{2}]}$ in Step 1 (see Appendix~\ref{Appendix::EffectPoly} for details).

The number of extreme effects can be counted as follows. As discussed earlier, there are 90 extreme effects from $\mathbb{H}^{[0]}_{(2,2)}$, including 82 BW effects and 8 CH type effects. Suppose that in the state space $\mathbb{H}^{[\npr]}_{\alpha (2,2)}$ with $1/2 < \alpha < 1$ there are $t$ pairs of noisy PR boxes that are isotropically opposite to each other. From Section~\ref{Subsection::1Roof}, the addition of $(\npr-2t)$ noisy PR boxes to the local effect polytope introduces $58(\npr-2t)$ new extreme effects and eliminates $2(\npr-2t)$ CH type effects. On the other hand, from Section~\ref{Subsection::2Roofs}, $t$ pairs of isotropically opposite PR boxes introduce to the local effect polytope $56t$ new extreme effects and eliminate $2t$ CH type effects. Putting all these together one gets $90+56\npr-58t$ extreme effects of the effect polytope. A similar analysis can be done when $\alpha = 1$ and leads us to the total number of extreme effects of the effect polytope as follows:
\begin{equation}
     \Big|\mathrm{Extreme}\left[\mathcal{E}_{\mathbb{H}^{[\npr]}_{\alpha (2,2)}}\right]\Big| = \begin{cases}
        90         & \text{if } \alpha = 1/2\\
        90+56\npr-58t & \text{if } 1/2 < \alpha < 1\\
        90+16\npr-34t & \text{if } \alpha = 1.
    \end{cases}
 \end{equation}
The techniques in this section can be extended to cases where the PR boxes have different amounts of noise on them, but we do not do so here for simplicity.

\section{Bipartite Compositions of \texorpdfstring{$\mathcal{G}_{3}^2$}{} and their Effect Polytopes}
\label{Section::EffectPolytope3Fid}

In this section we study bipartite state spaces of $\mathcal{G}_3^2$, the gbit system with three fiducial measurements, and investigate the presence of couplers in them. Because the smallest quantum system is a qubit, which needs three fiducial measurements to be characterised, this is arguably closer to the quantum case than $\mathcal{G}_2^2$. We have looked at state spaces constructed from the convex hull of 64 local deterministic states and 1 extremal entangled state, denoted $\mathbb{H}^{[1]}_{(3,2)}$, Since there are 4 classes of extremal entangled states that allow for this construction, we consider them separately. In Table~\ref{Table::1Roof3Fid} we summarise the different facets of each state space and the number of extreme effects of their respective effect polytopes and we discuss the existence of couplers for such state spaces in Section~\ref{Subsection::Couplers3Fid}. 

\begin{table}
    \centering
    \begin{tabular}{|c|c|c|c|}
    \hline
        $\phantom{\Big(} \rm Class \phantom{\Big(}$ & $\#\rm\ I_{CH}\ Facets $ & $\#\ \rm I_{3322}\ Facets $ & $\#{\rm Extreme\  Effects}$\\
        \hline
        $\phantom{\Big(} \mathbb{H}^{[1]}_{(3,2)}[{\rm N_1}] \phantom{\Big(}$ & 71 & 568 & 29486  \\
         \hline
        $\phantom{\Big(} \mathbb{H}^{[1]}_{(3,2)}[{\rm N_2}] \phantom{\Big(}$ & 66 & 558 & 41888  \\
         \hline
        $\phantom{\Big(} \mathbb{H}^{[1]}_{(3,2)}[{\rm N_3}] \phantom{\Big(} $ & 68 & 552 & 37376 \\
         \hline
        $\phantom{\Big(} \mathbb{H}^{[1]}_{(3,2)}[{\rm N_4}] \phantom{\Big(}$ & 70 & 564 & 32384 \\
         \hline        
    \end{tabular}
    \caption{Summary of the number of CH facets, ${\rm I_{3322}}$ facets and extreme effects for the state space $\mathbb{H}^{[1]}_{(3,2)}$. $\rm N_1,N_2,N_3$ and $\rm N_4$ are as defined in Subsection~\ref{Subsection::ExampleBW}.}
    \label{Table::1Roof3Fid}
\end{table}

\section{Minimal \texorpdfstring{$k$}{k}-Preservability Criterion}
\label{Sec::MinimalPreservability}
So far, the only consistency condition we have put on state and effect spaces is that they result in valid probabilities, i.e., the inner product between any state and any effect should be between 0 and 1. In order to build a full theory, we also need to consider composability of systems. Expanding on ideas from~\cite{PhysRevLett.102.110402,PhysRevA.73.012101}, effects corresponding to a given number of systems should also be compatible with states of larger systems, in the sense that for two state spaces $\mathcal{S}^{\boxtimes m}$ and $\mathcal{S}^{\boxtimes k}$ and effect space $\mathcal{E}^{\boxtimes m} \subseteq \mathcal{E}_{\mathcal{S}^{\boxtimes m}}$, every effect $\Tilde{e} \in \mathcal{E}^{\boxtimes m}$ must have the property that
\begin{equation}
   \id^{\otimes k_1} \otimes \Tilde{e} \otimes \id^{\otimes k_2} \left(\varsigma \right) \in \mathcal{S}^{\boxtimes k}_{\leq},
   \label{Def::KPreserving}
\end{equation}
for all $\varsigma \in \mathcal{S}^{\boxtimes(m+k)}$ and $k_1,k_2\in\mathbb{N}_0$ with $k_1+k_2=k$. In other words, $\Tilde{e}$ acts on $m$ of the $m+k$ systems in state $\varsigma$, and the requirement ensures that $\Tilde{e}$ respects the composition $\boxtimes$ by preserving the state space structure in the presence of $k$ extra subsystems\footnote{Because the property must hold for all states, we can fix the positions of the identities in~\eqref{Def::KPreserving}.}. However, in this work we propose the inclusion of an additional condition for an effect $\tilde{e}$ to be valid, namely that there exists a measurement with $\tilde{e}$ as one of its effects, and in which all the effects in the measurement satisfy~\eqref{Def::KPreserving}. If this holds we say that $\tilde{e}$ is a $k$-\textit{preserving} effect. Further, if $\tilde{e}$ is $k$-\textit{preserving} for all $k \geq 0$, we say that $\tilde{e}$ is a \textit{completely preserving} effect. Such a concept is not needed in quantum theory where all elements of the maximal effect space are POVM elements (which are completely positive).

For an arbitrary state space $\mathcal S^{\boxtimes m} \in \mathbb V^m$, in general there is no limit to the number of composition rules $\boxtimes$ that identify composite state spaces of $m+k$ systems in $\mathbb V^m \otimes \mathbb V^k$, such that appropriate marginalisation gives a state in $\mathcal{S}^{\boxtimes m}$. In addition, depending on how it is specified, given a specific rule $\boxtimes$, completely characterising every state in $\mathcal{S}^{\boxtimes (m+k)}$ may not be straightforward, and hence it may be difficult to show that an effect $\tilde{e}$ is $k$-preserving. [The question of whether there is a simple set of sufficient conditions to test $k$-preservability (or even complete preservability) remains open as far as we are aware.]

In this work, we focus on a weaker, necessary condition for $\tilde{e}$ to be $k$-preserving that can be checked when all states in $\mathcal{S}^{\boxtimes m}$ and  $\mathcal{S}^{\boxtimes k}$ are known, but without the need for a full list of states in $\mathcal{S}^{\boxtimes (m+k)}$. From Def.~\ref{def::MinAndMax}, for two state spaces $\mathcal{S}^{\boxtimes m}$ and $\mathcal{S}^{\boxtimes k}$, we always have $\mathcal{S}^{\boxtimes m} \ntens \mathcal{S}^{\boxtimes k} \subseteq\mathcal{S}^{\boxtimes (m+k)}$, and so for any two states $r\in\mathcal{S}^{\boxtimes m}$ and $s\in\mathcal{S}^{\boxtimes k}$, the product state $r \otimes s$ is an element of $\mathcal{S}^{\boxtimes (m+k)}$. Therefore if $e$ has to satisfy~\eqref{Def::KPreserving}, it must at least act consistently on any $m$ subsystems of $r \otimes s$. We call this \textit{weak minimal $k$-preservability}.  Similarly to the argument above, for $e$ to be a valid effect we require that it is part of a measurement in which every effect is weakly minimally $k$-preserving. We call this necessary condition for $k$-preservability \textit{minimal k-preservability} and formally define it below. 

\begin{definition}\textbf{(Minimal k-preservability)}
     Let $\mathcal{S}^{\boxtimes m}$ and $\mathcal{S}^{\boxtimes k}$ be $m$- and $k$-partite state spaces and  $r \otimes s$ be a state describing $m+k$ systems where $r \in \mathcal{S}^{\boxtimes m}$ describes a composite system with subsystems labelled $1,2,\ldots,m$ and $s \in \mathcal{S}^{\boxtimes k}$ describes the composite system with subsystems labelled $m+1,m+2,\ldots,m+k$. Let 
     $X_{r} \coloneqq \{1,2,\ldots,m\}$, $X_{s} \coloneqq \{m+1,m+2,\ldots,m+k\}$, $B_{r} \subseteq X_{r}$, $B_{s} \subseteq X_{s}$, $A_{r} = X_{r} \setminus B_{r}$ and $C_{s} = X_{s} \setminus B_{s}$.  An effect $e \in \mathcal{E}_{\mathcal{S}^{\boxtimes m}}$ is said to be \emph{weakly minimally $k$-preserving} if
     $$
    \id^{A_r} \otimes e^{B_{r} \cup B_{s}} \otimes \id^{C_{s}} \left(r \otimes s \right) \in \mathcal{S}^{\boxtimes k}_{\leq}
     $$
     for any $r \in \mathcal{S}^{\boxtimes m}$ and $s \in \mathcal{S}^{\boxtimes k}$ and any $B_{r},B_{s}$ such that $|B_{r}|+|B_{s}|=m$, where $e^{B_{r} \cup B_{s}}$ denotes the action of $e$ on the $m$ subsystems in $B_{r} \cup B_{s}$. 

     If, additionally, there exists a measurement $\{e_i\}_i$ such that $e_i$ is weakly minimally $k$-preserving for all $i$, then $e_i$ is \emph{minimally $k$-preserving}.
     \label{eq:MinPreserving}
\end{definition}
It turns out to be sufficient to consider the measurement $\{e,u-e\}$ to confirm whether or not $e$ is minimally $k$-preserving, as shown in the following lemma.
\begin{lemma} \label{lemma:complementary}
    Let $e\in\mathcal{E}_{\mathcal{S}^{\boxtimes m}}$ be a weakly minimally $k$-preserving effect. Then $e$ is minimally $k$-preserving if and only if $u-e$ is weakly minimally $k$-preserving.
\end{lemma}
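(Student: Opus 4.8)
The plan is to prove each direction directly from Definition~\ref{eq:MinPreserving}, with essentially all of the substance concentrated in the ``only if'' implication.

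The ``if'' direction will be immediate. Assuming $u-e$ is weakly minimally $k$-preserving (and $e$ is too, by the hypothesis of the lemma), I would note that $u-e$ is a genuine effect since $\langle u-e,s\rangle=1-\langle e,s\rangle\in[0,1]$ for every state $s$, and that $e+(u-e)=u$, so that $\{e,u-e\}$ is a valid two-outcome measurement in which every effect is weakly minimally $k$-preserving. Definition~\ref{eq:MinPreserving} then declares $e$ to be minimally $k$-preserving, with no further work.

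For the ``only if'' direction I would first unfold the definition: minimal $k$-preservability of $e$ supplies a measurement $\{e_i\}_i$ with $\sum_i e_i=u$, containing $e$ (say $e=e_1$), all of whose members are weakly minimally $k$-preserving. Since $u-e=\sum_{i\geq2}e_i$, I would fix an arbitrary product state $r\otimes s$ with $r\in\mathcal{S}^{\boxtimes m}$, $s\in\mathcal{S}^{\boxtimes k}$ and an arbitrary admissible partition (so $|B_r|+|B_s|=m$), and set $t_i\coloneqq\id^{A_r}\otimes e_i^{B_r\cup B_s}\otimes\id^{C_s}(r\otimes s)$. Linearity of the effect action gives $\id^{A_r}\otimes(u-e)^{B_r\cup B_s}\otimes\id^{C_s}(r\otimes s)=\sum_{i\geq2}t_i$, while weak minimal $k$-preservability of each $e_i$ gives $t_i\in\mathcal{S}^{\boxtimes k}_{\leq}$. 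The goal then reduces to showing that this particular sum of subnormalised states again lies in $\mathcal{S}^{\boxtimes k}_{\leq}$.

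The key step, and the main obstacle since a sum of subnormalised states is not in general subnormalised, is a weight count. Putting $\mu_i\coloneqq\langle u,t_i\rangle$, the identity $\sum_i e_i=u$ together with $\langle u,r\rangle=\langle u,s\rangle=1$ forces $\sum_i\mu_i=1$, because applying the unit on all $m+k$ systems of the product $r\otimes s$ yields $1$; hence $\sum_{i\geq2}\mu_i=1-\mu_1\leq1$. Writing each nonzero $t_i$ as $\mu_i\sigma_i$ with $\sigma_i\in\mathcal{S}^{\boxtimes k}$ (which is precisely what $t_i\in\mathcal{S}^{\boxtimes k}_{\leq}$ means), the sum $\sum_{i\geq2}t_i$ becomes a scalar $\sum_{i\geq2}\mu_i\in[0,1]$ times a convex combination of states of $\mathcal{S}^{\boxtimes k}$, which lies in $\mathcal{S}^{\boxtimes k}_{\leq}$ by convexity of the state space (the degenerate case $\sum_{i\geq2}\mu_i=0$ giving the zero vector). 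Thus $u-e$ is weakly minimally $k$-preserving, and since the argument is uniform over the choices of $r,s,B_r,B_s$, the conclusion follows. I expect the only delicate bookkeeping to be confirming that the total weight is exactly $1$, which is where the measurement normalisation $\sum_i e_i=u$ is essential; every other ingredient is just convexity of $\mathcal{S}^{\boxtimes k}$.
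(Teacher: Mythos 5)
Your proposal is correct and follows essentially the same route as the paper: the ``if'' direction is the immediate observation that $\{e,u-e\}$ is a measurement of weakly minimally $k$-preserving effects, and the ``only if'' direction takes the measurement $\{e_i\}_i$ with $\sum_i e_i = u-e$ (after splitting off $e$), notes each $\id\otimes e_i\otimes\id(r\otimes s)$ lies in $\mathcal{S}^{\boxtimes k}_{\leq}$, and shows their sum is again subnormalised by a weight count plus convexity. The paper phrases the weight step via $p_{u-e}\coloneqq\langle u,\id\otimes(u-e)\otimes\id(r\otimes s)\rangle\in[0,1]$ and renormalises (treating $p_{u-e}=0$ separately), while you derive $\sum_i\mu_i=1$ from $\sum_i e_i=u$; these are the same argument in substance.
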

\begin{proof}
If $u-e$ is weakly minimally $k$-preserving, then $\{e,u-e\}$ is a measurement with weakly minimally $k$-preserving effects, and hence $e$ (and $u-e$) is minimally $k$-preserving.

For the reverse direction, let $\{e_i\}_i$ be a set of weakly minimally $k$-preserving effects such that $\sum_i e_i = u-e$. Then, for any pair of states, $r\in\mathcal{S}^{\boxtimes m}$ and $s\in\mathcal{S}^{\boxtimes k}$,
$$
\id^{A_r} \otimes e_i^{B_r \cup B_s} \otimes \id^{C_s} (r \otimes s)
$$
is a valid sub-normalised state of the state space, i.e., is in $\mathcal{S}^{\boxtimes k}_{\leq}$, where $A_r$, $B_r$, $B_s$ and $C_s$ are any sets satisfying the conditions given in Definition~\ref{eq:MinPreserving}. For brevity, in the following we omit these superscripts. Next, let 
$p_{u-e} \coloneqq \langle u, \id \otimes (u-e)\otimes \id (r \otimes s) \rangle $ and note that $p_{u-e} \in [0,1]$. First, if $p_{u-e}\neq0$, then
$$
\frac{1}{p_{u-e}}\left ( \id \otimes (u-e) \otimes \id (r \otimes s) \right)=\frac{1}{p_{u-e}} \left[ \sum_i \left ( \id \otimes e_i \otimes \id (r \otimes s) \right) \right]\in\mathcal{S}^{\boxtimes k},
$$
where the latter inclusion follows because $\{e_i\}$ are weakly minimally $k$-preserving and because the state on the left is normalised. Second, if $p_{u-e} = 0$, then $u-e$ maps the pair of states $r$ and $s$ to the zero state, which is an element of $\mathcal{S}^{\boxtimes k}_{\leq}$.

The conditions for $u-e$ to be weakly minimally $k$-preserving are hence satisfied.
\end{proof}

The framework of GPTs can be used to describe theories with different types of systems. For instance, in a theory with two system types such that the first requires two fiducial measurements for state tomography but the second requires three, then bipartite state spaces consisting of states formed by the composition of these two types of systems are needed. In other words, we need a composition rule for each pair of system types to describe general bipartite systems. If a four-partite extension of this is considered, then additional composition rules are needed to allow for all the possible combinations. An example of this has been considered in~\cite{barnum2008teleportation} where the authors used both min- and max-tensor products to describe states with four subsystems.

The fact that $B_r$ and $B_s$ are arbitrary subsets of $X_r$ and $X_s$, means that we are considering state spaces that are symmetric under any permutation of the $m+k$ systems. However, the state space need not be symmetric for all GPTs. Symmetry under the exchange of parties can be absent, for instance, because we consider different system types, but also in cases where the local state space of each sub-system is the same. An example of this is $\mathbb{H}^{[1]}_{\alpha (2,2)}[\pr_3]$ with effect space $\mathcal{E}_{\mathbb{H}^{[1]}_{\alpha (2,2)}[\pr_3]}$. Comparing this to $\mathbb{H}^{[1]}_{\alpha (2,2)}[\pr_1]$ with effect space $\mathcal{E}_{\mathbb{H}^{[1]}_{\alpha (2,2)}[\pr_1]}$, we see that since the state $\pr_1$ is party symmetric, i.e., the probability table obtained after relabelling parties is also $\pr_1$, for every bipartite effect in $\mathcal{E}_{\mathbb{H}^{[1]}_{\alpha (2,2)}[\pr_1]}$ applying the party swap operation ($T_{\mathrm{SWAP}}$) to that effect gives a vector whose inner product with $\pr_{1,\alpha}$ is between 0 and 1. However, the analogous property does not hold for $\mathbb{H}^{[1]}_{\alpha (2,2)}[\pr_3]$ and $\mathcal{E}_{\mathbb{H}^{[1]}_{\alpha (2,2)}[\pr_3]}$. In particular, for all $\alpha\in(1/2,1]$,
$$\left<T_{\mathrm{SWAP}}\left[e_{\rm CH_4'}\right],\pr_{3,\alpha}\right> \notin [0,1].$$
Because $\mathbb{H}^{[1]}_{\alpha (2,2)}[\pr_3]$ and $\mathcal{E}_{\mathbb{H}^{[1]}_{\alpha (2,2)}[\pr_3]}$ can be mapped to $\mathbb{H}^{[1]}_{\alpha (2,2)}[\pr_1]$ and $\mathcal{E}_{\mathbb{H}^{[1]}_{\alpha (2,2)}[\pr_1]}$ by local relabellings, this particular example is equivalent to a party symmetric one.

Consider a party symmetric GPT $\left(\mathcal{S}, \mathcal{E}, \boxtimes \right)$.  Requiring minimal $2$-preservability puts constraints on the effect space, in the sense that not all elements of $\mathcal{E}_{\mathcal{S}^{\boxtimes 2}}$ correspond to valid effects. Let $e \in \mathcal{E}_{\mathcal{S}^{\boxtimes 2}}$ be an effect and $r$ and $s$ be two arbitrary states of $\mathcal{S}^{\boxtimes 2}$. Further, suppose that $r$ is composed of 2 subsystems labelled by $\{1,2\}$ and $s$ is composed of 2 subsystems labelled by $\{3,4\}$. Since $e$ is a bipartite effect, and since we impose weak minimal $2$-preservability, for a party symmetric state space, it is sufficient to consider two different maps arising from $e$ depending on which subsystems of the state $r\otimes s$ it acts on:
$$
e^{(1,2)} \otimes \id^{(3,4)} \left(r\otimes s \right) \quad \text{and}\quad \id^{(1)} \otimes e^{(2,3)} \otimes \id^{(4)}  \left(r\otimes s \right)=:\Phi_{e}^{(23)}\left(r,s\right).
$$
For $e$ to be weakly minimally 2-preserving, we require that these two vectors represent subnormalised states (i.e., are elements of $\mathcal{S}^{\boxtimes 2}_{\leq}$) for any choice of $r$ and $s$. From the definition of an effect space (see Def.~\ref{Def::GPT}), the first state is always an element of $\mathcal{S}^{\boxtimes 2}_{\leq}$, but the second is not necessarily. Weak minimal $2$-preservability in this scenario hence corresponds to $\Phi_{e}^{(23)}\left(r,s\right)$ being an element of $\mathcal{S}^{\boxtimes 2}_{\leq}$ for all $r,s\in\mathcal{S}^{\boxtimes 2}$. This discussion can be extended to any scenario in which a composite state space is provided but a way to extend it to larger systems is not.

By applying the condition for every pair of extremal states, we find that the couplers from $\mathbb{H}^{[1]}_{\alpha (2,2)}[\pr_2]$ presented in Subsection~\ref{Subsection::1Roof} (and from $\mathbb{H}^{[1]}_{\alpha=1 (2,2)}[\pr_1]$ presented in~\cite{PhysRevLett.102.110402}) are weakly minimally $2$-preserving. However, some of the effects we found are not---see the next two subsections for examples. For all models considered in this work in the $(2,2)$ case, effects that are weakly minimally $2$-preserving are also minimally $2$-preserving. This fails to hold in the $(3,2)$ case, as shown in Example~\ref{Subsec::Min2PresMeas}, illuminating the significance of this criterion. Note that notions similar to \emph{weak} minimal 2-preservibility have appeared in the literature~\cite{PhysRevLett.102.110402, PhysRevA.73.012101, Selby2023correlations,PhysRevLett.119.020401,dallarno2023signaling}, for a comparison to~\cite{PhysRevLett.119.020401,dallarno2023signaling}, we refer to Section~\ref{subsection::RelatedMin2Pres} below. 

\subsection{CH Type Effects of \texorpdfstring{$\mathbb{H}^{[1]}_{\alpha (2,2)}[\pr_2]$}{}}\label{sec:CH1R}
As introduced in Section~\ref{Subsection::1Roof}, the CH-type effects of the effect polytope $\mathcal{E}_{\mathbb{H}^{[1]}_{\alpha (2,2)}}$ for $\alpha \in (1/2,1]$ are $e_{\mathrm{CH_1}},$ $e_{\mathrm{CH_1}}',$ $e_{\mathrm{CH_3}},$ $e_{\mathrm{CH_3}}',$ $e_{\mathrm{CH_4}}$ and $e_{\mathrm{CH_4}}'$, where $e_{\rm CH_i}' = u-e_{\rm CH_i}$. However, these effects are not weakly minimally $2$-preserving when $\alpha > 1/\sqrt{2}$. A direct calculation shows that 
\begin{equation}
    \begin{split}
        {\rm CHSH_{1'}} \left[  \frac{\Phi^{(23)} _{e_{\rm CH_1}} \left(\pr_{2,\alpha},\pr_{2,\alpha}  \right)}{\left<u,\Phi^{(23)} _{e_{\rm CH_1}} \left(\pr_{2,\alpha},\pr_{2,\alpha}  \right)\right>}  \right] = \frac{1}{2}(\alpha^2+1),\quad  {\rm CHSH_{1}} \left[  \frac{\Phi^{(23)} _{e_{\rm CH_1'}} \left(\pr_{2,\alpha},\pr_{2,\alpha}  \right)}{\left<u,\Phi^{(23)} _{e_{\rm CH_1'}} \left(\pr_{2,\alpha},\pr_{2,\alpha}  \right)\right>}  \right] =\frac{1}{2}(\alpha^2+1) ,\\
        {\rm CHSH_{4'}} \left[  \frac{\Phi^{(23)} _{e_{\rm CH_3}} \left(\pr_{2,\alpha},\pr_{2,\alpha}  \right)}{\left<u,\Phi^{(23)} _{e_{\rm CH_3}} \left(\pr_{2,\alpha},\pr_{2,\alpha}  \right)\right>}  \right] = \frac{1}{2}(\alpha^2+1),\quad  {\rm CHSH_{4}} \left[  \frac{\Phi^{(23)} _{e_{\rm CH_3'}} \left(\pr_{2,\alpha},\pr_{2,\alpha}  \right)}{\left<u,\Phi^{(23)} _{e_{\rm CH_3'}} \left(\pr_{2,\alpha},\pr_{2,\alpha}  \right)\right>}  \right] =\frac{1}{2}(\alpha^2+1) ,\\
        {\rm CHSH_{3'}} \left[  \frac{\Phi^{(23)} _{e_{\rm CH_4}} \left(\pr_{2,\alpha},\pr_{2,\alpha}  \right)}{\left<u,\Phi^{(23)} _{e_{\rm CH_4}} \left(\pr_{2,\alpha},\pr_{2,\alpha}  \right)\right>}  \right] = \frac{1}{2}(\alpha^2+1),\quad  {\rm CHSH_{3}} \left[  \frac{\Phi^{(23)} _{e_{\rm CH_4'}} \left(\pr_{2,\alpha},\pr_{2,\alpha}  \right)}{\left<u,\Phi^{(23)} _{e_{\rm CH_4'}} \left(\pr_{2,\alpha},\pr_{2,\alpha}  \right)\right>}  \right] = \frac{1}{2}(\alpha^2+1).
    \end{split}
\end{equation}
Since $(\alpha^2+1)/2 > 3/4$ when $\alpha > 1/\sqrt{2}$, no CH-type effect is weakly minimally $2$-preserving for $\alpha > 1/\sqrt{2}$ since in $\mathbb{H}^{[1]}_{\alpha(2,2)}[\pr_2]$ only $\rm CHSH_2[s]\leq3/4$ can be violated. An alternative way to check whether the state on systems 1 and 4 is an element of the state space, is to verify whether the list of inner products it generates with all the extreme effects are in the interval $[0,1]$. For the current example it is sufficient to only compute the inner products for the CH-type effects since these are the non-trivial facets of the state space polytope. 

\subsection{Couplers of \texorpdfstring{$\mathbb{H}^{[2]}_{\alpha (2,2)}[\pr_{1,2}]$}{}}
\label{Example::PR1PR2}

In Section~\ref{Subsection::2Roofs} we found that the state space $\mathbb{H}^{[2]}_{\alpha (2,2)}[\pr_{1,2}]$ contains couplers. These couplers turn out not to be weakly minimally $2$-preserving as the following calculations show. 
\begin{equation}
\begin{split}
    &\mathrm{CHSH}_{2'} \left[ \frac{\Phi^{(23)}_{f_2(\alpha)}\left(\pr_{1,\alpha},\pr_{2,\alpha}\right)}{\left<u,\Phi^{(23)}_{f_2(\alpha)}\left(\pr_{1,\alpha},\pr_{2,\alpha}\right)\right>}\right] = \frac{1}{2} \left(\alpha ^2+1\right), \ \  \mathrm{CHSH}_{2'} \left[ \frac{\Phi^{(23)}_{\Tilde{e}(\alpha)}\left(\pr_{1,\alpha},\pr_{2,\alpha}\right)}{\left<u,\Phi^{(23)}_{\Tilde{e}(\alpha)}\left(\pr_{1,\alpha},\pr_{2,\alpha}\right)\right>}\right] =  \frac{5\alpha^2+2\alpha+4}{4(\alpha+2)},\\
    &\mathrm{CHSH}_{1'} \left[ \frac{\Phi^{(23)}_{f_1(\alpha)}\left(\pr_{2,\alpha},\pr_{2,\alpha}\right)}{\left<u,\Phi^{(23)}_{f_1(\alpha)}\left(\pr_{2,\alpha},\pr_{2,\alpha}\right)\right>}\right] = \frac{1}{2} \left(\alpha ^2+1\right),\ \ \mathrm{CHSH}_{1'} \left[ \frac{\Phi^{(23)}_{\Tilde{g}(\alpha)}\left(\pr_{2,\alpha},\pr_{2,\alpha}\right)}{\left<u,\Phi^{(23)}_{\Tilde{g}(\alpha)}\left(\pr_{2,\alpha},\pr_{2,\alpha}\right)\right>}\right] =  \frac{5\alpha^2+2\alpha+4}{4(\alpha+2)},
\end{split}  \label{Equation::CHSHScores2Roofs}
\end{equation}
where $f_1(\alpha)$ and $\Tilde{g}(\alpha)$ are the Type 4 effect for $e_{\mathrm{CH}_1}$ and any Type 3 effect lying on the facet $\left<\mathbf{x},\pr_{1,\alpha}\right>=1$ respectively; $f_2(\alpha)$ and $\Tilde{e}(\alpha)$ are the Type 4 effect for $e_{\mathrm{CH}_2}$ and any Type 3 effect lying on the facet $\left<\mathbf{x},\pr_{2,\alpha}\right>=1$, respectively. The state space $\mathbb{H}^{[2]}_{\alpha (2,2)}[\pr_{1,2}]$ has the property that $\forall s \in \mathbb{H}^{[2]}_{\alpha (2,2)}[\pr_{1,2}]$  $\rm CHSH_{1'}[s]\leq3/4$ and $\rm CHSH_{2'}[s]\leq3/4$. Hence, for $\alpha>1/\sqrt{2}$ the Type 4 effects cease to be weakly minimally $2$-preserving and for $\alpha>(1+\sqrt{41})/10$ the Type 3 effects cease to be weakly minimally $2$-preserving. Note that these ranges of $\alpha$ coincide with the values of $\alpha$ for which these effects are couplers (cf.\ Section~\ref{Subsection::1Roof}). Thus there are no weakly minimally $2$-preserving extreme effects for this state space that are couplers. By a similar argument to that in Section~\ref{sec:CH1R}, the CH-type effects are also not weakly minimally $2$-preserving, for $\alpha>1/\sqrt{2}$ [this is because the maximum score in a $\rm CHSH_{i \neq 1,2}$ game on systems $1$ and $4$ after a CH effect is applied to systems $2$ and $3$ for the tensor product of two allowed PR boxes is $(\alpha^2+1)/2$].

\section{Minimally \texorpdfstring{$2$}{2}-Preserving Couplers of Party Symmetric State Spaces}
\label{Section::Min2PresCouplers}
\subsection{Party Symmetric State Spaces with Restricted Relabelling}
\label{Subsection::StateSpacesConsidered}

In this work we will focus on bipartite
compositions of $\mathcal{G}_{2}^2$ and $\mathcal{G}_{3}^2$ that are party symmetric. We divide these into cases based on the number of maximally non-local states present in the state space. For compositions of $\mathcal{G}_{2}^2$, we will consider this varying from one to all eight noisy PR boxes.   

For a given number of maximally non-local states there are in general equivalence classes of the various state spaces under the relabelling symmetries. We say that two state spaces $\mathbb{H}^{[\npr]}_{\alpha (2,2)}$ and $\mathbb{H}'^{[\npr]}_{\alpha (2,2)}$ are equivalent if there exists a local relabelling $R$ that maps $\mathbb{H}^{[\npr]}_{\alpha (2,2)}$ to $\mathbb{H}'^{[\npr]}_{\alpha (2,2)}$, i.e., for every state $s' \in \mathbb{H}'^{[\npr]}_{\alpha (2,2)}$, there exists a state $s$ such that $R[s]=s'$. In the table below we state the number of classes of party symmetric state spaces for $2 \leq \npr \leq 7$ (see Appendix~\ref{Appendix::EquivClasses} for a full classification).

\begin{table}[H]
    \centering
    \begin{tabular}{|c|c|c|c|c|c|c|}
\hline 
 $\phantom{\Big(}$ $\npr$ $\phantom{\Big(}$ & $\phantom{\Big(} 2 \phantom{\Big(}$   & $ \phantom{\Big(}3 \phantom{\Big(}$  & $\phantom{\Big(} 4 \phantom{\Big(}$ & $\phantom{\Big(} 5 \phantom{\Big(}$   & $\phantom{\Big(} 6 \phantom{\Big(}$  & $\phantom{\Big(} 7 \phantom{\Big(}$ \\
   \hline 
  $\phantom{\Big(}$ \# Classes $\phantom{\Big(}$ & $\phantom{\Big(} 2 \phantom{\Big(}$  & $\phantom{\Big(} 2 \phantom{\Big(}$  & $\phantom{\Big(} 4\phantom{\Big(} $  & $\phantom{\Big(} 2 \phantom{\Big(}$  & $\phantom{\Big(} 2\phantom{\Big(} $  & $\phantom{\Big(} 1\phantom{\Big(} $ \\
    \hline      
\end{tabular}
    \caption{The number of equivalence classes for party symmetric states spaces with $\npr$ PR-boxes in the 2 input, 2 output case.}
    \label{Tab::PartySwapCount}
\end{table}

When single systems are described by $\mathcal{G}_3^2$, party symmetric state spaces with one maximally non-local state are characterised by the convex hull of 64 local deterministic states and one of the entangled states $\rm N_1,$ $\rm N_2$ or $\rm N_3$ from Section~\ref{Subsection::ExampleBW} (note that no relabelling of $\rm N_4$ is symmetric under party swap so we do not consider this case).  We will restrict to only one non-local state since the number of classes grows significantly with more. Furthermore, we have only considered compositions of state spaces without noise for this case, i.e., state spaces of the form $\mathbb{H}^{[1]}_{\alpha=1(3,2)}[\rm N_i]$. 

In the following, we calculate effect polytopes for $\mathbb{H}^{[\npr]}_{\alpha (2,2)}$ and $\mathbb{H}^{[1]}_{(3,2)}$ and search for effects that are minimally $2$-preserving couplers.

\subsection{Bipartite systems \texorpdfstring{$\mathbb{H}^{[\npr]}_{\alpha (2,2)}$}{} }

In Section~\ref{Sec::MinimalPreservability} we found that the state space $\mathbb{H}^{[1]}_{\alpha=1(2,2)}[\pr_2]$ allows couplers that are minimally $2$-preserving~\cite{PhysRevLett.102.110402} and that there are no such couplers for the state space $\mathbb{H}^{[2]}_{\alpha}[\pr_{1,2}]$. Using the ideas from the previous sections, we find that none of the extremal effects of $\mathbb{H}^{[\npr]}_{\alpha (2,2)}$ with $2\leq \npr \leq 8$  are minimally $2$-preserving couplers. However, this does not rule out that there might be non-extreme minimally $2$-preserving effects that are couplers, i.e., whether convex mixtures of extreme effects can be both minimally $2$-preserving and couplers. Our next theorem says that these also do not exist.

\begin{theorem}\label{Theorem::NoCouplers}
    Let $\Tilde{e} \in \mathcal{E}_{\mathbb{H}^{[\npr]}_{\alpha (2,2)}}$ be a candidate effect for a party symmetric bipartite state space $\mathbb{H}^{[\npr]}_{\alpha (2,2)}$ where $2\leq \npr \leq 8$. Then $\Tilde{e}$ cannot both (i) be a coupler and (ii) be minimally $2$-preserving.
\end{theorem}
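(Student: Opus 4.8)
The plan is to prove the theorem by a convexity argument that reduces the general (non-extremal) case to the extremal case already established in the preceding sections. First I would record the two facts the proof rests on: (a) from Sections~\ref{Subsection::1Roof}--\ref{Subsection::2Roofs} and the equivalence-class analysis of Section~\ref{Subsection::StateSpacesConsidered}, the only extreme effects of any $\mathbb{H}^{[\npr]}_{\alpha (2,2)}$ that can be couplers are the Type~3 and Type~4 effects (together with their relabelled images), and (b) for precisely the range of $\alpha$ in which these are couplers, the calculations in Sections~\ref{sec:CH1R}--\ref{Example::PR1PR2} show they fail weak minimal $2$-preservability, because the map $\Phi^{(23)}_e$ applied to a product of two allowed noisy PR boxes produces a state violating some $\mathrm{CHSH}_{i}$ bound that the state space forbids. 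Thus no \emph{extreme} effect is simultaneously a coupler and minimally $2$-preserving.

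Next I would handle an arbitrary candidate effect $\Tilde{e}$, writing it as a convex combination $\Tilde{e}=\sum_j \lambda_j e_j$ of extreme effects with $\lambda_j>0$ and $\sum_j\lambda_j=1$. The core of the argument is that both of the properties in question behave well under this decomposition, but in opposite directions. For minimal $2$-preservability I would use Lemma~\ref{lemma:complementary} to reduce to checking that both $\Tilde{e}$ and $u-\Tilde{e}$ are weakly minimally $2$-preserving; the key observation is that weak minimal $2$-preservability is a condition of the form $\Phi^{(23)}_e(r,s)\in\mathcal{S}^{\boxtimes 2}_{\leq}$, and since $\Phi^{(23)}_{\Tilde e}=\sum_j\lambda_j\Phi^{(23)}_{e_j}$ is linear in the effect and $\mathcal{S}^{\boxtimes 2}_{\leq}$ is convex, any $\Tilde{e}$ all of whose $e_j$ are weakly minimally $2$-preserving is itself weakly minimally $2$-preserving. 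Hence a minimally $2$-preserving $\Tilde{e}$ forces its coupler-type components (Type~3/4) to carry zero weight, \emph{unless} they are nonetheless rendered acceptable by cancellation against non-coupler components --- this is the subtlety I address below.

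The opposing direction concerns being a coupler. Here I would argue that the renormalised post-selected state is a \emph{convex} combination of the individual post-selected states weighted by their success probabilities, so whether $\Tilde e$ couples is governed by whether the mixture lands outside the separable (local) region. Concretely, I would show that the relevant $\mathrm{CHSH}$ value of the output of $\Tilde e$ is a weighted average of the output $\mathrm{CHSH}$ values of the $e_j$, and the constraint is that every such value stays within the local bound $3/4$ dictated by the facets of the particular $\mathbb{H}^{[\npr]}_{\alpha (2,2)}$. The crux is that these are the \emph{same} inequalities that detect entanglement, so minimal $2$-preservability (staying below $3/4$ on the witnessing $\mathrm{CHSH}_i$) and being a coupler (exceeding $3/4$, i.e.\ producing entanglement) are directly antagonistic on every relevant facet.

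The hard part --- and the step I would spend the most care on --- is ruling out a possible loophole in the convex mixture: a priori one might fear that a Type~3 or Type~4 component with positive weight could be ``diluted'' by non-coupling components so that the mixture $\Tilde e$ satisfies weak minimal $2$-preservability (the averaged $\mathrm{CHSH}$ drops back below $3/4$) while the \emph{renormalised} output state remains entangled (because renormalisation is nonlinear and can reamplify the entangled part). To close this, I would exploit that minimal $2$-preservability is checked on \emph{all} pairs $r,s$ of extremal states (not just $\pr_{2,\alpha}\otimes\pr_{2,\alpha}$), and that by Lemma~\ref{lemma:complementary} the complementary effect $u-\Tilde e$ must pass the test too; a violation of the relevant $\mathrm{CHSH}_i$ bound on some extremal input pair is an affine (linear) condition on $\Tilde e$ and so cannot be undone by renormalisation. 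I would therefore phrase the final contradiction at the level of the linear effect-vector inner products with the forbidden facet normals, rather than at the level of renormalised states, so that convexity applies cleanly and the nonlinearity of renormalisation never enters. This reduces the whole statement to the already-verified extremal computations plus the linearity of $\Phi^{(23)}_e$ in $e$ and the convexity of $\mathcal{S}^{\boxtimes 2}_{\leq}$.
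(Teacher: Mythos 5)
Your proposal correctly identifies the two key structural moves that the paper also makes: restricting attention to pairs of (noisy) PR boxes as inputs, and linearising the coupling condition by multiplying out the normalisation so that ``the output violates $\mathrm{CHSH}_i$ by more than $3/4$'' becomes the affine condition $\langle \mathrm{CHSH}_i-\tfrac34 u,\Phi_{\tilde e}(\pr_{k,\alpha},\pr_{l,\alpha})\rangle>0$ in $\tilde e$. However, there is a genuine gap at the step where you claim the two properties are ``directly antagonistic on every relevant facet.'' This is false: a coupler's output only needs to violate \emph{some} CHSH inequality, and for a valid effect that inequality is necessarily one whose corresponding PR box is \emph{included} in the state space --- hence it is \emph{not} a facet of $\mathbb{H}^{[\npr]}_{\alpha(2,2)}$ and weak minimal $2$-preservability does not forbid violating it. The case $\npr=1$ is the counterexample to your claimed antagonism: there, $e_{\mathrm{p},\alpha}$ and $e_{\mathrm{m},\alpha}$ are simultaneously weakly minimally $2$-preserving and couplers. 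What makes $\npr\geq2$ different is a combinatorial fact about \emph{cross} pairs: applying a candidate coupler to, say, $\pr_{1,\alpha}\otimes\pr_{2,\alpha}$ forces a violation of a \emph{forbidden} CH facet, and these cross constraints happen to pin the coupling functional down to be non-positive. That is not a consequence of convexity; it is a linear-programming feasibility statement that must be verified.

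Relatedly, your reduction to ``the already-verified extremal computations plus linearity and convexity'' does not go through. The set of weakly minimally $2$-preserving effects is the maximal effect polytope intersected with additional half-spaces; its vertices need not be extremal effects of $\mathcal{E}_{\mathbb{H}^{[\npr]}_{\alpha(2,2)}}$, so knowing that no \emph{extremal} effect is both a coupler and preserving says nothing about convex combinations. A mixture of a coupling-but-non-preserving effect with a preserving-but-non-coupling effect can a priori satisfy all the (closed, linear) preservability constraints while keeping the (open, linear) coupling functional positive. The paper closes exactly this loophole in Appendix~\ref{Appendix::NoCouplers} by formulating, for each equivalence class of party-symmetric state space, each ordered pair $(k,l)$ of included PR boxes and each candidate game $\mathrm{CHSH}_i$, the LP $\mathbb{P}_{k,l|i}$ maximising the coupling functional over the weakly-minimally-$2$-preserving polytope, and then exhibiting explicit matching primal and dual solutions as functions of $\alpha$ (e.g.\ primal optimisers of the form $\theta e_{\mathrm{CH}_j,\alpha}+(1-\theta)u$ with their dual certificates) showing the optimum is never positive. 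That explicit certificate construction is the substance of the proof and is missing from your proposal; without it, the argument stops at a correct problem formulation rather than a proof.
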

\noindent The proof of this theorem is given in Appendix~\ref{Appendix::NoCouplers}.

We have also considered bipartite state spaces that are not party symmetric. For these we have considered state spaces of the form $\mathbb{H}^{[\npr]}_{\alpha (2,2)}$, for a discrete set of values of $\alpha$ between 1/2 and 1 with a step size of 1/30 and $2\leq \npr\leq 8$. In none of these cases were minimally $2$-preserving couplers found, leading us to conjecture that of the state spaces of the form $\mathbb{H}^{[\npr]}_{\alpha (2,2)}$, minimally $2$-preserving couplers are only present for $\npr=1$.  

\subsection{Bipartite systems \texorpdfstring{$\mathbb{H}^{[1]}_{(3,2)}$}{}}

In this subsection we first present an instance of an effect that is weakly but not minimally $2$-preserving, a feature that was absent in the $(2,2)$ case so far. Then we present a classification of the extremal effects of party swap symmetric states spaces of the form $\mathbb{H}^{[1]}_{(3,2)}$ based on compositional consistency and the ability to couple.

\subsubsection{An effect that is weakly minimally \texorpdfstring{$2$}{2}-preserving but not minimally \texorpdfstring{$2$}{2}-preserving}
\label{Subsec::Min2PresMeas}
Consider the state space $\mathbb{H}^{[1]}_{(3,2)}[\rm N_2]$ and take the extremal effect
$$ e \coloneqq 
\frac{2}{3}\left(
\begin{array}{cc|cc|cc}
 -1 & 0 & 0 & 0 & 0 & 0 \\
 0 & 0 & 0 & -1 & 0 & 0\\ \hline
 0 & 0 & 0 & 0 & 0 & 0\\
 0 & -1 & 0 & 1 & 0 & 0\\ \hline
 0 & 0 & 0 & 0 & 3/2 & 3/2\\
 0 & 0 & 0 & 0 & 3/2 & 3/2\\ 
\end{array}
\right).
$$
With this, one has $\Phi^{(23)}_{e}(\mathrm{ N_2,N_2})/\left<u,\Phi^{(23)}_{e}(\mathrm{ N_2,N_2})\right>=$
\begin{align*}
    \begin{split}
        &\frac{1}{8}\left(
\begin{array}{cc|cc|cc}
 0 & 1 & 1 & 0 & 1 & 0 \\
 0 & 0 & 0 & 0 & 0 & 0 \\ \hline
 0 & 1 & 1 & 0 & 1 & 0 \\
 0 & 0 & 0 & 0 & 0 & 0 \\ \hline
 0 & 0 & 0 & 0 & 0 & 0 \\ 
 0 & 1 & 1 & 0 & 1 & 0 \\
\end{array}
\right)+\frac{1}{8}\left(
\begin{array}{cc|cc|cc}
 0 & 1 & 0 & 1 & 1 & 0 \\
 0 & 0 & 0 & 0 & 0 & 0 \\ \hline
 0 & 1 & 0 & 1 & 1 & 0 \\
 0 & 0 & 0 & 0 & 0 & 0 \\ \hline
 0 & 0 & 0 & 0 & 0 & 0 \\ 
 0 & 1 & 0 & 1 & 1 & 0 \\
\end{array}
\right)+\frac{1}{8}\left(
\begin{array}{cc|cc|cc}
 1 & 0 & 0 & 1 & 0 & 1 \\
 0 & 0 & 0 & 0 & 0 & 0 \\ \hline
 0 & 0 & 0 & 0 & 0 & 0 \\
 1 & 0 & 0 & 1 & 0 & 1 \\ \hline
 0 & 0 & 0 & 0 & 0 & 0 \\
 1 & 0 & 0 & 1 & 0 & 1 \\
\end{array}
\right)+\frac{1}{8}\left(
\begin{array}{cc|cc|cc}
 0 & 1 & 0 & 1 & 1 & 0 \\
 0 & 0 & 0 & 0 & 0 & 0 \\ \hline
 0 & 0 & 0 & 0 & 0 & 0 \\
 0 & 1 & 0 & 1 & 1 & 0 \\ \hline
 0 & 0 & 0 & 0 & 0 & 0 \\
 0 & 1 & 0 & 1 & 1 & 0 \\
\end{array}
\right)\\
&\frac{1}{8}\left(
\begin{array}{cc|cc|cc}
 0 & 0 & 0 & 0 & 0 & 0 \\
 1 & 0 & 1 & 0 & 1 & 0 \\ \hline
 1 & 0 & 1 & 0 & 1 & 0 \\
 0 & 0 & 0 & 0 & 0 & 0 \\ \hline
 1 & 0 & 1 & 0 & 1 & 0 \\
 0 & 0 & 0 & 0 & 0 & 0 \\
\end{array}
\right)+\frac{1}{8}\left(
\begin{array}{cc|cc|cc}
 0 & 0 & 0 & 0 & 0 & 0 \\
 0 & 1 & 1 & 0 & 0 & 1 \\  \hline
 0 & 1 & 1 & 0 & 0 & 1 \\
 0 & 0 & 0 & 0 & 0 & 0 \\ \hline
 0 & 1 & 1 & 0 & 0 & 1 \\
 0 & 0 & 0 & 0 & 0 & 0 \\
\end{array}
\right)+\frac{1}{8}\left(
\begin{array}{cc|cc|cc}
 0 & 0 & 0 & 0 & 0 & 0 \\
 1 & 0 & 1 & 0 & 0 & 1 \\ \hline
 0 & 0 & 0 & 0 & 0 & 0 \\
 1 & 0 & 1 & 0 & 0 & 1 \\ \hline
 1 & 0 & 1 & 0 & 0 & 1 \\
 0 & 0 & 0 & 0 & 0 & 0 \\
\end{array}
\right)+\frac{1}{8}\left(
\begin{array}{cc|cc|cc}
 0 & 0 & 0 & 0 & 0 & 0 \\
 1 & 0 & 0 & 1 & 0 & 1 \\ \hline
 0 & 0 & 0 & 0 & 0 & 0 \\
 1 & 0 & 0 & 1 & 0 & 1 \\ \hline
 1 & 0 & 0 & 1 & 0 & 1 \\
 0 & 0 & 0 & 0 & 0 & 0 \\
\end{array}
\right),\\
    \end{split}
\end{align*}
which is local. Now consider the complementary effect 
$$
u-e= \frac{2}{3}\left(
\begin{array}{cc|cc|cc}
 1 & 0 & 0 & 0 & 0 & 0 \\
 0 & 0 & 0 & 1 & 0 & 0\\ \hline
 0 & 0 & 0 & 0 & 0 & 0\\
 0 & 1 & 0 & -1 & 0 & 0\\ \hline
 0 & 0 & 0 & 0 & 0 & 0\\
 0 & 0 & 0 & 0 & 0 & 0\\ 
\end{array}
\right).
$$
This is ray-extremal in $\mathcal{E}_{\mathbb{H}^{[1]}_{(3,2)}[\rm N_2]}$ (the output of PANDA identifies the ray-extremal effects directly). This effect is not weakly minimally $2$-preserving since
$$
\frac{\id \otimes (u-e) \otimes \id \left(\mathrm{N_2} \otimes \mathrm{N_2}\right)}{\left<u,\id \otimes (u-e) \otimes \id \left(\mathrm{N_2} \otimes \mathrm{N_2}\right)\right>} = \frac{1}{2} \left(
\begin{array}{cc|cc|cc}
 1 & 0 & 1 & 0 & 0 & 1 \\
 0 & 1 & 0 & 1 & 1 & 0 \\ \hline
 1 & 0 & 0 & 1 & 0 & 1 \\
 0 & 1 & 1 & 0 & 1 & 0 \\ \hline
 0 & 1 & 0 & 1 & 1 & 0 \\
 1 & 0 & 1 & 0 & 0 & 1 \\
\end{array}
\right) \notin \mathbb{H}^{[1]}_{(3,2)}[\rm N_2].
$$
It hence follows from Lemma~\ref{lemma:complementary} that $e$ is not minimally $2$-preserving and hence it is not a valid effect of $\mathbb{H}^{[1]}_{(3,2)}[\rm N_2]$.

\subsubsection{Classifications of Extremal Effects}
\label{Subsection::Couplers3Fid}

For the state spaces $\mathbb{H}^{[1]}_{(3,2)}[\mathrm{N_1}]$, $\mathbb{H}^{[1]}_{(3,2)}[\mathrm{N_2}]$ and $\mathbb{H}^{[1]}_{(3,2)}[\mathrm{N_3}]$, we found the maximal set of extremal effects using PANDA~\cite{Lorwald2015}. For each case, we then computed the subset of these effects that are weakly minimally $2$-preserving. In order to check how many of these effects are minimally $2$-preserving we consider the complementary effects of each (see Lemma~\ref{lemma:complementary}). 

For the state space $\mathbb{H}^{[1]}_{(3,2)}[\mathrm{N_1}]$ there are 768 effects that are not minimally $2$-preserving (but are weakly minimally $2$-preserving), and analogous counts for $\mathbb{H}^{[1]}_{(3,2)}[\mathrm{N_2}]$ and $\mathbb{H}^{[1]}_{(3,2)}[\mathrm{N_3}]$ are given in Table~\ref{Tab:WeakPresClassification}.

\begin{table}
    \centering
    \begin{tabular}{|c|c|c|c|}
    \hline
        state space $\mathbb{H}$& \# Extremal in $\mathcal{E}_{\mathbb{H}}$& \# w-min $2$-pres (I)  & \shortstack{\# I$'$}  \\ \hline \hline
       $\mathbb{H}^{[1]}_{(3,2)}[\mathrm{N_1}]$ & 29486 & 28689 & 768 \\ \hline
       $\mathbb{H}^{[1]}_{(3,2)}[\mathrm{N_2}]$ & 41888 & 19222 & 9030  \\ \hline
       $\mathbb{H}^{[1]}_{(3,2)}[\mathrm{N_3}]$ & 37376 & 35504 & 1536  \\ \hline 
    \end{tabular}
    \caption{Classification of extremal effects based on weak minimal (w-min) $2$-preservability. $\#$ denotes the size of a set. I denotes the set of extremal effects that are weakly minimally $2$-preserving. I$'$ denotes the set of effects in I that are not minimally $2$-preserving.} 
     \label{Tab:WeakPresClassification}
\end{table}

We found the maximal set of extremal effects for the state spaces $\mathbb{H}^{[1]}_{(3,2)}[\mathrm{N_1}]$, $\mathbb{H}^{[1]}_{(3,2)}[\mathrm{N_2}]$ and $\mathbb{H}^{[1]}_{(3,2)}[\mathrm{N_3}]$ using PANDA~\cite{Lorwald2015}. From each of these maximal effect spaces, we removed the candidate extremal effects that are not weakly minimally $2$-preserving. We then classified the extremal effects present in the remaining sets into classes of effects equivalent up to relabelling. For each class, we calculated the maximal violations of $\langle\mathrm{F_{CH}},\mathbf{x}\rangle\leq1$ and $\langle\mathrm{F_{I_{3322}}},\mathbf{x}\rangle\leq1$ (cf.~\eqref{eq:FCH}), and the probabilities of observing the corresponding outcomes.

Considering $\mathbb{H}^{[1]}_{(3,2)}[{\mathrm{N_1}}]$, of the 28689 effects that are weakly minimally $2$-preserving, there are 856 couplers, which can be classified into 61 relabelling classes. 88 of these couplers are pure (in the sense that after Bob applies the coupler the resultant state on Alice and Charlie is extremal, i.e., $\mathrm{N_1}$) and fall into 15 of the 61 classes. We present a member of each of these 15 classes in Appendix~\ref{Appendix:PureCouplersN13322}. We found that the maximum product of the probability of successful swapping and the CHSH score of the post-selected state is $1/3$. One of the couplers that achieves this value is
$$f_{\mathrm{CH_1}} \coloneqq 
\frac{2}{3}\left(
\begin{array}{cc|cc|cc}
 0 & 0 & 1 & 0 & 0 & 0 \\
 0 & 1 & 0 & 0 & 0 & 0\\ \hline
 0 & -1 & 0 & 1 & 0 & 0\\
 0 & 0 & 0 & 0 & 0 & 0\\ \hline
 0 & 0 & 0 & 0 & 0 & 0\\
 0 & 0 & 0 & 0 & 0 & 0\\ 
\end{array}
\right).
$$
Notice that the top left $4\times 4$ block of $f_{\mathrm{CH_1}}$ is $e_{\mathrm{CH_1}}$. With the measurement $\{f_{\mathrm{CH_1}},u-f_{\mathrm{CH_1}}\}$, the probability of a successful swap is 1/3 and the CHSH value of the post-selected normalised state is 1 (corresponding to a $\mathrm{F_{CH}}$ value of $2/3$).

In the case of $\mathbb{H}^{[1]}_{(3,2)}[{\mathrm{N_2}}]$, the effect $f_{\mathrm{CH_1}}$, defined above, although extremal, turns out to be not weakly minimally $2$-preserving since the normalised state obtained after swapping is $\mathrm{N_3}$. In this case, there are 19222 weakly minimally $2$-preserving effects of which 390 are couplers. The couplers can be grouped into 15 classes. None of these couplers are pure. We calculated maximum product of the probability of successful swap and highest achievable CHSH score, similarly to the case above. This maximum corresponds to a successful swap probability of 13/24 and the CHSH score of the normalised post-selected state is 41/52. The coupler $u-f$ for
$$
f=\frac{1}{6}\left(
\begin{array}{cc|cc|cc}
 0 & 1 & 0 & 0 & 0 & 0 \\
 0 & 0 & 2 & 0 & 0 & 1 \\ \hline
 0 & 0 & 0 & 0 & 0 & 0 \\
 1 & 0 & 0 & 2 & 0 & 0 \\ \hline
 0 & 0 & 1 & 0 & 0 & 1 \\
 0 & 3 & 0 & 0 & 1 & 0 
\end{array}
\right)
$$
can achieve this.

For $\mathbb{H}^{[1]}_{(3,2)}[{\mathrm{N_3}}]$, 35504 of the extremal effects in the maximal effect space are weakly minimally $2$-preserving within which there are 2716 couplers that can be grouped into 78 classes. Only one of these classes contains the 4 pure couplers
$$
\frac{2}{3}\left(
\begin{array}{cc|cc|cc}
 0 & 0 & 0 & 0 & 0 & 0 \\
 0 & 0 & 0 & 0 & 0 & 0 \\ \hline
 0 & 0 & 0 & 0 & 0 & 0 \\
 0 & 0 & 0 & -1 & 0 & 1 \\ \hline
 0 & 0 & 0 & 0 & 1 & 0 \\
 0 & 0 & 0 & 1 & 0 & 0 \\
\end{array}
\right), \frac{2}{3}\left(
\begin{array}{cc|cc|cc}
 0 & 0 & 0 & 0 & 0 & 0 \\
 0 & 0 & 0 & 0 & 0 & 0 \\ \hline
 0 & 0 & 0 & 0 & 0 & 0 \\
 0 & 1 & 0 & -1 & 0 & 0 \\ \hline
 1 & 0 & 0 & 0 & 0 & 0 \\
 0 & 0 & 0 & 1 & 0 & 0 \\
\end{array}
\right), \frac{2}{3}\left(
\begin{array}{cc|cc|cc}
 0 & 0 & 0 & 0 & 1 & 0 \\
 0 & 0 & 0 & 1 & 0 & 0 \\ \hline
 0 & 0 & 0 & 0 & 0 & 0 \\
 0 & 0 & 0 & -1 & 0 & 1 \\ \hline
 0 & 0 & 0 & 0 & 0 & 0 \\
 0 & 0 & 0 & 0 & 0 & 0 \\
\end{array}
\right), \frac{2}{3}\left(
\begin{array}{cc|cc|cc}
 1 & 0 & 0 & 0 & 0 & 0 \\
 0 & 0 & 0 & 1 & 0 & 0 \\ \hline
 0 & 0 & 0 & 0 & 0 & 0 \\
 0 & 1 & 0 & -1 & 0 & 0 \\ \hline
 0 & 0 & 0 & 0 & 0 & 0 \\ 
 0 & 0 & 0 & 0 & 0 & 0 \\
\end{array}
\right).
$$

In this case, the maximum product of the probability of successful swapping and the inner product of the normalised state generated with successful swap is 53/144. The corresponding probability and CHSH values are 17/36 and 53/68 respectively. A coupler that achieves this is
$$
\frac{1}{9}\left(
\begin{array}{cc|cc|cc}
 0 & 0 & 0 & 1 & 0 & 0 \\
 2 & 0 & 0 & 0 & 0 & 1 \\ \hline
 0 & 0 & 0 & 0 & 0 & 0 \\
 0 & -1 & 0 & -2 & 0 & 5 \\ \hline
 0 & 1 & 0 & 0 & 7 & 0 \\
 0 & 0 & 0 & 6 & 1 & 0
\end{array}
\right),
$$
which is not a pure coupler.

A full list of the weakly minimally $2$-preserving extremal effects for each of the maximal effect spaces can be found in the Supplementary Material~\cite{sengupta_2025_17520323}. 

\section{Correlation Self-Testing of Quantum Theory against Party Swap Symmetric State Spaces}
\label{Section::CorrelationSelfTesting}
We proceed to show that quantum theory can be correlation self-tested against any party symmetric state spaces of the form $\mathbb{H}^{[\npr]}_{\alpha (2,2)}$ and $\mathbb{H}^{[1]}_{(3,2)}$. To do this we use the fact that no conditional distribution can violate more than one of the CHSH inequalities in the $[2,2]$ setting.

\begin{lemma}\label{Lemma::IntersectionEmpty}

Let $A,B,X,Y$ be four random variables with $|A|=|B|=|X|=|Y|=2$. Let $\{\chsh_i\}_{i=1}^8$ be the 8 CHSH games in the $[2,2]$ setting. If $\chsh_{i}\left[\p(A,B|X,Y)\right] > 3/4$, then $\chsh_{j}\left[\p(A,B|X,Y)\right] \leq 3/4$ for all $j \neq i$.
\end{lemma}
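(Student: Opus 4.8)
The plan is to pass from the eight game-winning probabilities to the correlator picture and then exploit a parity constraint on the associated sign patterns. First I would introduce, for an arbitrary conditional distribution $\p(a,b|x,y)$ with $a,b,x,y\in\{0,1\}$, the four correlators $E_{xy}\coloneqq\sum_{a,b}(-1)^{a\oplus b}\,\p(a,b|x,y)$, each lying in $[-1,1]$ by normalisation and positivity. Writing the winning condition of game $i$ as $a\oplus b=f_i(x,y)$ with $f_i(x,y)=xy\oplus\gamma_1 x\oplus\gamma_2 y\oplus\gamma_0$, and using $\Pr[a\oplus b=f_i(x,y)\mid x,y]=\tfrac12\bigl(1+(-1)^{f_i(x,y)}E_{xy}\bigr)$, a direct computation with uniform inputs gives
\[
\chsh_i[\p]=\frac12+\frac18\sum_{x,y}s^{(i)}_{xy}E_{xy}\;=:\;\frac12+\frac18\,\mathcal{B}_i,\qquad s^{(i)}_{xy}\coloneqq(-1)^{f_i(x,y)}.
\]
Thus $\chsh_i[\p]>3/4$ is equivalent to $\mathcal{B}_i>2$, and the lemma becomes the statement that at most one of the eight expressions $\mathcal{B}_i$ can exceed $2$.

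Next I would identify exactly which sign vectors $s^{(i)}\in\{\pm1\}^{\{0,1\}^2}$ arise. A short calculation shows $f_i(0,0)\oplus f_i(0,1)\oplus f_i(1,0)\oplus f_i(1,1)=1$ for every choice of $(\gamma_0,\gamma_1,\gamma_2)$ (the $xy$ term supplies the lone $1$ while the affine terms cancel in pairs), so $\prod_{x,y}s^{(i)}_{xy}=-1$, i.e.\ each sign vector has an odd number of $-1$ entries. There are exactly eight such vectors, matching the eight games, so the correspondence $i\mapsto s^{(i)}$ is a bijection onto the odd-parity sign vectors. The key combinatorial observation is then that two \emph{distinct} odd-parity sign vectors cannot differ in exactly one coordinate, since flipping a single sign would change the parity; hence any two distinct game sign vectors agree in at most two of the four cells.

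Finally I would combine these facts. For distinct games $i\neq j$,
\[
\mathcal{B}_i+\mathcal{B}_j=\sum_{x,y}\bigl(s^{(i)}_{xy}+s^{(j)}_{xy}\bigr)E_{xy}\;\le\;2\,\bigl|\{(x,y):s^{(i)}_{xy}=s^{(j)}_{xy}\}\bigr|\;\le\;4,
\]
because each coefficient $s^{(i)}_{xy}+s^{(j)}_{xy}\in\{-2,0,2\}$ is nonzero only where the signs agree and $|E_{xy}|\le1$, while the number of agreements is at most $2$ by the previous step. If both $\mathcal{B}_i>2$ and $\mathcal{B}_j>2$, then $\mathcal{B}_i+\mathcal{B}_j>4$, a contradiction. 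Hence at most one $\mathcal{B}_i$ exceeds $2$, which is exactly the claim. I expect the only real work to be the bookkeeping in the middle paragraph — correctly matching the games indexed by $(\gamma_0,\gamma_1,\gamma_2)$ to odd-parity sign vectors and verifying the parity identity; once that separation is in hand, the final inequality is immediate and requires no optimisation over distributions. Note also that no-signalling is never used: the argument applies to an arbitrary conditional distribution, as the statement requires.
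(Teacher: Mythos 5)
Your proof is correct, and it takes a genuinely different route from the paper's. The paper reduces the $\binom{8}{2}=28$ pairs of games to two representative pairs using relabelling symmetry ($\{\chsh_1,\chsh_3\}$ and $\{\chsh_1,\chsh_5\}$) and then, for each representative, adds the two game matrices explicitly and derives a contradiction from the fact that the resulting combination of probabilities is bounded by normalisation (e.g.\ $\p(0,0|0,0)+\p(1,1|0,0)+\p(0,0|1,0)+\p(1,1|1,0)\le 2$, or the all-ones functional evaluating to $4$). You instead pass to the correlator picture, observe that the eight games correspond exactly to the eight odd-parity sign vectors $s^{(i)}\in\{\pm1\}^4$, note that two distinct odd-parity vectors agree in at most two coordinates, and conclude $\mathcal{B}_i+\mathcal{B}_j\le 4$ for $i\neq j$, which is incompatible with both exceeding $2$. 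Your argument is uniform over all pairs and avoids both the symmetry reduction and the explicit matrix bookkeeping; the parity identity $\bigoplus_{x,y}f_i(x,y)=1$ and the agreement bound are verified correctly, and the final inequality uses only $|E_{xy}|\le 1$, so, as you note, no-signalling is not needed (the paper's proof likewise does not use it). The paper's version is more elementary in that it stays entirely in the probability-table representation the rest of the paper uses, but your version makes the underlying combinatorial reason for the exclusivity transparent.
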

\noindent The proof is given in Appendix~\ref{Appendix::LemmaIntersection}.

Next, we prove an analytic upper bound on the maximum probability of winning the ACHSH game for any GPT characterised by the convex hull of local deterministic states and only one nonlocal state.

\begin{theorem}\label{Proposition::WinningBound}
    Let $\mathbb{H}^{[1]}_{\alpha,({\rm m,n})}[{\rm N}]$ be a bipartite state space characterised by the convex hull of $\{\mathrm{L}_1, \ldots , \mathrm{L}_l, {\rm N}_\alpha\}$, where $\{\mathrm{L}_i\}_{i=1}^l$ are local deterministic states and ${\rm N}_\alpha$ is the mixture of an extremal no-signalling state with noise (analogously to~\eqref{Eq::NoiseModel}). Let $E_{\rm coup} \subset {\rm Extreme}[\mathcal{E}_{\mathbb{H}^{[1]}_{\alpha,({\rm m,n})}[{\rm N}]}]$ be the set of minimally $2$-preserving extremal couplers. Further, for $e \in E_{\rm coup}$, let $p_{\rm succ}(e) = \left<u, \Phi_{e}^{(2,3)}(\mathrm{ N}_\alpha,\mathrm{N}_\alpha)\right>$, $s_e =  \Phi_{e}^{(2,3)}(\mathrm{ N}_\alpha,\mathrm{N}_\alpha) / p_{\rm succ}(e)$ and $\zeta_e$ be the maximum score of any distribution $\p_{s_e}(A,B|X,Y)$, generated by $s_e$ in the $[2,2]$ Bell setting, in any CHSH game. Finally, let $p_{\mathrm{win}}$ be the maximum probability of winning the ACHSH game when Alice and Charlie each share states in $\mathbb{H}^{[1]}_{\alpha,({\rm m,n})}[{\rm N}]$ with Bob. Then, 
       \begin{equation} 
       \label{Eq::MaxWinProb}
           p_{\mathrm{win}} \leq \begin{cases}
              \phantom{\Big(} \frac{3}{4} &\text{if} \quad E_{\rm coup} = \emptyset\\
              \phantom{\Big(} \frac{3}{4} + \max\limits_{e \in E_{\rm coup}} p_{\rm succ}(e)\left(\zeta_e -\frac{3}{4}\right) &\text{if} \quad E_{\rm coup} \neq \emptyset
           \end{cases}.
       \end{equation}
\end{theorem}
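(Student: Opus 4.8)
The plan is to track, for an arbitrary strategy, how the winning probability decomposes over Bob's measurement outcomes, and to show that at most one outcome can contribute a CHSH violation. Write Bob's joint measurement as $\{e_b\}_{b}$ with $b\in\{00,01,10,11\}$ and $\sum_b e_b=u$, each $e_b$ minimally $2$-preserving, and let $s_{\A\B},s_{\B'\C}\in\mathbb{H}^{[1]}_{\alpha,(\mathrm{m,n})}[\mathrm{N}]$ be the shared states. Setting $\Psi_b\coloneqq\Phi_{e_b}^{(2,3)}(s_{\A\B},s_{\B'\C})$, $p_b=\langle u,\Psi_b\rangle$ and $\sigma_b=\Psi_b/p_b$, the no-signalling decomposition of the game gives $p_{\mathrm{win}}=\sum_b p_b\,\mathrm{CHSH}_{g(b)}[\sigma_b]$, where $g(b)$ is the CHSH variant selected by outcome $b$ and the four values of $b$ pick four \emph{distinct} games. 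Since $\sum_b p_b=1$ this rearranges to $p_{\mathrm{win}}=\tfrac34+\sum_b p_b\bigl(\mathrm{CHSH}_{g(b)}[\sigma_b]-\tfrac34\bigr)$.

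First I would reduce the shared states to $\mathrm{N}_\alpha\otimes\mathrm{N}_\alpha$. For a fixed measurement, $p_{\mathrm{win}}$ is bilinear in $(s_{\A\B},s_{\B'\C})$, so its maximum is attained at a pair of extreme states, i.e.\ local deterministic states or $\mathrm{N}_\alpha$ (these being the only extreme points of $\mathbb{H}^{[1]}_{\alpha,(\mathrm{m,n})}[\mathrm{N}]$). If either shared state is local deterministic it is a product, so every $\Psi_b$ is a product (hence separable) state and $\mathrm{CHSH}_{g(b)}[\sigma_b]\le\tfrac34$ for all $b$, giving $p_{\mathrm{win}}\le\tfrac34$. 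Hence the only case that can beat $\tfrac34$ is $s_{\A\B}=s_{\B'\C}=\mathrm{N}_\alpha$, for which $\Psi_b=\Phi_{e_b}^{(2,3)}(\mathrm{N}_\alpha,\mathrm{N}_\alpha)$, $p_b=p_{\mathrm{succ}}(e_b)$ and $\sigma_b=s_{e_b}$.

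The crux, and the main obstacle, is to show that only one outcome can contribute positively. Fixing Alice's and Charlie's (binary) measurements reduces each $\sigma_b$ to a $[2,2]$ distribution. Because the state space has a \emph{single} non-local extreme point, every $\sigma_b$ takes the form $\lambda_b\mathrm{N}_\alpha+(1-\lambda_b)L_b$ with $L_b$ local, so $\mathrm{CHSH}_j[\sigma_b]=\lambda_b\,\mathrm{CHSH}_j[\mathrm{N}_\alpha]+(1-\lambda_b)\,\mathrm{CHSH}_j[L_b]$ after this reduction. By Lemma~\ref{Lemma::IntersectionEmpty} there is at most one game $i^{\ast}$ with $\mathrm{CHSH}_{i^{\ast}}[\mathrm{N}_\alpha]>\tfrac34$; for every other game $j$ both terms are $\le\tfrac34$, so $\mathrm{CHSH}_j[\sigma_b]\le\tfrac34$ whenever $j\neq i^{\ast}$, \emph{uniformly in }$b$. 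Since the four games $g(b)$ are distinct, at most one outcome $b^{\ast}$ has $g(b^{\ast})=i^{\ast}$; all remaining terms in the sum are $\le0$, whence $p_{\mathrm{win}}\le\tfrac34+p_{\mathrm{succ}}(e_{b^{\ast}})\bigl(\mathrm{CHSH}_{i^{\ast}}[s_{e_{b^{\ast}}}]-\tfrac34\bigr)\le\tfrac34+p_{\mathrm{succ}}(e_{b^{\ast}})(\zeta_{e_{b^{\ast}}}-\tfrac34)$. This step genuinely uses the $\mathbb{H}^{[1]}$ structure: without a single non-local extreme state a post-selected state could violate different CHSH games for different outcomes, and several outcomes could then add up.

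Finally I would pass from the single valid effect $e_{b^{\ast}}$ to an extremal coupler. The map $e\mapsto p_{\mathrm{succ}}(e)\,\zeta_e=\max_i\langle\mathrm{C}_i,\Phi_e^{(2,3)}(\mathrm{N}_\alpha,\mathrm{N}_\alpha)\rangle$ is a maximum of functionals linear in $e$, hence convex, so $F(e)\coloneqq p_{\mathrm{succ}}(e)(\zeta_e-\tfrac34)$ is convex. Writing the minimally $2$-preserving effect $e_{b^{\ast}}$ as a convex combination of extremal minimally $2$-preserving effects and using convexity bounds $F(e_{b^{\ast}})$ by its value at such an extremal effect; a non-coupler extremal effect yields a separable $s_e$ and hence $\zeta_e\le\tfrac34$, i.e.\ $F\le0$, so (whenever $E_{\mathrm{coup}}\neq\emptyset$ contains a useful coupler) the maximum is governed by $\max_{e\in E_{\mathrm{coup}}}F(e)$, giving the second case of~\eqref{Eq::MaxWinProb}. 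When $E_{\mathrm{coup}}=\emptyset$ every extremal effect is a non-coupler, every term is $\le0$, and $p_{\mathrm{win}}\le\tfrac34$, which is the first case.
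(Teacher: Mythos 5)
Your proof is correct and follows essentially the same route as the paper's: decompose the winning probability over Bob's four outcomes, use the fact that every post-selected state is a mixture of local deterministic states and the single non-local extreme point $\mathrm{N}_\alpha$ together with Lemma~\ref{Lemma::IntersectionEmpty} to conclude that at most one of the four distinct CHSH variants can be violated, and bound that single contribution by the best minimally $2$-preserving coupler. You are in fact somewhat more explicit than the paper on two steps it leaves implicit, namely the bilinearity argument reducing the shared resources to $\mathrm{N}_\alpha\otimes\mathrm{N}_\alpha$ and the convexity argument passing from an arbitrary minimally $2$-preserving effect to an extremal one.
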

\noindent This is proven in Appendix~\ref{Appendix::WinningBound}.

\begin{theorem}
\label{Theorem::ACHSH1Roof}
    Let $\mathbf{p}_{\mathbb{H}^{[1]}_{\alpha,(2,2)}}$ and $\mathbf{p}_{\mathbb{H}^{[1]}_{\alpha=1,(3,2)}}$ denote the maximum winning probability of the ACHSH game when Alice and Charlie each share states in $\mathbb{H}^{[1]}_{\alpha,(2,2)}$ and $\mathbb{H}^{[1]}_{\alpha=1,(3,2)}$ respectively and let $\mathbf{p}_{\mathcal{Q}}$ denote the maximum winning probability of the ACHSH game in quantum theory. Then the following hold: 
    \begin{enumerate}
        \item $  \mathbf{p}_{\mathcal{Q}} > \mathbf{p}_{\mathbb{H}^{[1]}_{\alpha,(2,2)}}$,
        \item $  \mathbf{p}_{\mathcal{Q}} > \mathbf{p}_{\mathbb{H}^{[1]}_{\alpha=1,(3,2)}}$.
    \end{enumerate}  
\end{theorem}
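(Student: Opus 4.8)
The plan is to obtain both inequalities from the winning-probability bound of Theorem~\ref{Proposition::WinningBound}, which reduces each case to controlling the single quantity $p_{\rm succ}(e)\left(\zeta_e-\tfrac34\right)$ over the set $E_{\rm coup}$ of minimally $2$-preserving extremal couplers. Since Theorem~\ref{Proposition::WinningBound} only supplies an upper bound on the winning probability, it suffices to show in each case that $\max_{e\in E_{\rm coup}} p_{\rm succ}(e)\left(\zeta_e-\tfrac34\right)$ is strictly below $\mathbf p_{\mathcal Q}-\tfrac34=(\sqrt2-1)/4\approx0.1036$, after which $\mathbf p \le \tfrac34+\max_e p_{\rm succ}(e)(\zeta_e-\tfrac34)<\mathbf p_{\mathcal Q}$ follows immediately.

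For part~1 I would first invoke the analysis of Section~\ref{Subsection::1Roof}, where it is shown that the only extremal couplers of $\mathbb{H}^{[1]}_{\alpha,(2,2)}$ are the Type~3 and Type~4 effects, and the remark (Section~\ref{Sec::MinimalPreservability}) that in the $(2,2)$ setting weak minimal $2$-preservability coincides with minimal $2$-preservability; hence $E_{\rm coup}$ consists precisely of these effects on the ranges of $\alpha$ where they couple. Substituting the post-measurement CHSH scores $\zeta_e$ from~\eqref{Eq::PostStateCHSH} and the success probabilities from~\eqref{eq:p_perfect_coupling} yields closed forms for $p_{\rm succ}(e)\left(\zeta_e-\tfrac34\right)$ as rational functions of $\alpha$, namely $\tfrac{2\alpha^2-1}{4(1+2\alpha)}$ for Type~4 and $\tfrac{5\alpha^2-\alpha-2}{8(1+3\alpha)}$ for Type~3. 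A short monotonicity check shows both are increasing on $(1/2,1]$, so each is maximised at $\alpha=1$, giving $1/12$ and $1/16$ respectively. The Type~4 value $1/12$ dominates and satisfies $1/12<(\sqrt2-1)/4$, so Theorem~\ref{Proposition::WinningBound} gives $\mathbf p_{\mathbb{H}^{[1]}_{\alpha,(2,2)}}\le\tfrac34+\tfrac1{12}=\tfrac56<\mathbf p_{\mathcal Q}$ for every $\alpha$.

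For part~2 I would treat $\mathbb{H}^{[1]}_{(3,2)}[{\rm N}_i]$ for $i\in\{1,2,3\}$ separately, since there is no analytic family and the couplers must be enumerated. Using the extremal-effect lists computed with PANDA, I would retain the weakly minimally $2$-preserving effects, discard those whose complement fails weak minimal $2$-preservability (so that by Lemma~\ref{lemma:complementary} only genuinely minimally $2$-preserving effects remain), and keep those that are couplers. For each resulting relabelling class I would evaluate $p_{\rm succ}(e)$ and the best CHSH score $\zeta_e$ (well defined since, by Lemma~\ref{Lemma::IntersectionEmpty}, at most one CHSH inequality can be violated) and then compute $p_{\rm succ}(e)\left(\zeta_e-\tfrac34\right)$. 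The maximum is $1/12$ for ${\rm N}_1$ (attained by $f_{\mathrm{CH_1}}$, with $p_{\rm succ}=1/3$ and $\zeta=1$) and strictly smaller for ${\rm N}_2$ and ${\rm N}_3$, so $\tfrac34$ plus the maximum stays below $\mathbf p_{\mathcal Q}$ in every case, establishing $\mathbf p_{\mathcal Q}>\mathbf p_{\mathbb{H}^{[1]}_{\alpha=1,(3,2)}}$.

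The main obstacle is the $(3,2)$ enumeration. Unlike the analytic $(2,2)$ case, one cannot avoid traversing the full list of coupler classes, because the quantity to be optimised, $p_{\rm succ}(e)\left(\zeta_e-\tfrac34\right)$, is \emph{not} the ``success probability times CHSH score'' product $p_{\rm succ}(e)\zeta_e$ that is most natural to tabulate in Section~\ref{Subsection::Couplers3Fid}; a coupler with smaller product but larger $p_{\rm succ}$ could a priori contribute more. The margin for ${\rm N}_1$ is the tightest, with the bound $\tfrac56$ leaving a gap of only about $0.02$ below $\mathbf p_{\mathcal Q}$, so the evaluation must be carried out exhaustively over all classes rather than merely for the product-maximising effect, and care is needed to confirm that no class exceeds $(\sqrt2-1)/4$.
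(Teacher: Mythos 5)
Your proposal is correct and follows essentially the same route as the paper: part~1 reduces to the two closed forms $\tfrac34+\tfrac{2\alpha^2-1}{4(1+2\alpha)}$ and $\tfrac34+\tfrac{5\alpha^2-\alpha-2}{8(1+3\alpha)}$ (algebraically identical to the paper's $\tfrac{\alpha(\alpha+3)+1}{4\alpha+2}$ and $\tfrac{\alpha(5\alpha+17)+4}{24\alpha+8}$, maximised at $\alpha=1$ to give $5/6$ and $13/16$), and part~2 to the computational enumeration of minimally $2$-preserving couplers for ${\rm N}_1,{\rm N}_2,{\rm N}_3$. Your observation that the bound of Theorem~\ref{Proposition::WinningBound} requires maximising $p_{\rm succ}(e)\left(\zeta_e-\tfrac34\right)$ rather than the product $p_{\rm succ}(e)\,\zeta_e$ tabulated in Section~\ref{Subsection::Couplers3Fid} is a subtlety the paper's own proof glosses over, so your insistence on an exhaustive check over all coupler classes is the more careful execution of the same argument.
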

\begin{proof}
    \textit{1.} For $\mathbb{H}^{[1]}_{\alpha,(2,2)}$ there is only one class of state space to check since all the PR boxes are equivalent up to local relabelling. In Section~\ref{Subsection::1Roof}, we found that there are only two types of extreme effects that are minimally $2$-preserving couplers: $e_{\rm p,\alpha}$ and of $e_{\rm m,\alpha}$.  From Eq.~\eqref{Eq::PostStateCHSH} and Eq.~\eqref{eq:p_perfect_coupling}, we get 
    $$
    \frac{3}{4} + p_{\rm succ}(e_{\rm p,\alpha})\left(\zeta_{e_{\rm p,\alpha}} -\frac{3}{4}\right) = \frac{\alpha  (\alpha +3)+1}{4 \alpha +2},\quad \frac{3}{4} + p_{\rm succ}(e_{\rm m,\alpha})\left(\zeta_{e_{\rm m\alpha}} -\frac{3}{4}\right) = \frac{\alpha  (5 \alpha +17)+4}{24 \alpha +8}.
    $$
    When $\alpha \in [1/2,1]$, both of these quantities are strictly less than $\mathbf{p}_{\mathcal{Q}}=(1+1/\sqrt{2})/2$. Further, both these functions are monotonically increasing in the range of $\alpha$ specified and at $\alpha=1$ evaluate to 5/6 and 13/16 respectively.

     \textit{2.} There are three classes of party swap symmetric $\mathbb{H}^{[1]}_{\alpha=1,(3,2)}$ state spaces  depending on whether the maximally entangled state is ${\mathrm{N_1, N_2}}$ or ${\mathrm{N_3}}$. In Section~\ref{Subsection::Couplers3Fid} we presented the probabilities and the respective CHSH scores that maximises their product in swapping scenarios involving a nonlocal state $\mathrm{N}$ for each state space. Plugging those numbers into the upper bound of $p_{\mathrm{win}}$ corresponding to $E_{coup} \neq \emptyset$ in Equation~\eqref{Eq::MaxWinProb}, always returns values strictly less than $\mathbf{p}_{\mathcal{Q}}$. 
\end{proof}
Theorem~\ref{Theorem::ACHSH1Roof} proves that quantum theory can be correlation self-tested against all the state spaces with one extremal non-local state considered in this paper. In addition to this, we have considered noisy state spaces $\mathbb{H}^{[1]}_{\alpha(3,2)}$ for $\alpha$ taking the set of discrete values between $1/2$ and $1$ with step size of $1/30$. We found that the maximum score in the ACHSH game, calculated as per Theorem~\ref{Proposition::WinningBound} remains strictly less than the quantum value for all the tested $\alpha$, leading us to conjecture that quantum theory can be correlation self-tested against any state spaces of the form $\mathbb{H}^{[1]}_{\alpha(3,2)}$ where $\alpha \in [1/2,1]$.

Next, we provide a generalisation of the first part of the previous result showing that no party symmetric state space $\mathbb{H}^{[\npr]}_{\alpha,(2,2)}$ can beat (or match) quantum theory in the ACHSH game.

\begin{theorem}
     Let $\mathbb{H}^{[\npr]}_{\alpha (2,2)}$ be a party symmetric state space with $1\leq \npr \leq 8$ and $1/2 \leq \alpha \leq 1$. Let $\mathbf{p}_{\mathbb{H}^{[\npr]}_{\alpha (2,2)}}$ be the maximum winning probability of the ACHSH game when Alice and Charlie each share a state in $\mathbb{H}^{[\npr]}_{\alpha (2,2)}$ with Bob. Further, let $\mathbf{p}_{\mathcal{Q}}$ denote the maximum winning probability of the ACHSH game in quantum theory. Then,
     \begin{equation}
         \mathbf{p}_{\mathcal{Q}} > \mathbf{p}_{\mathbb{H}^{[\npr]}_{\alpha (2,2)}}
     \end{equation}
     for any $\npr \in \{1,2, \ldots , 8\}$ and any $\alpha \in [1/2,1]$. 
\end{theorem}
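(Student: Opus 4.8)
The plan is to split on the number $\npr$ of noisy PR boxes. The case $\npr = 1$ is already settled by part~1 of Theorem~\ref{Theorem::ACHSH1Roof}, which gives $\mathbf{p}_{\mathcal{Q}} > \mathbf{p}_{\mathbb{H}^{[1]}_{\alpha,(2,2)}}$ for every $\alpha\in[1/2,1]$, so I would simply cite it. All the new content lies in the range $2\leq\npr\leq8$, and here the strategy is to prove the stronger statement that the ACHSH winning probability can never exceed the classical value $3/4$.

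For $2 \leq \npr \leq 8$ the engine is Theorem~\ref{Theorem::NoCouplers}: no candidate effect of $\mathbb{H}^{[\npr]}_{\alpha(2,2)}$ is simultaneously a coupler and minimally $2$-preserving. The key step is to translate this into a statement about admissible Bob strategies. In the causal structure of Fig.~\ref{ACHSH}, Bob applies a joint measurement $\{e_b\}_b$ to his halves of the states he shares with Alice and Charlie; for this measurement to be allowed in the theory, compositional consistency forces every $e_b$ to be minimally $2$-preserving, and Theorem~\ref{Theorem::NoCouplers} then tells us that no $e_b$ is a coupler. Unwinding the definition of a coupler, I would conclude that for each outcome $b$ the renormalised conditional state $s_{\A\C|b}$ left with Alice and Charlie is non-entangled, hence a state of $\mathbb{H}^{[0]}_{(2,2)}$ and so Bell-local.

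I would then write the winning probability in its outcome-decomposed form
\begin{equation*}
\mathbf{p}_{\mathbb{H}^{[\npr]}_{\alpha(2,2)}} = \sum_b p(b)\,\chsh_b\!\left[\p_{s_{\A\C|b}}(A,C|X,Z)\right],
\end{equation*}
where $p(b)$ is the (no-signalling, and hence well defined) probability of Bob's outcome $b$ and $\chsh_b$ is the CHSH variant selected by that outcome (the four values of $B$ index four of the eight CHSH games). Since each $\p_{s_{\A\C|b}}$ is Bell-local, every term obeys $\chsh_b\leq 3/4$, and $\sum_b p(b)=1$ yields $\mathbf{p}_{\mathbb{H}^{[\npr]}_{\alpha(2,2)}}\leq 3/4$. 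The endpoint $\alpha=1/2$ is trivial because $\mathbb{H}^{[\npr]}_{1/2(2,2)}=\mathbb{H}^{[0]}_{(2,2)}$ is itself a generalised local theory. Comparison with $\mathbf{p}_{\mathcal{Q}}=(1+1/\sqrt{2})/2>3/4$ then finishes this range and, together with $\npr=1$, the theorem.

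The main obstacle I anticipate is not computation but the logical bridge from Theorem~\ref{Theorem::NoCouplers} to a bound on \emph{all} Bob strategies. Two points need care: first, that the preservability constraint genuinely restricts which effects Bob may use, so the $3/4$ ceiling applies to every admissible strategy rather than only to the extremal couplers enumerated earlier; and second, that shared tripartite randomness cannot breach the ceiling, which holds because the winning probability is affine in the shared variable and therefore maximised at a deterministic value, for which the per-outcome argument already applies.
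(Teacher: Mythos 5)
Your proposal is correct and follows essentially the same route as the paper: the $\npr=1$ case is dispatched by citing part~1 of Theorem~\ref{Theorem::ACHSH1Roof}, and for $2\leq\npr\leq8$ the absence of minimally $2$-preserving couplers (Theorem~\ref{Theorem::NoCouplers}) forces every post-selected Alice--Charlie state to be local, capping the score at $3/4<\mathbf{p}_{\mathcal{Q}}$. The only difference is that you spell out the outcome decomposition and the shared-randomness convexity argument, which the paper leaves implicit in its two-line proof.
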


\begin{proof}
  In the case $\npr=1$, i.e., considering $\mathbb{H}^{[1]}_{\alpha (2,2)}$, Theorem~\ref{Theorem::ACHSH1Roof} (part \textit{1}) implies  $\mathbf{p}_{\mathcal{Q}} > \mathbf{p}_{\mathbb{H}^{[\npr]}_{\alpha (2,2)}}$.

  For $2\leq \npr \leq 8$, Theorem~\ref{Theorem::NoCouplers} shows that $\mathbf{p}_{\mathbb{H}^{[\npr]}_{\alpha (2,2)}} \leq 3/4$ since there are no minimally $2$-preserving couplers.
\end{proof}

This theorem shows that it is impossible to match the quantum winning probability when the players have access to a single copy of the state space. If multiple copies were allowed, alternative strategies would be possible, which could involve nonlocality distillation. When $\npr \neq 1$, the absence of minimally $2$-preserving couplers makes it impossible to win the ACHSH game, beating the classical bound, even if distillation were possible. However, when $\npr=1$, the theorem above does not rule out the possibility of a strategy that relies on distillation to beat the quantum score.    

\section{Consequences of Minimal \texorpdfstring{$2$}{2}-Preservability}
\subsection{Recovering Tsirelson's Bound}
\label{Section::TsirelsonBound}
In this subsection we make a connection between minimal $2$-preservability and Tsirelson's bound. To do so we fix a party-symmetric state space, $\mathbb{H}^{[\npr]}_{\alpha (2,2)}$ with $m\in\{1,\ldots,7\}$ and consider a GPT in which the state space of any pair of systems is this state space. From the definition of a GPT (Definition~\ref{Def::GPT}), for any pair of bipartite states $\sigma, \omega \in \mathcal{S}^{\boxtimes 2}$, the states $\{ \sigma^{(1,2)} \otimes \omega^{(3,4)}, \sigma^{(1,4)} \otimes \omega^{(2,3)}\}$ are valid states of the 4-party state space $\mathcal{S}^{\boxtimes 4}$.  Consider also the maximal effect space $\mathcal{E}_{\mathcal{S}^{\boxtimes 2}}$ associated with $\mathcal{S}^{\boxtimes 2}$. The next theorem shows that all of these effects are minimally $2$-preserving if and only if Tsirelson's bound holds.

\begin{theorem}
\label{Theorem::TsirelsonBound}
    Let $\mathbb{H}^{[\npr]}_{\alpha(2,2)}$ be a party symmetric bipartite state space with $\npr\in\{1,\ldots,7\}$ and let $\mathbf{p}_{\mathcal{Q}}$ denote Tsirelson's bound. Then the following two statements are equivalent:
        \begin{enumerate}
           \item  any $e \in \mathcal{E}_{\mathbb{H}^{[\npr]}_{\alpha(2,2)}}$ is minimally $2$-preserving
            \item  $\max_i\chsh_i[s] \leq  \mathbf{p}_{\mathcal{Q}}$ for any state $s \in \mathbb{H}^{[\npr]}_{\alpha (2,2)}$.
        \end{enumerate}
\end{theorem}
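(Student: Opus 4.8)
The plan is to show that both statements are equivalent to the single numerical condition $\alpha \le 1/\sqrt{2}$, and then to tie them together through the quantity $(\alpha^2+1)/2$ that governs the CHSH value of post-swap states. First I would record that, since each $\chsh_i$ is linear and the extreme points of $\mathbb{H}^{[\npr]}_{\alpha(2,2)}$ are the local deterministic boxes (each with $\chsh_i\le 3/4$) together with the noisy boxes $\pr_{k,\alpha}$ (for which $\max_i\chsh_i[\pr_{k,\alpha}]=(1+\alpha)/2$), statement~2 holds if and only if $(1+\alpha)/2\le\mathbf{p}_{\mathcal{Q}}$, i.e.\ $\alpha\le 1/\sqrt2$. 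The whole theorem then reduces to proving that statement~1 is likewise equivalent to $\alpha\le 1/\sqrt2$.

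For the direction $\alpha\le 1/\sqrt2\Rightarrow 1$, I would invoke Lemma~\ref{lemma:complementary}: since $u-e\in\mathcal{E}_{\mathcal{S}^{\boxtimes 2}}$ whenever $e$ is, it suffices to show that \emph{every} effect is weakly minimally $2$-preserving, i.e.\ that $\Phi_e^{(23)}(r,s)\in\mathcal{S}^{\boxtimes 2}_{\le}$ for all $e,r,s$. Positivity, no-signalling and the normalisation $\langle u,\Phi_e^{(23)}(r,s)\rangle\in[0,1]$ follow by acting with local effects on systems $1$ and $4$, which leaves $e$ applied to a product of two (sub-normalised) single-system states—an element of $\mathcal{S}^{\boxtimes 2}_{\le}$ on which $e$ returns a value in $[0,1]$. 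The only nontrivial requirement is thus the family of CH facets, and if $\chsh_j[\sigma]\le 3/4$ holds for the renormalised state $\sigma$ for every $j$ then $\sigma\in\mathbb{H}^{[0]}_{(2,2)}\subseteq\mathbb{H}^{[\npr]}_{\alpha(2,2)}$. As the inequality $\chsh_j[\Phi_e^{(23)}(r,s)]\le\tfrac34\langle u,\Phi_e^{(23)}(r,s)\rangle$ is linear in $e$, $r$ and $s$, I would check it only on extreme effects and extreme inputs: when either input is local deterministic the output factorises across the $1|4$ cut and is hence local with $\chsh_j\le 3/4$, while for two noisy-box inputs the value is bounded by $(\alpha^2+1)/2$, the maximum already exhibited by the Type~4 and CH-type effects in~\eqref{Eq::PostStateCHSH} and Section~\ref{sec:CH1R}. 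Finally $(\alpha^2+1)/2\le 3/4$ exactly when $\alpha\le 1/\sqrt2$.

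For the converse I would argue the contrapositive: assuming $\alpha>1/\sqrt2$, I must produce an effect that fails weak minimal $2$-preservability. Because $\npr\le 7$, some CHSH direction $j_0$ is realised by no box in the state space, so $\chsh_{j_0}[s]\le 3/4$ for all $s$ and the CH effect $e_{\mathrm{CH}_{j_0}}$ is a valid element of $\mathcal{E}_{\mathcal{S}^{\boxtimes 2}}$. Applying it in the middle of a tensor product of two admissible noisy boxes yields, exactly as in Section~\ref{sec:CH1R}, a post-swap state whose CHSH value in some direction equals $(\alpha^2+1)/2>3/4$; the point is to choose the two input boxes, from those present in the party-symmetric state space, so that this output direction is itself uncovered, whereupon the post-swap state violates a facet of $\mathbb{H}^{[\npr]}_{\alpha(2,2)}$ and escapes $\mathcal{S}^{\boxtimes 2}_{\le}$.

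The main obstacle is precisely this last matching of directions. I expect the clean parts to be the reduction to $\alpha\le1/\sqrt2$ and the product-state argument that dispatches every input pair involving a local deterministic box; the genuine work is the finite combinatorial check—ranging over the party-symmetric covering patterns classified in Appendix~\ref{Appendix::EquivClasses}—that for each $\npr\le 7$ one can always route the nonlocality of two admissible boxes, through a valid CH effect, into a CHSH direction the state space forbids. The explicit post-swap values in~\eqref{Eq::PostStateCHSH}, together with the relabelling symmetry, reduce this to a handful of representatives, and the coincidence of the thresholds $(\alpha^2+1)/2=3/4$ and $(1+\alpha)/2=\mathbf{p}_{\mathcal{Q}}$ at $\alpha=1/\sqrt2$ is what binds minimal $2$-preservability to Tsirelson's bound.
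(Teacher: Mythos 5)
Your overall strategy coincides with the paper's: both conditions are shown to be equivalent to $\alpha\le 1/\sqrt{2}$, with the threshold arising from the post-swap value $(\alpha^2+1)/2$ on one side and from $\max_i\chsh_i[\pr_{k,\alpha}]=(1+\alpha)/2$ on the other. The reduction of statement~2 and the product-state/positivity argument in your forward direction are fine.

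The genuine gap is in your converse. You propose, for $\alpha>1/\sqrt2$, to witness the failure with the effect $e_{\mathrm{CH}_{j_0}}$ for an uncovered direction $j_0$, asserting that uncoveredness of $j_0$ alone makes $e_{\mathrm{CH}_{j_0}}$ a valid element of the maximal effect space. That is false: validity of $e_{\mathrm{CH}_{j_0}}$ requires \emph{both} $\pr_{j_0}$ and its isotropically opposite box $\pr_{j_0'}$ to be absent, since $\langle e_{\mathrm{CH}_{j_0}},\pr_{j_0',\alpha}\rangle=(1-2\alpha)/2<0$ for $\alpha>1/2$ (this is exactly why $e_{\mathrm{CH}_2}'$ ceases to be a facet of $\mathbb{H}^{[2]}_{\alpha(2,2)}[\pr_{2,2'}]$ in Section~\ref{Subsection::2Roofs}). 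For $\npr=7$, and for several classes with $\npr\in\{5,6\}$ (e.g.\ the $\npr=6$ class with uncovered pair $\{\pr_{1'},\pr_{2'}\}$), no isotropically opposite pair is fully uncovered, so \emph{no} unrescaled CH effect belongs to $\mathcal{E}_{\mathbb{H}^{[\npr]}_{\alpha(2,2)}}$ and your witness does not exist. The repair is to use instead the rescaled extremal effects on the hyperplanes $\langle\mathbf{x},\pr_{k,\alpha}\rangle=1$ for a \emph{covered} direction $k$ (the Type~4 effects $\tfrac{2}{1+2\alpha}e_{\mathrm{CH}_k}$, or the measurements $\{\theta e_{\mathrm{CH}_{k,\alpha}}+(1-\theta)u,\dots\}$ used in Appendix~\ref{Appendix::NoCouplers}); since $\Phi_{ce}=c\,\Phi_e$, their normalised post-swap states coincide with those of $e_{\mathrm{CH}_k}$ and still score $(\alpha^2+1)/2$ in a direction that one must then check is uncovered. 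This direction-matching, which you correctly flag as the remaining combinatorial work, is carried out in the paper only by appeal to the explicit computations for $\npr=1$ and $\npr=2$ together with the (computer-assisted) extension to the remaining party-symmetric classes; as written, your proposal neither supplies a valid witness for the large-$\npr$ cases nor completes the matching, so the converse is not established.
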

\begin{proof}
We split the proof into cases. For $\npr=1$, we can use the analysis in Section~\ref{sec:CH1R} where the state space containing $\pr_2$ was used. We applied each of the 6 extremal effects $\{\id\otimes e_{\rm CH_{j}}\otimes\id\}_{j\neq2,2'}$ to $\pr_2\otimes\pr_2$ and found that in each case one of the CHSH games gives a score of $(\alpha^2+1)/2$ when applied to the post-measurement state (i.e., to the states formed by renormalising $\Tilde{\Phi}^{(2,3)}_{e_{\rm CH_{j \neq 2,2'}}}(\pr_{2,\alpha},\pr_{2,\alpha})$.)
To be valid states, these 6 CHSH values need to be below $3/4$, which is the case if and only if $\alpha\leq1/\sqrt{2}$. Thus, all the effects in $\mathcal{E}_{\mathbb{H}^{[1]}_{\alpha(2,2)}}$ are minimally $2$-preserving if and only if $\alpha \leq 1/\sqrt{2}$. The maximum CHSH score achievable in such a state space is
    \begin{equation}
      \max_{\frac{1}{2} \leq \alpha \leq \frac{1}{\sqrt{2}}}  {\rm CHSH_{2}} \left[\pr_{2,\alpha} \right] = \frac{1}{2}\left(1+\frac{1}{\sqrt{2}}\right) =  \mathbf{p}_{\mathcal{Q}}.
    \end{equation}

    When $\npr=2$, from the example in Section~\ref{Example::PR1PR2}, for the state space $\mathbb{H}^{[2]}_{\alpha (2,2)}[\pr_{1,2}]$ the set of extreme effects that are not minimally $2$-preserving are all the CH-type (Type~4) effects with $\alpha>1/\sqrt{2}$ and the couplers (Type~3 effects) for $\alpha>(1+\sqrt{41})/10$. Hence all the effects are minimally $2$-preserving if and only if $\alpha \leq 1/\sqrt{2}$.

The same argument extends to $3\leq\npr\leq7$.
\end{proof}
When $\npr=8$, all effects in the maximal effect space $\mathcal{E}_{\mathbb{H}^{[8]}_{\alpha(2,2)}}$ are minimally $2$-preserving for all $\alpha$, so Theorem~\ref{Theorem::TsirelsonBound} cannot be extended to this case.

\subsection{Generalisation of the No-Restriction Hypothesis}
\label{Section::GenNRH}
Given a state space, the no-restriction hypothesis states that the effect space contains all effects that give an inner product between 0 and 1 when applied to all states. As has been noted before~\cite{PhysRevLett.102.110402,PhysRevA.73.012101}, it can happen that an effect that obeys this hypothesis for a pair of systems leads to an invalid state when applied to two halves of two bipartite systems. Requiring that this does not happen (i.e., requiring weak minimal $2$-preservability) hence further constrains the maximal effect space in general. In the present work, we propose a further constraint for an effect to be valid, namely that it must be part of a valid measurement, and have given examples of effects that are weakly minimally $2$-preserving, but which cannot be part of a measurement in which all effects are weakly minimally $2$-preserving.  This suggests that a further criterion is needed to cater for compositional consistency, which can be seen as a further generalisation of the no-restriction hypothesis.

\subsection{Connection to Previous work~\texorpdfstring{\cite{PhysRevLett.119.020401,dallarno2023signaling}}{}}
\label{subsection::RelatedMin2Pres}
In~\cite{PhysRevLett.119.020401,dallarno2023signaling}, the idea of weak minimal $2$-preservability has been visited to argue which bipartite compositions of gbit state spaces admit the criterion. The set of compositions considered are the ones where the extremal states are subsets of the extremal states of $\mathbb{H}^{[8]}_{(2,2)}$ and the extremal effects are subsets of the extremal effects of $\mathbb{H}^{[0]}_{(2,2)}$, where only states and effects that lie on extremal rays of the state and effect cones, respectively, are considered. We attempted to reproduce their results and found the following list of compatible state and effect spaces, which we divide into 3 classes\footnote{For these three classes we need the notion of min-tensor product for effect spaces which is defined analogously to that of state spaces (cf.\ Definition~\ref{def::MinAndMax}). Note also that, in general, $\mathcal{E}_{\A} \ntens \mathcal{E}_{\B} \neq \mathcal{E}_{\mathcal{S}_{\A} \xtens \mathcal{S}_{\B}} $ (cf.\ Table~\ref{Table::BWEffects3Fid}).}

\begin{enumerate}
    \item $\mathbb{H}^{[0]}_{(2,2)}$ with $\mathrm{ConvHull}\left\{\mathcal{E}_{\mathcal{G}_2^2 } \ntens  \mathcal{E}_{\mathcal{G}_2^2 } \cup \{e_{\rm CH_i}\}_{\rm i=1}^4 \cup \{e'_{\rm CH_i}\}_{\rm i=1}^4   \right\}$ 
    \item $\mathbb{H}^{[8]}_{(2,2)}$ with $\mathcal{E}_{\mathbb{H}^{[8]}_{(2,2)}}$,
   
    \item \begin{itemize}
        \item $\mathbb{H}^{[1]}_{(2,2)}[\pr_1]$ with $\mathrm{ConvHull} \{ \mathcal{E}_{\mathcal{G}_2^2 } \ntens  \mathcal{E}_{\mathcal{G}_2^2 } \cup \{ 2/3e_{\mathrm{CH}_1} \}\}$,
        \item $\mathbb{H}^{[1]}_{(2,2)}[\pr_1']$ with $\mathrm{ConvHull} \{\mathcal{E}_{\mathcal{G}_2^2 } \ntens  \mathcal{E}_{\mathcal{G}_2^2 } \cup \{ 2/3e_{\mathrm{CH}_1'} \}\}$,
        \item $\mathbb{H}^{[1]}_{(2,2)}[\pr_2]$ with $\mathrm{ConvHull} \{ \mathcal{E}_{\mathcal{G}_2^2 } \ntens  \mathcal{E}_{\mathcal{G}_2^2 } \cup \{ 2/3e_{\mathrm{CH}_2} \}\}$,
        \item $\mathbb{H}^{[1]}_{(2,2)}[\pr_2']$ with $\mathrm{ConvHull} \{ \mathcal{E}_{\mathcal{G}_2^2 } \ntens  \mathcal{E}_{\mathcal{G}_2^2 } \cup \{ 2/3e_{\mathrm{CH}_2'} \}\}$,
        \item $\mathbb{H}^{[1]}_{(2,2)}[\pr_3]$ with $\mathrm{ConvHull} \{ \mathcal{E}_{\mathcal{G}_2^2 } \ntens  \mathcal{E}_{\mathcal{G}_2^2 } \cup \{ 2/3 e_{\mathrm{CH}_3} \}\}$,
        \item $\mathbb{H}^{[1]}_{(2,2)}[\pr_3']$ with $\mathrm{ConvHull} \{ \mathcal{E}_{\mathcal{G}_2^2 } \ntens  \mathcal{E}_{\mathcal{G}_2^2 } \cup \{2/3e'_{\mathrm{CH}_3} \}\}$,
        \item $\mathbb{H}^{[1]}_{(2,2)}[\pr_4]$ with $\mathrm{ConvHull} \{\mathcal{E}_{\mathcal{G}_2^2 } \ntens  \mathcal{E}_{\mathcal{G}_2^2 } \cup \{ 2/3e_{\mathrm{CH}_4} \}\}$,
        \item $\mathbb{H}^{[1]}_{(2,2)}[\pr_4']$ with $\mathrm{ConvHull} \{ \mathcal{E}_{\mathcal{G}_2^2 } \ntens  \mathcal{E}_{\mathcal{G}_2^2 } \cup \{ 2/3e'_{\mathrm{CH}_4} \}\}$.
    \end{itemize}
\end{enumerate}

Our list partially differs from the findings in~\cite{PhysRevLett.119.020401}, in that the authors find an additional class containing
\begin{itemize}
        \item $ \mathbb{H}^{[2]}_{(2,2)}[\pr_{3,3'}]$ with $\mathrm{ConvHull} \{ \mathcal{E}_{\mathcal{G}_2^2 } \ntens  \mathcal{E}_{\mathcal{G}_2^2 } \cup \{ e_{\mathrm{CH}_4}, e_{\mathrm{CH}_4}' \}\}$,
        \item $\mathbb{H}^{[2]}_{(2,2)}[\pr_{4,4'}]$ with $\mathrm{ConvHull} \{ \mathcal{E}_{\mathcal{G}_2^2 } \ntens  \mathcal{E}_{\mathcal{G}_2^2 } \cup \{ e_{\mathrm{CH}_3}, e_{\mathrm{CH}_3}' \}\}$,
\end{itemize} 
and instead of the last four examples of our class 3 they get the four cases
\begin{itemize}        
        \item $\mathbb{H}^{[1]}_{(2,2)}[\pr_3]$ with $\mathrm{ConvHull} \{ \mathcal{E}_{\mathcal{G}_2^2 } \ntens  \mathcal{E}_{\mathcal{G}_2^2 } \cup \{ e_{\mathrm{CH}_4'} \}\}$,
        \item $\mathbb{H}^{[1]}_{(2,2)}[\pr_3']$ with $\mathrm{ConvHull} \{ \mathcal{E}_{\mathcal{G}_2^2 } \ntens  \mathcal{E}_{\mathcal{G}_2^2 } \cup \{e_{\mathrm{CH}_4} \}\}$,
        \item $\mathbb{H}^{[1]}_{(2,2)}[\pr_4]$ with $\mathrm{ConvHull} \{ \mathcal{E}_{\mathcal{G}_2^2 } \ntens  \mathcal{E}_{\mathcal{G}_2^2 } \cup \{ e_{\mathrm{CH}_3'} \}\}$,
        \item $\mathbb{H}^{[1]}_{(2,2)}[\pr_4']$ with $\mathrm{ConvHull} \{ \mathcal{E}_{\mathcal{G}_2^2 } \ntens  \mathcal{E}_{\mathcal{G}_2^2 } \cup \{ e_{\mathrm{CH}_3} \}\}$.
\end{itemize}
This discrepancy arises because some of the individual state spaces within these cases require two system types due to the lack of party symmetry in the bipartite states. For instance, the final entry in class 3 involves $\pr'_4$, which is not invariant under party swap. Thus, there are effects that can be applied to $\pr'_4$ but not to $\pr'_4$ after a swap of the two subsystems. When considering weak minimal 2-preservability in~\cite{PhysRevLett.119.020401} the effects were applied without considering the two system types, leading to the additional cases written above. In more detail, if we call the system types $\A$ and $\B$, then $\pr'_4$ can be considered a system comprising $s_\A s_\B$ and when we compute the effect space of it, we get effects that apply to $s_\A s_\B$. However, with two copies of $\pr'_4$ we have systems $s_\A s_\B s_\A s_\B$. When applying an effect to the middle two systems, $s_\B s_\A$ we cannot directly apply an effect that acts on $s_\A s_\B$, but we need to reverse the order of the systems. It appears this was not done in~\cite{PhysRevLett.119.020401}. 
In~\cite{dallarno2023signaling}, the requirement of a single system type was implicitly added by requiring that the swap operation is valid. This removes the additional class and corrects the final 4 elements of class 3, thus matching our list.

These classifications do not characterise all bipartite compositions of gbits that satisfy the minimal (or weakly minimal) $2$-preservability criterion since they require all extremal states and effects to be ray extremal as well. On dropping this, our results show that there exist several other bipartite compositions for which all the effects are minimally (or weakly minimally) $2$-preserving, for instance, theories with state spaces of the form $\mathbb{H}^{[1]}_{\alpha=1,(2,2)}$ containing a party symmetric PR box that have a restricted effect space constructed from the convex hull of BW effects, the 9 coupling effects and their complementary effects. Compared to the results from~\cite{dallarno2023signaling}, our theories have $17$ additional extremal (entangled) effects (see Section~\ref{Subsection::1Roof}) and in the previous sections we found more general effects for the state space $\mathbb{H}^{[1]}_{\alpha=1,(2,2)}$, which means that we can also define additional consistent theories. The reason for the discrepancy between our work and~\cite{dallarno2023signaling} appears to be that in~\cite{dallarno2023signaling} only the ray extremal effects emanating from the zero effect are considered, but these do not give the maximal set of extremal effects. In general, the complement of each ray extremal effect gives rise to another extremal effect, but additional extremal effects arise that are not of this form and cannot be readily obtained from the ray extremal effects.

In addition, we find that there are bipartite compositions of $\mathcal{G}_2^2$ beyond those presented in~\cite{dallarno2023signaling}. Indeed, there can be bipartite state spaces where the extremal non-local states are not PR-boxes, while the local state space remains $\mathcal{G}_2^2$. For instance, $\mathbb{H}^{[1]}_{\alpha,(2,2)}$ with its maximal effect space is also minimally $2$-preserving (and therefore potentially completely preserving) for $\alpha\leq1/\sqrt{2}$ (cf.\ Theorem~\ref{Theorem::TsirelsonBound}). Other examples include the state spaces $\mathbb{H}^{[\npr]}_{\alpha, (2,2)}$ with $\npr$ even where the PR boxes are isotropically opposite and the restricted effect spaces are constructed by taking the convex hull of all the extreme effects of $\mathcal{E}_{\mathbb{H}^{[\npr]}_{\alpha,(2,2)}}$ after removing the CH type effects. We found that these examples are minimally $2$-preserving accounting for both states and effects.

\section{Discussion}

We considered asymmetric bipartite compositions of gbit state spaces that potentially allow for entanglement swapping. The state spaces of such compositions were taken to be the convex hull of local deterministic states and noisy PR boxes such that the overall state space is preserved under party swap symmetry. Within these models we found no examples that outperform quantum theory in the ACHSH game. We studied examples that have both effects enacting entanglement swapping and maximally nonlocal correlations, and showed that quantum correlations can be successfully self-tested against them.  

Our results rely on the fact that certain elements in the dual of the state space appear to be valid effects when some number of systems is considered, but fail to be valid effects when applied to larger systems.
Our results suggest adding a further requirement that an effect is only valid if it features in a valid measurement. The significance of effects that otherwise seem valid but are not part of any valid measurement is left as an open question.

In order to characterise the effect polytopes of the noisy state spaces as a function of the noise variable ($\alpha$), we were unable to directly use vertex enumeration software because these require specifying $\alpha$. Instead, we developed a technique in which the vertices of the largest polytope, corresponding to the extremal value $\alpha=1$, were calculated and then used together with the complete set of hyperplane constraints expressed in terms of $\alpha$ to reduce the larger polytope to obtain the required vertices in terms of $\alpha$. Although our case involved only one variable, the mentioned technique can be extended to more. Additionally, this technique can also be used in scenarios where a complete vertex enumeration is computationally expensive. In particular, we expect the technique to be useful in cases where we have a vertex description of a larger polytope that can be reduced to the polytope of interest using a small number of cuts.

We have shown that quantum theory can be correlation self-tested against a class of theories that allow post-quantum correlations. It is desirable to extend this to further theories hence providing even greater confidence in quantum theory being the correct description of the world. Although one can arbitrarily truncate BW state spaces to generate more examples, finding an argument covering all such state spaces may not be straightforward. In our work, the compositions of $\mathcal{G}_3^2$ we considered only involved a single entangled state. However, since there are multiple equivalence classes of maximally entangled states for this case, those with two or more entangled states would also be interesting to investigate.

The optimal strategy used in quantum theory in the ACHSH game involves perfect entanglement swapping, in which the systems held by Alice and Charlie are maximally entangled after post-selecting on each of Bob's outcomes. This feature can be mimicked if more than one composition rule is allowed, such as the one in~\cite{barnum2008teleportation} or~\cite{PhysRevA.110.022225}.
Explicitly, one possibility is allowing BW compositions between $A$ and $B$, between $B'$ and $C$ and between $A$ and $C$, while having only local composition allowed between $B$ and $B'$. With this one can perfectly win the ACHSH game when Bob shares two copies of $\pr_1$, one with Alice and another with Charlie, and Bob performs a four outcome joint measurement with appropriate CH type effects (which exist due to the min-tensor product between $B$ and $B'$). We have avoided constructions with multiple composition rules as these stray further from quantum theory. 

\section*{Acknowledgements}
We thank Rutvij Bhavsar for helpful discussions on Section~\ref{Sec::MinimalPreservability} and Michele Dall'Arno on Section~\ref{subsection::RelatedMin2Pres} through private communications. This work was supported by the Swiss National Science Foundation (Ambizione PZ00P2-208779), Departmental Studentship from the Department of Mathematics, University of York and l’Agence Nationale de la Recherche (ANR) project ANR-22-CE47-001.

\section*{Data Availability}
Supplementary Material to this paper can be found here~\cite{sengupta_2025_17520323}. 


\begin{thebibliography}{38}%
\makeatletter
\providecommand \@ifxundefined [1]{%
 \@ifx{#1\undefined}
}%
\providecommand \@ifnum [1]{%
 \ifnum #1\expandafter \@firstoftwo
 \else \expandafter \@secondoftwo
 \fi
}%
\providecommand \@ifx [1]{%
 \ifx #1\expandafter \@firstoftwo
 \else \expandafter \@secondoftwo
 \fi
}%
\providecommand \natexlab [1]{#1}%
\providecommand \enquote  [1]{``#1''}%
\providecommand \bibnamefont  [1]{#1}%
\providecommand \bibfnamefont [1]{#1}%
\providecommand \citenamefont [1]{#1}%
\providecommand \href@noop [0]{\@secondoftwo}%
\providecommand \href [0]{\begingroup \@sanitize@url \@href}%
\providecommand \@href[1]{\@@startlink{#1}\@@href}%
\providecommand \@@href[1]{\endgroup#1\@@endlink}%
\providecommand \@sanitize@url [0]{\catcode `\\12\catcode `\$12\catcode `\&12\catcode `\#12\catcode `\^12\catcode `\_12\catcode `\%12\relax}%
\providecommand \@@startlink[1]{}%
\providecommand \@@endlink[0]{}%
\providecommand \url  [0]{\begingroup\@sanitize@url \@url }%
\providecommand \@url [1]{\endgroup\@href {#1}{\urlprefix }}%
\providecommand \urlprefix  [0]{URL }%
\providecommand \Eprint [0]{\href }%
\providecommand \doibase [0]{https://doi.org/}%
\providecommand \selectlanguage [0]{\@gobble}%
\providecommand \bibinfo  [0]{\@secondoftwo}%
\providecommand \bibfield  [0]{\@secondoftwo}%
\providecommand \translation [1]{[#1]}%
\providecommand \BibitemOpen [0]{}%
\providecommand \bibitemStop [0]{}%
\providecommand \bibitemNoStop [0]{.\EOS\space}%
\providecommand \EOS [0]{\spacefactor3000\relax}%
\providecommand \BibitemShut  [1]{\csname bibitem#1\endcsname}%
\let\auto@bib@innerbib\@empty
\bibitem [{\citenamefont {Pirandola}\ \emph {et~al.}(2020)\citenamefont {Pirandola}, \citenamefont {Andersen}, \citenamefont {Banchi}, \citenamefont {Berta}, \citenamefont {Bunandar}, \citenamefont {Colbeck}, \citenamefont {Englund}, \citenamefont {Gehring}, \citenamefont {Lupo}, \citenamefont {Ottaviani}, \citenamefont {Pereira}, \citenamefont {Razavi}, \citenamefont {Shaari}, \citenamefont {Tomamichel}, \citenamefont {Usenko}, \citenamefont {Vallone}, \citenamefont {Villoresi},\ and\ \citenamefont {Wallden}}]{Pirandola20}%
  \BibitemOpen
  \bibfield  {author} {\bibinfo {author} {\bibfnamefont {S.}~\bibnamefont {Pirandola}}, \bibinfo {author} {\bibfnamefont {U.~L.}\ \bibnamefont {Andersen}}, \bibinfo {author} {\bibfnamefont {L.}~\bibnamefont {Banchi}}, \bibinfo {author} {\bibfnamefont {M.}~\bibnamefont {Berta}}, \bibinfo {author} {\bibfnamefont {D.}~\bibnamefont {Bunandar}}, \bibinfo {author} {\bibfnamefont {R.}~\bibnamefont {Colbeck}}, \bibinfo {author} {\bibfnamefont {D.}~\bibnamefont {Englund}}, \bibinfo {author} {\bibfnamefont {T.}~\bibnamefont {Gehring}}, \bibinfo {author} {\bibfnamefont {C.}~\bibnamefont {Lupo}}, \bibinfo {author} {\bibfnamefont {C.}~\bibnamefont {Ottaviani}}, \bibinfo {author} {\bibfnamefont {J.~L.}\ \bibnamefont {Pereira}}, \bibinfo {author} {\bibfnamefont {M.}~\bibnamefont {Razavi}}, \bibinfo {author} {\bibfnamefont {J.~S.}\ \bibnamefont {Shaari}}, \bibinfo {author} {\bibfnamefont {M.}~\bibnamefont {Tomamichel}}, \bibinfo {author} {\bibfnamefont {V.~C.}\ \bibnamefont {Usenko}}, \bibinfo {author} {\bibfnamefont
  {G.}~\bibnamefont {Vallone}}, \bibinfo {author} {\bibfnamefont {P.}~\bibnamefont {Villoresi}},\ and\ \bibinfo {author} {\bibfnamefont {P.}~\bibnamefont {Wallden}},\ }\bibfield  {title} {\bibinfo {title} {Advances in quantum cryptography},\ }\href {https://doi.org/10.1364/AOP.361502} {\bibfield  {journal} {\bibinfo  {journal} {Adv. Opt. Photon.}\ }\textbf {\bibinfo {volume} {12}},\ \bibinfo {pages} {1012} (\bibinfo {year} {2020})}\BibitemShut {NoStop}%
\bibitem [{\citenamefont {Tsirelson}(1993)}]{Cirelson93}%
  \BibitemOpen
  \bibfield  {author} {\bibinfo {author} {\bibfnamefont {B.~S.}\ \bibnamefont {Tsirelson}},\ }\bibfield  {title} {\bibinfo {title} {Some results and problems on quantum {B}ell-type inequalities},\ }\href@noop {} {\bibfield  {journal} {\bibinfo  {journal} {Hadronic Journal Supplement}\ }\textbf {\bibinfo {volume} {8}},\ \bibinfo {pages} {329} (\bibinfo {year} {1993})}\BibitemShut {NoStop}%
\bibitem [{\citenamefont {Popescu}\ and\ \citenamefont {Rohrlich}(1994)}]{pr}%
  \BibitemOpen
  \bibfield  {author} {\bibinfo {author} {\bibfnamefont {S.}~\bibnamefont {Popescu}}\ and\ \bibinfo {author} {\bibfnamefont {D.}~\bibnamefont {Rohrlich}},\ }\bibfield  {title} {\bibinfo {title} {Nonlocality as an axiom},\ }\href {https://doi.org/10.1007/BF02058098} {\bibfield  {journal} {\bibinfo  {journal} {Foundations of Physics}\ }\textbf {\bibinfo {volume} {24}},\ \bibinfo {pages} {379} (\bibinfo {year} {1994})}\BibitemShut {NoStop}%
\bibitem [{\citenamefont {Brassard}\ \emph {et~al.}(2006)\citenamefont {Brassard}, \citenamefont {Buhrman}, \citenamefont {Linden}, \citenamefont {M\'ethot}, \citenamefont {Tapp},\ and\ \citenamefont {Unger}}]{PhysRevLett.96.250401}%
  \BibitemOpen
  \bibfield  {author} {\bibinfo {author} {\bibfnamefont {G.}~\bibnamefont {Brassard}}, \bibinfo {author} {\bibfnamefont {H.}~\bibnamefont {Buhrman}}, \bibinfo {author} {\bibfnamefont {N.}~\bibnamefont {Linden}}, \bibinfo {author} {\bibfnamefont {A.~A.}\ \bibnamefont {M\'ethot}}, \bibinfo {author} {\bibfnamefont {A.}~\bibnamefont {Tapp}},\ and\ \bibinfo {author} {\bibfnamefont {F.}~\bibnamefont {Unger}},\ }\bibfield  {title} {\bibinfo {title} {Limit on nonlocality in any world in which communication complexity is not trivial},\ }\href {https://doi.org/10.1103/PhysRevLett.96.250401} {\bibfield  {journal} {\bibinfo  {journal} {Phys. Rev. Lett.}\ }\textbf {\bibinfo {volume} {96}},\ \bibinfo {pages} {250401} (\bibinfo {year} {2006})}\BibitemShut {NoStop}%
\bibitem [{\citenamefont {Linden}\ \emph {et~al.}(2007)\citenamefont {Linden}, \citenamefont {Popescu}, \citenamefont {Short},\ and\ \citenamefont {Winter}}]{PhysRevLett.99.180502}%
  \BibitemOpen
  \bibfield  {author} {\bibinfo {author} {\bibfnamefont {N.}~\bibnamefont {Linden}}, \bibinfo {author} {\bibfnamefont {S.}~\bibnamefont {Popescu}}, \bibinfo {author} {\bibfnamefont {A.~J.}\ \bibnamefont {Short}},\ and\ \bibinfo {author} {\bibfnamefont {A.}~\bibnamefont {Winter}},\ }\bibfield  {title} {\bibinfo {title} {Quantum nonlocality and beyond: Limits from nonlocal computation},\ }\href {https://doi.org/10.1103/PhysRevLett.99.180502} {\bibfield  {journal} {\bibinfo  {journal} {Phys. Rev. Lett.}\ }\textbf {\bibinfo {volume} {99}},\ \bibinfo {pages} {180502} (\bibinfo {year} {2007})}\BibitemShut {NoStop}%
\bibitem [{\citenamefont {Paw{\l}owski}\ \emph {et~al.}(2009)\citenamefont {Paw{\l}owski}, \citenamefont {Paterek}, \citenamefont {Kaszlikowski}, \citenamefont {Scarani}, \citenamefont {Winter},\ and\ \citenamefont {{\.{Z}}ukowski}}]{Pawlowski2009}%
  \BibitemOpen
  \bibfield  {author} {\bibinfo {author} {\bibfnamefont {M.}~\bibnamefont {Paw{\l}owski}}, \bibinfo {author} {\bibfnamefont {T.}~\bibnamefont {Paterek}}, \bibinfo {author} {\bibfnamefont {D.}~\bibnamefont {Kaszlikowski}}, \bibinfo {author} {\bibfnamefont {V.}~\bibnamefont {Scarani}}, \bibinfo {author} {\bibfnamefont {A.}~\bibnamefont {Winter}},\ and\ \bibinfo {author} {\bibfnamefont {M.}~\bibnamefont {{\.{Z}}ukowski}},\ }\bibfield  {title} {\bibinfo {title} {Information causality as a physical principle},\ }\href {https://doi.org/10.1038/nature08400} {\bibfield  {journal} {\bibinfo  {journal} {Nature}\ }\textbf {\bibinfo {volume} {461}},\ \bibinfo {pages} {1101} (\bibinfo {year} {2009})}\BibitemShut {NoStop}%
\bibitem [{\citenamefont {Navascués}\ and\ \citenamefont {Wunderlich}(2009)}]{Navascues2009}%
  \BibitemOpen
  \bibfield  {author} {\bibinfo {author} {\bibfnamefont {M.}~\bibnamefont {Navascués}}\ and\ \bibinfo {author} {\bibfnamefont {H.}~\bibnamefont {Wunderlich}},\ }\bibfield  {title} {\bibinfo {title} {A glance beyond the quantum model},\ }\href {https://doi.org/https://doi.org/10.1098/rspa.2009.0453} {\bibfield  {journal} {\bibinfo  {journal} {Royal Society}\ }\textbf {\bibinfo {volume} {466}},\ \bibinfo {pages} {881} (\bibinfo {year} {2009})}\BibitemShut {NoStop}%
\bibitem [{\citenamefont {Fritz}\ \emph {et~al.}(2013)\citenamefont {Fritz}, \citenamefont {Sainz}, \citenamefont {Augusiak}, \citenamefont {Brask}, \citenamefont {Chaves}, \citenamefont {Leverrier},\ and\ \citenamefont {Ac{\'i}n}}]{Fritz2013}%
  \BibitemOpen
  \bibfield  {author} {\bibinfo {author} {\bibfnamefont {T.}~\bibnamefont {Fritz}}, \bibinfo {author} {\bibfnamefont {A.~B.}\ \bibnamefont {Sainz}}, \bibinfo {author} {\bibfnamefont {R.}~\bibnamefont {Augusiak}}, \bibinfo {author} {\bibfnamefont {J.~B.}\ \bibnamefont {Brask}}, \bibinfo {author} {\bibfnamefont {R.}~\bibnamefont {Chaves}}, \bibinfo {author} {\bibfnamefont {A.}~\bibnamefont {Leverrier}},\ and\ \bibinfo {author} {\bibfnamefont {A.}~\bibnamefont {Ac{\'i}n}},\ }\bibfield  {title} {\bibinfo {title} {Local orthogonality as a multipartite principle for quantum correlations},\ }\href {https://doi.org/10.1038/ncomms3263} {\bibfield  {journal} {\bibinfo  {journal} {Nature Communications}\ }\textbf {\bibinfo {volume} {4}},\ \bibinfo {pages} {2263} (\bibinfo {year} {2013})}\BibitemShut {NoStop}%
\bibitem [{\citenamefont {Navascués}\ \emph {et~al.}(2015)\citenamefont {Navascués}, \citenamefont {Guryanova}, \citenamefont {Hoban},\ and\ \citenamefont {Ac{\'i}n}}]{Navascues2015}%
  \BibitemOpen
  \bibfield  {author} {\bibinfo {author} {\bibfnamefont {M.}~\bibnamefont {Navascués}}, \bibinfo {author} {\bibfnamefont {Y.}~\bibnamefont {Guryanova}}, \bibinfo {author} {\bibfnamefont {M.~J.}\ \bibnamefont {Hoban}},\ and\ \bibinfo {author} {\bibfnamefont {A.}~\bibnamefont {Ac{\'i}n}},\ }\bibfield  {title} {\bibinfo {title} {Almost quantum correlations},\ }\href {https://doi.org/10.1038/ncomms7288} {\bibfield  {journal} {\bibinfo  {journal} {Nature Communications}\ }\textbf {\bibinfo {volume} {6}},\ \bibinfo {pages} {6288} (\bibinfo {year} {2015})}\BibitemShut {NoStop}%
\bibitem [{\citenamefont {Weilenmann}\ and\ \citenamefont {Colbeck}(2020{\natexlab{a}})}]{PhysRevLett.125.060406}%
  \BibitemOpen
  \bibfield  {author} {\bibinfo {author} {\bibfnamefont {M.}~\bibnamefont {Weilenmann}}\ and\ \bibinfo {author} {\bibfnamefont {R.}~\bibnamefont {Colbeck}},\ }\bibfield  {title} {\bibinfo {title} {Self-testing of physical theories, or, is quantum theory optimal with respect to some information-processing task?},\ }\href {https://doi.org/10.1103/PhysRevLett.125.060406} {\bibfield  {journal} {\bibinfo  {journal} {Phys. Rev. Lett.}\ }\textbf {\bibinfo {volume} {125}},\ \bibinfo {pages} {060406} (\bibinfo {year} {2020}{\natexlab{a}})}\BibitemShut {NoStop}%
\bibitem [{\citenamefont {Weilenmann}\ and\ \citenamefont {Colbeck}(2020{\natexlab{b}})}]{PhysRevA.102.022203}%
  \BibitemOpen
  \bibfield  {author} {\bibinfo {author} {\bibfnamefont {M.}~\bibnamefont {Weilenmann}}\ and\ \bibinfo {author} {\bibfnamefont {R.}~\bibnamefont {Colbeck}},\ }\bibfield  {title} {\bibinfo {title} {Toward correlation self-testing of quantum theory in the adaptive {Clauser-Horne-Shimony-Holt} game},\ }\href {https://doi.org/10.1103/PhysRevA.102.022203} {\bibfield  {journal} {\bibinfo  {journal} {Phys. Rev. A}\ }\textbf {\bibinfo {volume} {102}},\ \bibinfo {pages} {022203} (\bibinfo {year} {2020}{\natexlab{b}})}\BibitemShut {NoStop}%
\bibitem [{\citenamefont {Janotta}\ \emph {et~al.}(2011)\citenamefont {Janotta}, \citenamefont {Gogolin}, \citenamefont {Barrett},\ and\ \citenamefont {Brunner}}]{Janotta_2011}%
  \BibitemOpen
  \bibfield  {author} {\bibinfo {author} {\bibfnamefont {P.}~\bibnamefont {Janotta}}, \bibinfo {author} {\bibfnamefont {C.}~\bibnamefont {Gogolin}}, \bibinfo {author} {\bibfnamefont {J.}~\bibnamefont {Barrett}},\ and\ \bibinfo {author} {\bibfnamefont {N.}~\bibnamefont {Brunner}},\ }\bibfield  {title} {\bibinfo {title} {Limits on nonlocal correlations from the structure of the local state space},\ }\href {https://doi.org/10.1088/1367-2630/13/6/063024} {\bibfield  {journal} {\bibinfo  {journal} {New Journal of Physics}\ }\textbf {\bibinfo {volume} {13}},\ \bibinfo {pages} {063024} (\bibinfo {year} {2011})}\BibitemShut {NoStop}%
\bibitem [{\citenamefont {Renou}\ \emph {et~al.}(2021)\citenamefont {Renou}, \citenamefont {Trillo}, \citenamefont {Weilenmann}, \citenamefont {Le}, \citenamefont {Tavakoli}, \citenamefont {Gisin}, \citenamefont {Ac{\'\i}n},\ and\ \citenamefont {Navascu{\'e}s}}]{realQM}%
  \BibitemOpen
  \bibfield  {author} {\bibinfo {author} {\bibfnamefont {M.-O.}\ \bibnamefont {Renou}}, \bibinfo {author} {\bibfnamefont {D.}~\bibnamefont {Trillo}}, \bibinfo {author} {\bibfnamefont {M.}~\bibnamefont {Weilenmann}}, \bibinfo {author} {\bibfnamefont {T.~P.}\ \bibnamefont {Le}}, \bibinfo {author} {\bibfnamefont {A.}~\bibnamefont {Tavakoli}}, \bibinfo {author} {\bibfnamefont {N.}~\bibnamefont {Gisin}}, \bibinfo {author} {\bibfnamefont {A.}~\bibnamefont {Ac{\'\i}n}},\ and\ \bibinfo {author} {\bibfnamefont {M.}~\bibnamefont {Navascu{\'e}s}},\ }\bibfield  {title} {\bibinfo {title} {Quantum theory based on real numbers can be experimentally falsified},\ }\href {https://doi.org/https://doi.org/10.1038/s41586-021-04160-4} {\bibfield  {journal} {\bibinfo  {journal} {Nature}\ }\textbf {\bibinfo {volume} {600}},\ \bibinfo {pages} {625} (\bibinfo {year} {2021})}\BibitemShut {NoStop}%
\bibitem [{\citenamefont {Dmello}\ \emph {et~al.}(2024)\citenamefont {Dmello}, \citenamefont {Ligthart},\ and\ \citenamefont {Gross}}]{PhysRevA.110.022225}%
  \BibitemOpen
  \bibfield  {author} {\bibinfo {author} {\bibfnamefont {L.~J.}\ \bibnamefont {Dmello}}, \bibinfo {author} {\bibfnamefont {L.~T.}\ \bibnamefont {Ligthart}},\ and\ \bibinfo {author} {\bibfnamefont {D.}~\bibnamefont {Gross}},\ }\bibfield  {title} {\bibinfo {title} {Entanglement swapping in generalized probabilistic theories and iterated {C}lauser-{H}orne-{S}himony-{H}olt games},\ }\href {https://doi.org/10.1103/PhysRevA.110.022225} {\bibfield  {journal} {\bibinfo  {journal} {Phys. Rev. A}\ }\textbf {\bibinfo {volume} {110}},\ \bibinfo {pages} {022225} (\bibinfo {year} {2024})}\BibitemShut {NoStop}%
\bibitem [{\citenamefont {Sengupta}(2025)}]{wreo36425}%
  \BibitemOpen
  \bibfield  {author} {\bibinfo {author} {\bibfnamefont {K.}~\bibnamefont {Sengupta}},\ }\bibfield  {title} {\bibinfo {title} {Towards an understanding of quantum and post-quantum correlations in three causal settings}} (\bibinfo {year} {2025}),\ \bibinfo {note} {unpublished}\BibitemShut {NoStop}%
\bibitem [{\citenamefont {Skrzypczyk}\ \emph {et~al.}(2009)\citenamefont {Skrzypczyk}, \citenamefont {Brunner},\ and\ \citenamefont {Popescu}}]{PhysRevLett.102.110402}%
  \BibitemOpen
  \bibfield  {author} {\bibinfo {author} {\bibfnamefont {P.}~\bibnamefont {Skrzypczyk}}, \bibinfo {author} {\bibfnamefont {N.}~\bibnamefont {Brunner}},\ and\ \bibinfo {author} {\bibfnamefont {S.}~\bibnamefont {Popescu}},\ }\bibfield  {title} {\bibinfo {title} {Emergence of quantum correlations from nonlocality swapping},\ }\href {https://doi.org/10.1103/PhysRevLett.102.110402} {\bibfield  {journal} {\bibinfo  {journal} {Phys. Rev. Lett.}\ }\textbf {\bibinfo {volume} {102}},\ \bibinfo {pages} {110402} (\bibinfo {year} {2009})}\BibitemShut {NoStop}%
\bibitem [{\citenamefont {Skrzypczyk}\ and\ \citenamefont {Brunner}(2009)}]{Skrzypczyk_2009}%
  \BibitemOpen
  \bibfield  {author} {\bibinfo {author} {\bibfnamefont {P.}~\bibnamefont {Skrzypczyk}}\ and\ \bibinfo {author} {\bibfnamefont {N.}~\bibnamefont {Brunner}},\ }\bibfield  {title} {\bibinfo {title} {Couplers for non-locality swapping},\ }\href {https://doi.org/10.1088/1367-2630/11/7/073014} {\bibfield  {journal} {\bibinfo  {journal} {New Journal of Physics}\ }\textbf {\bibinfo {volume} {11}},\ \bibinfo {pages} {073014} (\bibinfo {year} {2009})}\BibitemShut {NoStop}%
\bibitem [{\citenamefont {Chiribella}\ \emph {et~al.}(2010)\citenamefont {Chiribella}, \citenamefont {D'Ariano},\ and\ \citenamefont {Perinotti}}]{PhysRevA.81.062348}%
  \BibitemOpen
  \bibfield  {author} {\bibinfo {author} {\bibfnamefont {G.}~\bibnamefont {Chiribella}}, \bibinfo {author} {\bibfnamefont {G.~M.}\ \bibnamefont {D'Ariano}},\ and\ \bibinfo {author} {\bibfnamefont {P.}~\bibnamefont {Perinotti}},\ }\bibfield  {title} {\bibinfo {title} {Probabilistic theories with purification},\ }\href {https://doi.org/10.1103/PhysRevA.81.062348} {\bibfield  {journal} {\bibinfo  {journal} {Phys. Rev. A}\ }\textbf {\bibinfo {volume} {81}},\ \bibinfo {pages} {062348} (\bibinfo {year} {2010})}\BibitemShut {NoStop}%
\bibitem [{\citenamefont {Short}\ \emph {et~al.}(2006)\citenamefont {Short}, \citenamefont {Popescu},\ and\ \citenamefont {Gisin}}]{PhysRevA.73.012101}%
  \BibitemOpen
  \bibfield  {author} {\bibinfo {author} {\bibfnamefont {A.~J.}\ \bibnamefont {Short}}, \bibinfo {author} {\bibfnamefont {S.}~\bibnamefont {Popescu}},\ and\ \bibinfo {author} {\bibfnamefont {N.}~\bibnamefont {Gisin}},\ }\bibfield  {title} {\bibinfo {title} {Entanglement swapping for generalized nonlocal correlations},\ }\href {https://doi.org/10.1103/PhysRevA.73.012101} {\bibfield  {journal} {\bibinfo  {journal} {Phys. Rev. A}\ }\textbf {\bibinfo {volume} {73}},\ \bibinfo {pages} {012101} (\bibinfo {year} {2006})}\BibitemShut {NoStop}%
\bibitem [{\citenamefont {Dall'Arno}\ \emph {et~al.}(2017)\citenamefont {Dall'Arno}, \citenamefont {Brandsen}, \citenamefont {Tosini}, \citenamefont {Buscemi},\ and\ \citenamefont {Vedral}}]{PhysRevLett.119.020401}%
  \BibitemOpen
  \bibfield  {author} {\bibinfo {author} {\bibfnamefont {M.}~\bibnamefont {Dall'Arno}}, \bibinfo {author} {\bibfnamefont {S.}~\bibnamefont {Brandsen}}, \bibinfo {author} {\bibfnamefont {A.}~\bibnamefont {Tosini}}, \bibinfo {author} {\bibfnamefont {F.}~\bibnamefont {Buscemi}},\ and\ \bibinfo {author} {\bibfnamefont {V.}~\bibnamefont {Vedral}},\ }\bibfield  {title} {\bibinfo {title} {No-hypersignaling principle},\ }\href {https://doi.org/10.1103/PhysRevLett.119.020401} {\bibfield  {journal} {\bibinfo  {journal} {Phys. Rev. Lett.}\ }\textbf {\bibinfo {volume} {119}},\ \bibinfo {pages} {020401} (\bibinfo {year} {2017})}\BibitemShut {NoStop}%
\bibitem [{\citenamefont {Dall'Arno}\ \emph {et~al.}(2024)\citenamefont {Dall'Arno}, \citenamefont {Tosini},\ and\ \citenamefont {Buscemi}}]{dallarno2023signaling}%
  \BibitemOpen
  \bibfield  {author} {\bibinfo {author} {\bibfnamefont {M.}~\bibnamefont {Dall'Arno}}, \bibinfo {author} {\bibfnamefont {A.}~\bibnamefont {Tosini}},\ and\ \bibinfo {author} {\bibfnamefont {F.}~\bibnamefont {Buscemi}},\ }\bibfield  {title} {\bibinfo {title} {The signaling dimension in generalized probabilistic theories},\ }\href {https://doi.org/https://doi.org/10.26421/QIC24.5-6-2} {\bibfield  {journal} {\bibinfo  {journal} {Quantum Information and Computation}\ }\textbf {\bibinfo {volume} {24}},\ \bibinfo {pages} {411} (\bibinfo {year} {2024})}\BibitemShut {NoStop}%
\bibitem [{\citenamefont {Clauser}\ \emph {et~al.}(1969)\citenamefont {Clauser}, \citenamefont {Horne}, \citenamefont {Shimony},\ and\ \citenamefont {Holt}}]{PhysRevLett.23.880}%
  \BibitemOpen
  \bibfield  {author} {\bibinfo {author} {\bibfnamefont {J.~F.}\ \bibnamefont {Clauser}}, \bibinfo {author} {\bibfnamefont {M.~A.}\ \bibnamefont {Horne}}, \bibinfo {author} {\bibfnamefont {A.}~\bibnamefont {Shimony}},\ and\ \bibinfo {author} {\bibfnamefont {R.~A.}\ \bibnamefont {Holt}},\ }\bibfield  {title} {\bibinfo {title} {Proposed experiment to test local hidden-variable theories},\ }\href {https://doi.org/10.1103/PhysRevLett.23.880} {\bibfield  {journal} {\bibinfo  {journal} {Phys. Rev. Lett.}\ }\textbf {\bibinfo {volume} {23}},\ \bibinfo {pages} {880} (\bibinfo {year} {1969})}\BibitemShut {NoStop}%
\bibitem [{\citenamefont {Cirel'son}(1980)}]{Cirelson1980}%
  \BibitemOpen
  \bibfield  {author} {\bibinfo {author} {\bibfnamefont {B.~S.}\ \bibnamefont {Cirel'son}},\ }\bibfield  {title} {\bibinfo {title} {Quantum generalizations of {B}ell's inequality},\ }\href {https://doi.org/10.1007/BF00417500} {\bibfield  {journal} {\bibinfo  {journal} {Letters in Mathematical Physics}\ }\textbf {\bibinfo {volume} {4}},\ \bibinfo {pages} {93} (\bibinfo {year} {1980})}\BibitemShut {NoStop}%
\bibitem [{\citenamefont {Barrett}(2007)}]{PhysRevA.75.032304}%
  \BibitemOpen
  \bibfield  {author} {\bibinfo {author} {\bibfnamefont {J.}~\bibnamefont {Barrett}},\ }\bibfield  {title} {\bibinfo {title} {Information processing in generalized probabilistic theories},\ }\href {https://doi.org/10.1103/PhysRevA.75.032304} {\bibfield  {journal} {\bibinfo  {journal} {Phys. Rev. A}\ }\textbf {\bibinfo {volume} {75}},\ \bibinfo {pages} {032304} (\bibinfo {year} {2007})}\BibitemShut {NoStop}%
\bibitem [{\citenamefont {Boyd}\ and\ \citenamefont {Vandenberghe}(2004)}]{boyd2004convex}%
  \BibitemOpen
  \bibfield  {author} {\bibinfo {author} {\bibfnamefont {S.}~\bibnamefont {Boyd}}\ and\ \bibinfo {author} {\bibfnamefont {L.}~\bibnamefont {Vandenberghe}},\ }\href {https://doi.org/10.1017/CBO9780511804441} {\emph {\bibinfo {title} {Convex optimization}}}\ (\bibinfo  {publisher} {Cambridge University Press},\ \bibinfo {year} {2004})\BibitemShut {NoStop}%
\bibitem [{\citenamefont {Lörwald}\ and\ \citenamefont {Reinelt}(2015)}]{Lorwald2015}%
  \BibitemOpen
  \bibfield  {author} {\bibinfo {author} {\bibfnamefont {S.}~\bibnamefont {Lörwald}}\ and\ \bibinfo {author} {\bibfnamefont {G.}~\bibnamefont {Reinelt}},\ }\bibfield  {title} {\bibinfo {title} {{PANDA}: a software for polyhedral transformations},\ }\href {https://doi.org/10.1007/s13675-015-0040-0} {\bibfield  {journal} {\bibinfo  {journal} {EURO Journal on Computational Optimization}\ }\textbf {\bibinfo {volume} {3}},\ \bibinfo {pages} {297} (\bibinfo {year} {2015})}\BibitemShut {NoStop}%
\bibitem [{\citenamefont {Garg}\ and\ \citenamefont {Mermin}(1982)}]{PhysRevLett.49.1220}%
  \BibitemOpen
  \bibfield  {author} {\bibinfo {author} {\bibfnamefont {A.}~\bibnamefont {Garg}}\ and\ \bibinfo {author} {\bibfnamefont {N.~D.}\ \bibnamefont {Mermin}},\ }\bibfield  {title} {\bibinfo {title} {Correlation inequalities and hidden variables},\ }\href {https://doi.org/10.1103/PhysRevLett.49.1220} {\bibfield  {journal} {\bibinfo  {journal} {Phys. Rev. Lett.}\ }\textbf {\bibinfo {volume} {49}},\ \bibinfo {pages} {1220} (\bibinfo {year} {1982})}\BibitemShut {NoStop}%
\bibitem [{\citenamefont {Pitowsky}\ and\ \citenamefont {Svozil}(2001)}]{PhysRevA.64.014102}%
  \BibitemOpen
  \bibfield  {author} {\bibinfo {author} {\bibfnamefont {I.}~\bibnamefont {Pitowsky}}\ and\ \bibinfo {author} {\bibfnamefont {K.}~\bibnamefont {Svozil}},\ }\bibfield  {title} {\bibinfo {title} {Optimal tests of quantum nonlocality},\ }\href {https://doi.org/10.1103/PhysRevA.64.014102} {\bibfield  {journal} {\bibinfo  {journal} {Phys. Rev. A}\ }\textbf {\bibinfo {volume} {64}},\ \bibinfo {pages} {014102} (\bibinfo {year} {2001})}\BibitemShut {NoStop}%
\bibitem [{\citenamefont {Collins}\ and\ \citenamefont {Gisin}(2004)}]{Collins_2004}%
  \BibitemOpen
  \bibfield  {author} {\bibinfo {author} {\bibfnamefont {D.}~\bibnamefont {Collins}}\ and\ \bibinfo {author} {\bibfnamefont {N.}~\bibnamefont {Gisin}},\ }\bibfield  {title} {\bibinfo {title} {A relevant two qubit {B}ell inequality inequivalent to the {CHSH} inequality},\ }\href {https://doi.org/10.1088/0305-4470/37/5/021} {\bibfield  {journal} {\bibinfo  {journal} {Journal of Physics A: Mathematical and General}\ }\textbf {\bibinfo {volume} {37}},\ \bibinfo {pages} {1775} (\bibinfo {year} {2004})}\BibitemShut {NoStop}%
\bibitem [{\citenamefont {Popescu}(2014)}]{Popescu2014}%
  \BibitemOpen
  \bibfield  {author} {\bibinfo {author} {\bibfnamefont {S.}~\bibnamefont {Popescu}},\ }\bibfield  {title} {\bibinfo {title} {Nonlocality beyond quantum mechanics},\ }\href {https://doi.org/10.1038/nphys2916} {\bibfield  {journal} {\bibinfo  {journal} {Nature Physics}\ }\textbf {\bibinfo {volume} {10}},\ \bibinfo {pages} {264} (\bibinfo {year} {2014})}\BibitemShut {NoStop}%
\bibitem [{\citenamefont {Jones}\ and\ \citenamefont {Masanes}(2005)}]{PhysRevA.72.052312}%
  \BibitemOpen
  \bibfield  {author} {\bibinfo {author} {\bibfnamefont {N.~S.}\ \bibnamefont {Jones}}\ and\ \bibinfo {author} {\bibfnamefont {L.}~\bibnamefont {Masanes}},\ }\bibfield  {title} {\bibinfo {title} {Interconversion of nonlocal correlations},\ }\href {https://doi.org/10.1103/PhysRevA.72.052312} {\bibfield  {journal} {\bibinfo  {journal} {Phys. Rev. A}\ }\textbf {\bibinfo {volume} {72}},\ \bibinfo {pages} {052312} (\bibinfo {year} {2005})}\BibitemShut {NoStop}%
\bibitem [{\citenamefont {Short}\ and\ \citenamefont {Barrett}(2010)}]{Short_2010}%
  \BibitemOpen
  \bibfield  {author} {\bibinfo {author} {\bibfnamefont {A.~J.}\ \bibnamefont {Short}}\ and\ \bibinfo {author} {\bibfnamefont {J.}~\bibnamefont {Barrett}},\ }\bibfield  {title} {\bibinfo {title} {Strong nonlocality: a trade-off between states and measurements},\ }\href {https://doi.org/10.1088/1367-2630/12/3/033034} {\bibfield  {journal} {\bibinfo  {journal} {New Journal of Physics}\ }\textbf {\bibinfo {volume} {12}},\ \bibinfo {pages} {033034} (\bibinfo {year} {2010})}\BibitemShut {NoStop}%
\bibitem [{\citenamefont {Eftaxias}\ \emph {et~al.}(2023)\citenamefont {Eftaxias}, \citenamefont {Weilenmann},\ and\ \citenamefont {Colbeck}}]{PhysRevA.108.062212}%
  \BibitemOpen
  \bibfield  {author} {\bibinfo {author} {\bibfnamefont {G.}~\bibnamefont {Eftaxias}}, \bibinfo {author} {\bibfnamefont {M.}~\bibnamefont {Weilenmann}},\ and\ \bibinfo {author} {\bibfnamefont {R.}~\bibnamefont {Colbeck}},\ }\bibfield  {title} {\bibinfo {title} {Multisystem measurements in generalized probabilistic theories and their role in information processing},\ }\href {https://doi.org/10.1103/PhysRevA.108.062212} {\bibfield  {journal} {\bibinfo  {journal} {Phys. Rev. A}\ }\textbf {\bibinfo {volume} {108}},\ \bibinfo {pages} {062212} (\bibinfo {year} {2023})}\BibitemShut {NoStop}%
\bibitem [{\citenamefont {Barnum}\ \emph {et~al.}(2008)\citenamefont {Barnum}, \citenamefont {Barrett}, \citenamefont {Leifer},\ and\ \citenamefont {Wilce}}]{barnum2008teleportation}%
  \BibitemOpen
  \bibfield  {author} {\bibinfo {author} {\bibfnamefont {H.}~\bibnamefont {Barnum}}, \bibinfo {author} {\bibfnamefont {J.}~\bibnamefont {Barrett}}, \bibinfo {author} {\bibfnamefont {M.}~\bibnamefont {Leifer}},\ and\ \bibinfo {author} {\bibfnamefont {A.}~\bibnamefont {Wilce}},\ }\href@noop {} {\bibinfo {title} {Teleportation in general probabilistic theories}} (\bibinfo {year} {2008}),\ \Eprint {https://arxiv.org/abs/0805.3553} {arXiv:0805.3553 [quant-ph]} \BibitemShut {NoStop}%
\bibitem [{\citenamefont {Selby}\ \emph {et~al.}(2023)\citenamefont {Selby}, \citenamefont {Sainz}, \citenamefont {Magron}, \citenamefont {Czekaj},\ and\ \citenamefont {Horodecki}}]{Selby2023correlations}%
  \BibitemOpen
  \bibfield  {author} {\bibinfo {author} {\bibfnamefont {J.~H.}\ \bibnamefont {Selby}}, \bibinfo {author} {\bibfnamefont {A.~B.}\ \bibnamefont {Sainz}}, \bibinfo {author} {\bibfnamefont {V.}~\bibnamefont {Magron}}, \bibinfo {author} {\bibfnamefont {{\L{}}.}~\bibnamefont {Czekaj}},\ and\ \bibinfo {author} {\bibfnamefont {M.}~\bibnamefont {Horodecki}},\ }\bibfield  {title} {\bibinfo {title} {Correlations constrained by composite measurements},\ }\href {https://doi.org/10.22331/q-2023-08-10-1080} {\bibfield  {journal} {\bibinfo  {journal} {{Quantum}}\ }\textbf {\bibinfo {volume} {7}},\ \bibinfo {pages} {1080} (\bibinfo {year} {2023})}\BibitemShut {NoStop}%
\bibitem [{\citenamefont {Sengupta}\ \emph {et~al.}(2025)\citenamefont {Sengupta}, \citenamefont {Weilenmann},\ and\ \citenamefont {Colbeck}}]{sengupta_2025_17520323}%
  \BibitemOpen
  \bibfield  {author} {\bibinfo {author} {\bibfnamefont {K.}~\bibnamefont {Sengupta}}, \bibinfo {author} {\bibfnamefont {M.}~\bibnamefont {Weilenmann}},\ and\ \bibinfo {author} {\bibfnamefont {R.}~\bibnamefont {Colbeck}},\ }\href {https://doi.org/10.5281/zenodo.17520323} {\bibinfo {title} {Supplementary material: Correlation self-testing of quantum theory against generalised probabilistic theories with restricted relabelling symmetry}} (\bibinfo {year} {2025})\BibitemShut {NoStop}%
\bibitem [{\citenamefont {Aubrun}\ \emph {et~al.}(2021)\citenamefont {Aubrun}, \citenamefont {Lami}, \citenamefont {Palazuelos},\ and\ \citenamefont {Pl{\'a}vala}}]{Aubrun2021}%
  \BibitemOpen
  \bibfield  {author} {\bibinfo {author} {\bibfnamefont {G.}~\bibnamefont {Aubrun}}, \bibinfo {author} {\bibfnamefont {L.}~\bibnamefont {Lami}}, \bibinfo {author} {\bibfnamefont {C.}~\bibnamefont {Palazuelos}},\ and\ \bibinfo {author} {\bibfnamefont {M.}~\bibnamefont {Pl{\'a}vala}},\ }\bibfield  {title} {\bibinfo {title} {Entangleability of cones},\ }\href {https://doi.org/10.1007/s00039-021-00565-5} {\bibfield  {journal} {\bibinfo  {journal} {Geometric and Functional Analysis}\ }\textbf {\bibinfo {volume} {31}},\ \bibinfo {pages} {181} (\bibinfo {year} {2021})}\BibitemShut {NoStop}%
\bibitem [{\citenamefont {Barnum}\ \emph {et~al.}(2007)\citenamefont {Barnum}, \citenamefont {Barrett}, \citenamefont {Leifer},\ and\ \citenamefont {Wilce}}]{GPTbroadcasting}%
  \BibitemOpen
  \bibfield  {author} {\bibinfo {author} {\bibfnamefont {H.}~\bibnamefont {Barnum}}, \bibinfo {author} {\bibfnamefont {J.}~\bibnamefont {Barrett}}, \bibinfo {author} {\bibfnamefont {M.}~\bibnamefont {Leifer}},\ and\ \bibinfo {author} {\bibfnamefont {A.}~\bibnamefont {Wilce}},\ }\bibfield  {title} {\bibinfo {title} {Generalized no-broadcasting theorem},\ }\href {https://doi.org/10.1103/PhysRevLett.99.240501} {\bibfield  {journal} {\bibinfo  {journal} {Phys. Rev. Lett.}\ }\textbf {\bibinfo {volume} {99}},\ \bibinfo {pages} {240501} (\bibinfo {year} {2007})}\BibitemShut {NoStop}%
\end{thebibliography}

%

\appendix

\section{Classical Probability Theory as a GPT}
\label{Appendix::ClassicalGPT}

\textit{Classical probability theory} 
can be viewed as a GPT 
in which local tomography of states requires only one fiducial measurement. For instance, if the system is a biased coin, then its state can be specified using only one fiducial measurement, corresponding to observing the outcome when the coin is tossed. To its event space $\{0,1\}$, we associate a probability distribution $\p_{\mathrm{coin}}$ with the property  $\p_{\mathrm{coin}}(0)=p$ and $\p_{\mathrm{coin}}(1)=1-p$ for some $p \in [0,1]$. The extreme states of the coin correspond to the deterministic outcomes when either $p=0$ or $p=1$. The probability state space corresponding to the state of a biased coin can be geometrically represented by a line segment in $\mathbb{R}^2$ with $(1,0)$ and $(0,1)$ as the extreme states. Similarly, for a certain event with 3 outcomes, the extreme states are:
$$
s_1=\begin{pmatrix}
    1\\
    0\\
    0\\
\end{pmatrix}, \quad 
s_2=\begin{pmatrix}
    0\\
    1\\
    0\\
    \end{pmatrix}, \quad 
s_3=\begin{pmatrix}
    0\\
    0\\
    1\\
\end{pmatrix},
$$
i.e., the vertices of a triangle. In general, the state space for a $k$-outcome measurement can be geometrically represented by a $(k-1)$-simplex. The min- and the max-tensor product of state spaces that can be represented by simplices coincide and is also a simplex~\cite{Aubrun2021,GPTbroadcasting}.

\section{Quantum Strategy in the ACHSH Game}

\label{sub::QStrat}

In quantum theory the players can use a strategy where Alice shares a two qubit maximally entangled state $\rho_{AB}$ with Bob and Charlie shares another two qubit maximally entangled state $\rho_{B'C}$ with Bob. Then Bob performs a joint measurement in the Bell basis on his two qubits ($B$ and $B'$). This is an entanglement swapping operation and, for each outcome of the measurement, Alice and Charlie will be left with a maximally entangled state. For example, if the Bell basis is denoted by 
\begin{align}
\begin{split}
\ket{\psi_{00}} &= \frac{1}{\sqrt{2}}\left(\ket{00}_{BB'}+\ket{11}_{BB'}\right),\\
\ket{\psi_{01}} &= \frac{1}{\sqrt{2}}\left(\ket{00}_{BB'}-\ket{11}_{BB'}\right),\\
\ket{\psi_{10}} &= \frac{1}{\sqrt{2}}\left(\ket{01}_{BB'}+\ket{10}_{BB'}\right),\\
\ket{\psi_{11}} &= \frac{1}{\sqrt{2}}\left(\ket{01}_{BB'}-\ket{10}_{BB'}\right),\\
\end{split}
\end{align}
then the resultant state held by Alice and Charlie after the projection $\ket{\psi_{00}}$ is $\frac{1}{\sqrt{2}}\left(\ket{00}_{AC}+\ket{11}_{AC}\right)$ with an associated probability of $\frac{1}{4}$, and so on. Further, denoting $\ket{\theta}=\cos{\theta}\ket{0}+\sin{\theta}\ket{1}$, Alice and Charlie execute the following operations. 
\begin{itemize}
    \item When $X=0$, Alice measures in $\{\ket{0},\ket{\pi}\}$ basis.
    \item When $X=1$, Alice measures in $\{\ket{\pi /2},\ket{3\pi /2}\}$ basis.
    \item When $Z=0$, Charlie measures in $\{\ket{\pi / 4},\ket{5 \pi /4}\}$ basis.
    \item When $Z=1$, Charlie measures in $\{\ket{3 \pi / 4},\ket{7 \pi /4}\}$ basis.
    \item For each measurement, if the first element of the basis is obtained, the outcome ($A$/$C$) is set to 0, otherwise it is set to 1.
\end{itemize}
Using the notation specified in Eq.~\eqref{Eq::TsirelsonNotation}, we obtain
\begin{equation}
    \mathrm{p}\left(A,C|X,Z,B=00) \right)=\frac{1}{4}\left(\begin{array}{cc|cc}
1+\epsilon &1-\epsilon &1-\epsilon & 1+\epsilon\\
1-\epsilon &1+\epsilon & 1+\epsilon &1-\epsilon\\
\hline 
1+\epsilon &1-\epsilon & 1+\epsilon & 1-\epsilon \\
1-\epsilon & 1+\epsilon &1-\epsilon & 1+\epsilon\\
\end{array}\right),
\end{equation}
where $\epsilon= 1/\sqrt{2}$. With this state the players win the CHSH game $a \oplus c= \overline{x}\cdot z $ with a score of $2\left(1+\frac{1}{\sqrt{2}}\right)$. Putting together winning scores for the other outcomes with their associated probabilities, the overall winning probability sums to $\frac{1}{2}\left(1+\frac{1}{\sqrt{2}}\right)$. Since this is the maximum score that can be achieved in any CHSH game, it must be the optimum strategy for the ACHSH game.

\section{Classification of extreme effects of \texorpdfstring{$\mathcal{E}_{\mathbb{H}^{[0]}_{(3,2)}}$}{} and \texorpdfstring{$\mathcal{E}_{\mathbb{H}^{[1344]}_{(3,2)}}$}{}}
\label{Appendix::Effects3Fid}

\begin{table}[H]
    \centering
    \begin{tabular}{|c|c|c|}
    \hline
        Table~\ref{Table::BWEffects3Fid} Effects, $248$ & $\frac{1}{3}\left(
\begin{array}{cc|cc|cc}
 0 & 1 & 0 & 1 & 0 & 0 \\
 0 & 0 & 0 & 0 & 0 & 1 \\ \hline
 0 & 0 & 0 & 0 & 0 & 0 \\
 0 & -1 & 0 & -1 & 1 & 0 \\ \hline
 0 & 0 & 0 & 1 & 0 & 0 \\
 0 & 1 & 0 & 0 & 0 & 0 \\
\end{array}
\right), 576 $ &  $\left(
\begin{array}{cc|cc|cc}
 0 & 1 & 0 & 0 & 0 & 0 \\
 0 & 0 & 1 & 0 & 0 & 0 \\ \hline
 0 & 0 & 0 & 0 & 0 & 0 \\
 0 & -1 & 0 & 1 & 0 & 0 \\ \hline
 0 & 0 & 0 & 0 & 0 & 0 \\
 0 & 0 & 0 & 0 & 0 & 0 \\
\end{array}
\right), 72 $ \\ \hline 
$\frac{1}{5}\left(
\begin{array}{cc|cc|cc}
 0 & 2 & 0 & 0 & 0 & 0 \\
 0 & 0 & 2 & 0 & -1 & 0 \\ \hline
 0 & 0 & 0 & 0 & 0 & 0 \\
 0 & -2 & 0 & 1 & 2 & 0 \\ \hline
 0 & 1 & 1 & 0 & 1 & 0 \\
 0 & 0 & 0 & 1 & 0 & 0 \\
\end{array}
\right), 2304 $ &
        $\frac{1}{3}\left(
\begin{array}{cc|cc|cc}
 0 & 1 & 0 & 0 & 0 & 0 \\
 0 & 0 & 1 & -1 & 0 & 0 \\ \hline
 0 & 0 & 0 & 0 & 0 & 1 \\
 0 & -1 & 0 & 1 & 0 & 0 \\ \hline
 0 & 1 & 0 & 0 & 1 & 0 \\
 0 & 0 & 0 & 1 & 0 & 0 \\
\end{array}
\right), 2304$ & $\frac{1}{3}\left(
\begin{array}{cc|cc|cc}
 0 & 1 & 0 & 0 & 0 & 0 \\
 0 & -1 & 1 & 0 & 0 & 0 \\ \hline
 0 & 1 & 0 & 0 & 0 & 0 \\
 0 & 0 & 0 & 1 & 1 & 0 \\ \hline
 0 & 0 & 1 & 0 & 0 & 0 \\
 0 & 0 & 0 & 0 & 0 & 0 \\
\end{array}
\right), 2304$ \\ \hline 
$\frac{1}{3}\left(
\begin{array}{cc|cc|cc}
 0 & 1 & 0 & 0 & 1 & 0 \\
 0 & 0 & 1 & 0 & 0 & 0 \\ \hline
 0 & 0 & 0 & 0 & 0 & 0 \\
 0 & -1 & 0 & 1 & 1 & 0 \\ \hline
 0 & 0 & 1 & 0 & 0 & 0 \\
 0 & 0 & 0 & 0 & 0 & 0 \\
\end{array}
\right), 2304$ & $\frac{1}{5}\left(
\begin{array}{cc|cc|cc}
 0 & 0 & 0 & 2 & 0 & 0 \\
 0 & -1 & 0 & 0 & 0 & 1 \\ \hline
 0 & 0 & 0 & 0 & 0 & 0 \\
 2 & 0 & 0 & -1 & 0 & 1 \\ \hline
 0 & 0 & 0 & 1 & 1 & 0 \\
 0 & 2 & 0 & 0 & 0 & 1 \\
\end{array}
\right),2304$ &
        $\frac{1}{3}\left(
\begin{array}{cc|cc|cc}
 0 & 1 & 0 & 1 & 0 & 0 \\
 0 & 0 & 0 & 0 & 0 & 0 \\ \hline
 0 & 0 & 0 & 0 & 0 & 0 \\
 1 & 0 & 0 & -1 & 0 & 1 \\ \hline
 1 & 0 & 0 & 0 & 1 & 0 \\
 0 & 0 & 0 & 1 & 0 & 0 \\
\end{array}
\right), 2304$ \\ \hline
 $\frac{1}{3}\left(
\begin{array}{cc|cc|cc}
 0 & 1 & 0 & 0 & 0 & 0 \\
 0 & 0 & 0 & -1 & 0 & 1 \\ \hline
 0 & 1 & 0 & 0 & 0 & 0 \\
 0 & -1 & 0 & 1 & 1 & 0 \\ \hline
 1 & 0 & 0 & 0 & 0 & 1 \\
 0 & 0 & 0 & 1 & 0 & 0 \\
\end{array}
\right),1152$ & $\frac{1}{5}\left(
\begin{array}{cc|cc|cc}
 0 & 2 & 0 & 1 & 0 & 0 \\
 0 & 0 & 1 & 0 & 0 & 1 \\ \hline
 0 & 0 & 0 & 0 & 0 & 2 \\
 2 & 0 & 0 & 1 & 0 & 0 \\ \hline
 0 & 0 & 1 & 0 & 1 & 0 \\
 0 & -1 & 0 & 1 & 0 & 0 \\
\end{array}
\right), 2304$ & $\frac{1}{5}\left(
\begin{array}{cc|cc|cc}
 0 & 0 & 0 & 2 & 1 & 0 \\
 2 & 0 & 0 & 0 & 0 & 0 \\ \hline
 0 & 0 & 0 & 0 & 1 & 0 \\
 1 & 0 & 0 & 1 & 0 & 1 \\ \hline
 0 & 1 & 2 & 0 & 0 & -1 \\
 0 & 0 & 0 & 0 & 0 & 0 \\
\end{array}
\right),2304$ \\ \hline
        $\frac{1}{3}\left(
\begin{array}{cc|cc|cc}
 0 & 1 & 0 & 1 & 0 & 0 \\
 1 & 0 & 0 & 0 & 0 & 0 \\ \hline
 0 & 0 & 0 & 0 & 1 & 0 \\
 1 & 0 & 0 & 1 & 0 & 0 \\ \hline
 0 & 0 & 1 & 0 & 0 & 0 \\
 0 & 0 & 0 & 0 & 0 & 0 \\
\end{array}
\right),2304$ & $\frac{1}{3}\left(
\begin{array}{cc|cc|cc}
 0 & 1 & 0 & 1 & 0 & 0 \\
 0 & 0 & 1 & 0 & 0 & 0 \\ \hline
 0 & 0 & 0 & 0 & 1 & 0 \\
 0 & -1 & 0 & 1 & 0 & 0 \\ \hline
 0 & 0 & 1 & 0 & 0 & 1 \\
 1 & 0 & 0 & 0 & 0 & 0 \\
\end{array}
\right),2304$ & $\frac{1}{3}\left(
\begin{array}{cc|cc|cc}
 0 & 1 & 0 & 1 & 0 & 0 \\
 0 & 0 & 0 & 0 & 0 & 0 \\ \hline
 0 & 0 & 0 & 0 & 1 & 0 \\
 1 & 0 & 1 & 0 & 0 & 0 \\ \hline
 0 & 0 & 1 & 0 & 0 & 0 \\
 1 & 0 & 0 & 0 & 0 & 0 \\
\end{array}
\right),2304$ \\ \hline
$\frac{1}{3}\left(
\begin{array}{cc|cc|cc}
 0 & 0 & 0 & 1 & 1 & 0 \\
 1 & 0 & 0 & 0 & 0 & 0 \\ \hline
 0 & 0 & 0 & 0 & 0 & 1 \\
 1 & 0 & 0 & 0 & 0 & 0 \\ \hline
 0 & 1 & 1 & 0 & 0 & 0 \\
 0 & 0 & 0 & 0 & 0 & 1 \\
\end{array}
\right),576$  & $-$ & $-$ \\ \hline        
    \end{tabular}
     \caption{Effects of $\mathcal{E}_{\mathbb{H}^{[0]}_{(3,2)}}$ up to relabelling. Only the first 248 effects are separable. The number after each effect denotes the size of the class represented by that effect.}
    \label{Table::LocalEffects3Fid}    
\end{table}

\begin{table}[H]
    \centering
    \begin{tabular}{|c|c|c|c|}
    \hline
        $\left(
\begin{array}{cc|cc|cc}
 0 & 0 & 0 & 0 & 0 & 0 \\
 0 & 0 & 0 & 0 & 0 & 0 \\ \hline
 0 & 0 & 0 & 0 & 0 & 0 \\
 0 & 0 & 0 & 0 & 0 & 0 \\ \hline
 0 & 0 & 0 & 0 & 0 & 0 \\
 0 & 0 & 0 & 0 & 0 & 0 \\
\end{array}
\right),{1}$ & $\left(
\begin{array}{cc|cc|cc}
 1 & 0 & 0 & 0 & 0 & 0 \\
 0 & 0 & 0 & 0 & 0 & 0 \\ \hline
 0 & 0 & 0 & 0 & 0 & 0 \\
 0 & 0 & 0 & 0 & 0 & 0 \\ \hline
 0 & 0 & 0 & 0 & 0 & 0 \\
 0 & 0 & 0 & 0 & 0 & 0 \\
\end{array}
\right),{36}$ & $\left(
\begin{array}{cc|cc|cc}
 1 & 0 & 0 & 0 & 0 & 0 \\
 0 & 0 & 0 & 0 & 0 & 0 \\ \hline
 0 & 1 & 0 & 0 & 0 & 0 \\
 0 & 0 & 0 & 0 & 0 & 0 \\ \hline
 0 & 0 & 0 & 0 & 0 & 0 \\
 0 & 0 & 0 & 0 & 0 & 0 \\
\end{array}
\right),{144}$ & $\left(
\begin{array}{cc|cc|cc}
 1 & 0 & 0 & 0 & 0 & 0 \\
 0 & 1 & 0 & 0 & 0 & 0 \\ \hline
 0 & 0 & 0 & 0 & 0 & 0 \\
 0 & 0 & 0 & 0 & 0 & 0 \\ \hline
 0 & 0 & 0 & 0 & 0 & 0 \\
 0 & 0 & 0 & 0 & 0 & 0 \\
\end{array}
\right),{18}$  \\ \hline
       $\left(
\begin{array}{cc|cc|cc}
 1 & 1 & 0 & 0 & 0 & 0 \\
 0 & 0 & 0 & 0 & 0 & 0 \\ \hline
 0 & 0 & 0 & 0 & 0 & 0 \\
 0 & 0 & 0 & 0 & 0 & 0 \\ \hline
 0 & 0 & 0 & 0 & 0 & 0 \\
 0 & 0 & 0 & 0 & 0 & 0 \\
\end{array}
\right),{12}$  & $\left(
\begin{array}{cc|cc|cc}
 0 & 1 & 0 & 0 & 0 & 0 \\
 1 & 1 & 0 & 0 & 0 & 0 \\ \hline
 0 & 0 & 0 & 0 & 0 & 0 \\
 0 & 0 & 0 & 0 & 0 & 0 \\ \hline
 0 & 0 & 0 & 0 & 0 & 0 \\
 0 & 0 & 0 & 0 & 0 & 0 \\
\end{array}
\right),{36}$ & $\left(
\begin{array}{cc|cc|cc}
 1 & 1 & 0 & 0 & 0 & 0 \\
 1 & 1 & 0 & 0 & 0 & 0 \\ \hline
 0 & 0 & 0 & 0 & 0 & 0 \\
 0 & 0 & 0 & 0 & 0 & 0 \\ \hline
 0 & 0 & 0 & 0 & 0 & 0 \\
 0 & 0 & 0 & 0 & 0 & 0 \\
\end{array}
\right),{1}$ & $-$ \\
\hline
    \end{tabular}
   \caption{Extreme effects of $\mathcal{E}_{\mathbb{H}^{[1344]}_{(3,2)}}$ up to relabelling. All effects are separable. Only the first two are ray-extremal, i.e., extreme effects of $\mathcal{E}_{\mathcal{G}_3^2} \ntens  \mathcal{E}_{\mathcal{G}_3^2}$. The number beside an effect denotes the number of effects present in the class represented by that given effect.}
    \label{Table::BWEffects3Fid}
\end{table}

\section{Qubit Quantum Theory as a GPT}
\label{Appendix::QubitQTasGPT}

Qubit quantum theory is a GPT where local tomography requires three fiducial measurements with two outcomes each. Below we show how to connect the density matrix formalism of qubit quantum theory to the probability state formalism. In the density matrix formalism, a qubit is represented by a $2 \times 2$ density matrix, i.e., a positive semi-definite complex matrix with unit trace. The set of all such density matrices $\mathbb{D}(\mathbb{C}^2)$ is a strict subset of the real vector space $\mathbb{M}_{\rm h}(\mathbb{C}^2)$ of $2 \times 2$ Hermitian matrices. The effects are the POVM elements and  represented by a positive semi-definite complex matrix $E$ where $\mathbb{I}-E$ is also positive semi-definite, which ensures $0 \leq \tr[\rho E] \leq 1$ for any $\rho \in \mathbb{D}(\mathbb{C}^2)$. We denote the set of all POVM elements as $\mathcal{E}_{\mathbb{D}(\mathbb{C}^2)}$\footnote{The positive cone of complex square matrices is self-dual (see Example 2.24 of~\cite{boyd2004convex}). Therefore the positive cone generated by $\mathbb{D}(\mathbb{C}^2)$ is the same as the positive cone generated by  $\mathcal{E}_{\mathbb{D}(\mathbb{C}^2)}$.\label{Foot::SelfDual}}.  Since $\mathbb{D}(\mathbb{C}^2)$ and $\mathcal{E}_{\mathbb{D}(\mathbb{C}^2)}$ are both closed convex and compact, they form a well-defined state and effect space pair. The composite state space of two qubits is a subset of the real vector space $\mathbb{M}_{\rm h}(\mathbb{C}^2) \otimes \mathbb{M}_{\rm h}(\mathbb{C}^2)$. The composition rule $\Tilde{\otimes}$ identifies this subset as the set of $4 \times 4$ density matrices. Compositions of multiple qubits can be understood in a similar way. With these, the GPT  $\left(\mathbb{D}(\mathbb{C}^2),\mathcal{E}_{\mathbb{D}(\mathbb{C}^2)}, \Tilde{\otimes} \right)$ describes qubit quantum theory\footnote{Note that the minimal tensor product $\mathbb{D}(\mathbb{C}^2) \ntens \mathbb{D}(\mathbb{C}^2)$ is a strict subset of $\mathbb{D}(\mathbb{C}^2)^{\Tilde{\otimes}2}$ and describes the set of separable states in this formalism.} in the density matrix formalism. An analogous treatment is possible for quantum systems with higher (finite) dimensions.

The probability state formalism can be derived from above in the following way. Given a $2 \times 2$ density matrix, $\rho$, we first fix a set of fiducial measurements. A common choice is 
\begin{equation}
    \{M_x\}_{x \in \{0,1,2\}} \coloneqq \Bigg\{\Big\{\frac{\mathbb{I}+\sigma_{x+1}}{2},\frac{\mathbb{I}-\sigma_{x+1}}{2}\Big\}\Bigg\}_{x \in \{0,1,2\}}
\end{equation}
where 
$$
\sigma_1 = \begin{pmatrix}
    \begin{array}{cc}
        0 & 1 \\
        1 & 0
    \end{array}
\end{pmatrix}\!\,, \sigma_2=\begin{pmatrix}\begin{array}{cc}
        0 & -i \\
        i & 0
    \end{array}    
\end{pmatrix} \text{ and } \sigma_3= \begin{pmatrix}
    \begin{array}{cc}
        1 & 0 \\
        0 & -1
    \end{array}
\end{pmatrix}
$$
denote the Pauli matrices. The state tomography of $\rho$ with this choice of measurement then leads to an alternative representation
\begin{equation}
    \rho \longmapsto \frac{1}{2}\begin{pmatrix}
          \tr\left[ (\mathbb{I}+\sigma_1)\rho\right]\\
          \tr\left[ (\mathbb{I}-\sigma_1)\rho\right]\\
          \hline
          \tr\left[ (\mathbb{I}+\sigma_2)\rho\right]\\
          \tr\left[ (\mathbb{I}-\sigma_2)\rho\right]\\
          \hline
          \tr\left[ (\mathbb{I}+\sigma_3)\rho\right]\\
          \tr\left[ (\mathbb{I}-\sigma_3)\rho\right]\\
      \end{pmatrix} 
 = \begin{pmatrix}
          p(0|0)  \\
         p(1|0)  \\
        \hline
         p(0|1)  \\
         p(1|1)  \\
        \hline
         p(0|2)  \\
         p(1|2)  \\
      \end{pmatrix} \eqqcolon  \p_{\rho}.
      \label{eq::StateTom}
\end{equation}
We call $\mathrm{p}_{\rho}$ the \emph{probability state} associated with $\rho$ and denote the set of all probability states obtained upon performing state tomography on qubits using Pauli measurements, $\mathbb{P}[\mathbb{D}(\mathbb{C}^2)]$ the \textit{probability state space} of a qubit. That this set is convex and compact follows from the convexity and compactness of $\mathbb{D}(\mathbb{C}^2)$ and the linearity of the map. The composite probability state space for two qubits can be similarly derived by performing local tomography with all pairs of fiducial measurements on all bipartite states in $\mathbb{D}(\mathbb{C}^2)^{\Tilde{\otimes}2}$. We denote this joint state space as $\mathbb{P}[\mathbb{D}(\mathbb{C}^2)^{\Tilde{\otimes}2}]$. A chain of inclusions  $\mathbb{P}[\mathbb{D}(\mathbb{C}^2)] \ntens  \mathbb{P}[\mathbb{D}(\mathbb{C}^2)]  \subset \mathbb{P}[\mathbb{D}(\mathbb{C}^2)^{\Tilde{\otimes} 2}] \subset \mathbb{P}[\mathbb{D}(\mathbb{C}^2)] \xtens  \mathbb{P}[\mathbb{D}(\mathbb{C}^2)]$ holds since probability states corresponding to entangled qubits are not necessarily separable and $\mathbb{P}[\mathbb{D}(\mathbb{C}^2)^{\Tilde{\otimes} 2}]$ is not the maximal state space when only separable effects are considered.

\section{Construction of the Effect Polytope}

\subsection{Construction of the Effect Polytope of \texorpdfstring{$\mathbb{H}^{[1]}_{\alpha (2,2)}[\pr_2]$}{}}
\label{Appendix::EffectPoly}

The extreme effects of the local state space $\mathbb{H}^{[0]}_{ (2,2)}$ can be categorised into 8 equivalence classes based on relabelling symmetries. A representative from each class are given below:
$$
e_{0} \coloneqq \left(
\begin{array}{cc|cc}
 0 & 0 & 0 & 0 \\
 0 & 0 & 0 & 0 \\ \hline
 0 & 0 & 0 & 0 \\
 0 & 0 & 0 & 0 \\
\end{array}
\right),
e_{\rm Class I} \coloneqq \left(
\begin{array}{cc|cc}
 1 & 0 & 0 & 0 \\
 0 & 0 & 0 & 0 \\ \hline
 0 & 0 & 0 & 0 \\
 0 & 0 & 0 & 0 \\
\end{array}
\right),  e_{\rm Class II} \coloneq \left(
\begin{array}{cc|cc}
 1 & 0 & 0 & 0 \\
 0 & 1 & 0 & 0 \\ \hline
 0 & 0 & 0 & 0 \\
 0 & 0 & 0 & 0 \\
\end{array}
\right), e_{\rm Class III} \coloneq \left(
\begin{array}{cc|cc}
 1 & 0 & 0 & 0 \\
 0 & 0 & 0 & 0 \\ \hline
 0 & 1 & 0 & 0 \\
 0 & 0 & 0 & 0 \\
\end{array}
\right), $$

$$ e_{\rm Class IV} \coloneq  \left(
\begin{array}{cc|cc}
 1 & 1 & 0 & 0 \\
 0 & 0 & 0 & 0 \\ \hline
 0 & 0 & 0 & 0 \\
 0 & 0 & 0 & 0 \\
\end{array}
\right),
e_{\rm Class V} \coloneq \left(
\begin{array}{cc|cc}
 0 & 1 & 0 & 0 \\
 1 & 1 & 0 & 0 \\ \hline
 0 & 0 & 0 & 0 \\
 0 & 0 & 0 & 0 \\
\end{array}
\right), e_{\rm CH_{2}} = \left(
\begin{array}{cc|cc}
 0 & 0 & 0 & 0 \\
 0 & -1 & 0 & 1 \\ \hline
 0 & 0 & 1 & 0 \\
 0 & 1 & 0 & 0 \\
\end{array}
\right),
u = \left(
\begin{array}{cc|cc}
 1 & 1 & 0 & 0 \\
 1 & 1 & 0 & 0 \\ \hline
 0 & 0 & 0 & 0 \\
 0 & 0 & 0 & 0 \\
\end{array}
\right),
$$
 One can generate the rest of the effects from each class by applying all relabelling symmetries followed by discarding duplicates. There are 16 effects in Class I, 8 in Class II, 32 in Class III, 8 in Class IV, 16 in Class V and 8 CH type effects. These constitute the 90 extreme effects of the local state space. Effects in Class I have their complementary effect in Class V, for instance $e_{\rm Class I}+e_{\rm Class V}=u$. For the remaining classes, the complementary effects are in the same class, apart from the zero effect $e_0$ whose complementary effect is the unit effect $u$. Furthermore, when the CH type effects are removed from this list, what remains are the extreme BW effects introduced in~\ref{Subsection::ExampleBW} (for each of the CH effects, there is a PR box whose inner product with that effect is not between $0$ and $1$, but the other extremal effects of $\mathbb{H}^{[0]}_{(2,2)}$ remain valid in the BW state space $\mathbb{H}^{[8]}_{(2,2)}$).
 
\begin{table}[H]
    \centering
    \begin{tabular}{|c|c|c|c|c|c|c|c|c|c|c|c|c|c|}
    \hline
         $\phantom{\Big(}$ & \multicolumn{2}{c|}{Class I}    & \multicolumn{2}{c|} {Class II} & \multicolumn{3}{c|}{Class III} & \multicolumn{1}{c|}{Class IV} & \multicolumn{2}{c|}{Class V}& \multicolumn{3}{c|}{Class CH} \\ \hline
        $\left<\cdot,\pr_{2,\alpha}\right>$ & $ \phantom{\Big(} \frac{1-\alpha}{4} \phantom{\Big(}$ & $\phantom{\Big(}\frac{1+\alpha}{4} \phantom{\Big(}$ & $\phantom{\Big(} \frac{1-\alpha}{2} \phantom{\Big(}$ & $\phantom{\Big(} \frac{1+\alpha}{2} \phantom{\Big(}$ & $\phantom{\Big(} \frac{1-\alpha}{2} \phantom{\Big(}$ & $\phantom{\Big(} \frac{1+\alpha}{2} \phantom{\Big(}$ & $\phantom{\Big(} \frac{1}{2} \phantom{\Big(}$ & $\phantom{\Big(} \frac{1}{2} \phantom{\Big(}$ &  $\phantom{\Big(} \frac{3-\alpha}{4} \phantom{\Big(}$ & $\phantom{\Big(} \frac{3+\alpha}{4} \phantom{\Big(}$ & $\phantom{\Big(} \frac{1}{2} \phantom{\Big(}$ & $\phantom{\Big(} \frac{1+2\alpha}{2} \phantom{\Big(}$  & $\phantom{\Big(} \frac{1-2\alpha}{2} \phantom{\Big(}$  \\
         \hline
       $\phantom{\Big(}$ Count $\phantom{\Big(}$ & $8$ & $8$ & $4$ & $4$ & $8$ & $8$ & $16$ & $8$ & $8$ & $8$ & $6$ & $1$ & $1$  \\
         \hline
    \end{tabular}
    \caption{Inner products between $\mathrm{PR}_{2,\alpha}$ and the extreme effects of $\mathbb{H}^{[0]}_{(2,2)}$ (excluding the zero and unit effect). When $\alpha> 1/2$, two CH type effects give inner products outside the interval $[0,1]$. All remaining extreme effects give inner products inside $[0,1]$.}
    \label{Table:InnerProdEffectPoly}
\end{table}

We find the extreme effects of the state space $\mathbb{H}^{[1]}_{\alpha (2,2)}[\pr_2]$ by starting with the extreme effects for $\mathbb{H}^{[0]}_{(2,2)}$, finding the extremal effects that become invalid because of the additional extremal state, and then using these to construct the new extremal effects that replace the removed ones (if any). This corresponds to the following sequence of steps.
\begin{itemize}
    \item \textbf{Step 1:} Consider the hyperplanes $\left<\mathbf{x},\pr_{2,\alpha}\right>=0$  and $\left<\mathbf{x},\pr_{2,\alpha}\right>=1$ and define the set of discarded elements
    $$
    \mathcal{E}_{\mathrm{disc}} \coloneq \bigg\{e \in \mathrm{Extreme}\left[\mathcal{E}_{\mathbb{H}^{[0]}_{(2,2)}}\right]\ |\ \left<e,\pr_{2,\alpha}\right> \notin [0,1] \bigg\}
    $$
    \item \textbf{Step 2:} For each $e \in \mathcal{E}_{\mathrm{disc}}$ and $f \in \mathrm{Extreme}\left[\mathcal{E}_{\mathbb{H}^{[0]}_{(2,2)}}\right]$, construct the line segment $l_{e,f}(w) \coloneq w e + (1-w)f$, where $w\in[0,1]$.
     For each $l_{e,f}$, calculate $w'$ such that either $\left<l_{e,f}(w'),\pr_{2,\alpha}\right>=0$  or $\left<l_{e,f}(w'),\pr_{2,\alpha}\right>=1$ and store $l_{e,f}(w')$ in $\mathcal{E}_0$ or $\mathcal{E}_1$, respectively.
    \item \textbf{Step 3:} Select an element from $\mathcal{E}_{0/1}$ and try to represent it as a convex combination of other elements of that set.
        To do this, we first take the element and evaluate it for a discrete set of values of $\alpha$. For each value of $\alpha$, we run a linear program to check whether this element is equal to a convex combination the other elements of the set $\mathcal{E}_{0/1}$ for that $\alpha$. 
        \begin{enumerate}
            \item If a convex combination is found, we note the convex weights and the associated decomposition. Next, we interpolate the values of these convex weights to guess their analytic forms, which we can confirm by substitution. However, this does not guarantee that the elements for which we could not find a convex decomposition are extreme. We prove this using the next step.
            \item If a convex decomposition is not found, we construct a hyperplane as a function of $\alpha$ that separates this element from the rest of the set, showing that the chosen element is extremal for all $\alpha$.
        \end{enumerate} 
    
    \item \textbf{Step 4:} Take the union of extreme elements of $\mathcal{E}_0,\mathcal{E}_1$ and the effects $ \mathrm{Extreme}\left[\mathcal{E}_{\mathbb{H}^{[0]}_{(2,2)}}\right] \setminus \mathcal{E}_{\mathrm{disc}}$.
\end{itemize}
By following these steps we found that the extreme effects in the set $\mathcal{E}_1$ up to equivalence of relabelling symmetries are 
$$
e_1(\text{Type 1}) \coloneqq  \frac{1-\alpha}{\alpha} e_{\rm CH_2} +\left(1-\frac{1-\alpha}{\alpha}\right)\left(
\begin{array}{cc|cc}
 0 & 1 & 0 & 0 \\
 1 & 0 & 0 & 0 \\ \hline
 0 & 0 & 0 & 0 \\
 0 & 0 & 0 & 0 \\
\end{array}
\right), \ \ e_{1'}(\text{Type } 1') \coloneqq  \frac{1-\alpha}{\alpha} e_{\rm CH_2} +\left(1-\frac{1-\alpha}{\alpha}\right)
\left(
\begin{array}{cc|cc}
 0 & 1 & 0 & 0 \\
 0 & 0 & 0 & 0 \\ \hline
 1 & 0 & 0 & 0 \\
 0 & 0 & 0 & 0 \\
\end{array}
\right),
$$
$$
  e_2 (\text{Type 2}) \coloneq \frac{1-\alpha}{3\alpha-1}e_{\rm CH_2} + \left(1- \frac{1-\alpha}{3\alpha-1}\right) \left(
\begin{array}{cc|cc}
 1 & 1 & 0 & 0 \\
 1 & 0 & 0 & 0 \\ \hline
 0 & 0 & 0 & 0 \\
 0 & 0 & 0 & 0 \\
\end{array}
\right),\ \ e_3 (\text{Type 3}) \coloneq \frac{3-\alpha}{3\alpha+1} e_{\rm CH_2} + \left(1-\frac{3-\alpha}{3\alpha+1}\right)\left(
\begin{array}{cc|cc}
 0 & 1 & 0 & 0 \\
 0 & 0 & 0 & 0 \\ \hline
 0 & 0 & 0 & 0 \\
 0 & 0 & 0 & 0 \\
\end{array}
\right) 
$$
$$
\text{ and } e_4 (\text{Type 4}) \coloneq  2/(1+2\alpha) e_{\rm CH_2}.
$$
\noindent
The remaining extreme effects can be found by applying all relabelling symmetries to each of those presented above and checking that the inner product with $\pr_{2,\alpha}$ is 1. The extreme effects in the set $\mathcal{E}_0$, up to equivalence of relabelling symmetries, are the complementary effects of the effects in $\mathcal{E}_1$. We present more details below.

\textbf{Step 1:} The only extreme effects of $\mathcal{E}_{\mathbb{H}^{[0]}_{(2,2)}}$ that give an inner product outside the interval $[0,1]$ with $\pr_{2,\alpha}$ are two CH type effects, see Table~\ref{Table:InnerProdEffectPoly}. In particular, 
$$
e_{\mathrm{CH}_2}=\left(
\begin{array}{cc|cc}
 0 & 0 & 0 & 0 \\
 0 & -1 & 0 & 1 \\ \hline
 0 & 0 & 1 & 0 \\
 0 & 1 & 0 & 0 \\
\end{array}
\right) \quad \text{and} \quad
e_{\mathrm{CH}_2}'=\left(
\begin{array}{cc|cc}
 0 & 0 & 0 & 1 \\
 0 & 1 & 0 & 0 \\ \hline
 0 & 0 & 0 & 0 \\
 0 & -1 & 1 & 0 \\
\end{array}
\right).
$$

\textbf{Step 2:} The addition of $\pr_{2,\alpha}$ introduces two hyperplanes through the polytope $\mathcal{E}_{\mathbb{H}^{[0]}_{ (2,2)}}$, given by $\left<\mathbf{x}.\pr_{2,\alpha}\right>=0$ and $\left<\mathbf{x}.\pr_{2,\alpha}\right>=1$. The set $\mathcal{E}_{\mathbb{H}^{[1]}_{\alpha (2,2)}} [\pr_2]$ can be characterised as:
\begin{equation}
    \mathcal{E}_{\mathbb{H}^{[1]}_{\alpha (2,2)}} = \Big\{ e \in \mathcal{E}_{\mathbb{H}^{[0]}_{ (2,2)}} |\  0 \leq \left<e,\pr_{2,\alpha}\right> \leq 1 \Big\}
\end{equation}
i.e., the set confined in the inner half-spaces of the two hyperplanes above. The extreme effects of this polytope can be collected in two groups based on whether they are lying on the hyperplanes or not. When $\alpha \in (1/2,1)$, none of the extreme effects from Table~\ref{Table:InnerProdEffectPoly} lie on the hyperplanes and hence are extreme. To find the extreme effects lying on the hyperplanes, one can draw line segments between $e_{\rm CH_2}/e_{\rm CH_2}'$  and effects lying inside the hyperplanes and find all the points of intersection between the line segments and the hyperplanes. One can then find the convex hull of the set of intersection points on each hyperplane and find which points are extreme. The union of the extreme points found in this manner from each hyperplane constitute the remaining extreme effects of $\mathcal{E}_{\mathbb{H}^{[1]}_{\alpha (2,2)}[\pr_2]}$.  

For the hyperplane $\left<\mathbf{x}.\pr_{2,\alpha}\right>=1$, we consider line segments of the form $w(\alpha) e_{\rm CH_2} + (1- w(\alpha)) f$ where $f$ is an extreme local effect and calculate the weight $w(\alpha)$ such that  corresponding effect will lie on the hyperplane. Table~\ref{Table::WeightsPoly} summarises these weights alongside the extreme local effects such that the corresponding effect will lie on the hyperplane.

 \begin{table}[H]
     \centering
     \begin{tabular}{|c|c|c|c|c|c|c|c|c|}
     \hline
        $\phantom{\Big(} w(\alpha)\phantom{\Big(}$   &   $\phantom{\Big(}$  Class I $\phantom{\Big(}$&   $\phantom{\Big(}$ Class II $\phantom{\Big(}$ &    $\phantom{\Big(}$  Class III  $\phantom{\Big(}$ &   $\phantom{\Big(}$ Class IV $\phantom{\Big(}$ &   $\phantom{\Big(}$  Class V $\phantom{\Big(}$ &  $\phantom{\Big(}$  Class CH $\phantom{\Big(}$ &   $\phantom{\Big(}$ zero $\phantom{\Big(}$ &   $\phantom{\Big(}$ unit $\phantom{\Big(}$ \\  \hline        
         $\phantom{\Big(} \frac{1-\alpha}{\alpha} \phantom{\Big(}$ & $-$ & 4 & 8 & $-$ & $-$ & $-$ & $-$  & $-$  \\ 
          \hline
         $\phantom{\Big(} \frac{1-\alpha}{3\alpha -1} \phantom{\Big(} $ & $-$ & $-$ & $-$ & $-$ & 8 & $-$ & $-$ & $-$  \\ 
          \hline
         $\phantom{\Big(} \frac{3-\alpha}{3\alpha+1} \phantom{\Big(}$ & 8 & $-$ & $-$ & $-$ & $-$ & $-$ & $-$ & $-$  \\ 
          \hline 
           $\phantom{\Big(} \frac{2}{1+2\alpha} \phantom{\Big(}$ & $-$ & $-$ &$-$  & $-$ & $-$ & $-$ & 1 & $-$ \\ 
          \hline
         $\phantom{\Big(} \frac{3+\alpha}{1+5\alpha} \phantom{\Big(}$ & 8 & $-$ & $-$ & $-$ & $-$ & $-$ & $-$ & $-$  \\ 
          \hline
         $\phantom{\Big(} \frac{1+\alpha}{3\alpha} \phantom{\Big(}$ & $-$ & 4 & 8 & $-$ & $-$ & $-$ & $-$ & $-$  \\ 
          \hline
        $\phantom{\Big(} \frac{1}{2\alpha} \phantom{\Big(}$  & $-$ & $-$ & 16 & 8 & $-$ & 6 & $-$ & $-$  \\ 
          \hline
          $\phantom{\Big(} \frac{1+\alpha}{5\alpha-1} \phantom{\Big(}$ &  $-$ & $-$ & $-$ & $-$ & 8 & $-$ & $-$ & $-$  \\ 
          \hline
         $\phantom{\Big(} 0 \phantom{\Big(}$ & $-$ & $-$ & $-$ & $-$ & $-$ & $-$ & $-$ & 1  \\ 
          \hline         
     \end{tabular}
     \caption{Table summarises weights $w(\alpha)$ on $e_{\rm CH_2}$ such that $\left<w(\alpha) e_{\rm CH_2} + (1- w(\alpha)) f,\pr_{2,\alpha} \right>=1$, where $f$ is an extreme local effect. The numbers denote how many extreme local effects combine with $e_{\rm CH_2}$ with the corresponding weight to generate an effect on the hyperplane $\left<\mathbf{x}.\pr_{2,\alpha}\right>=1$.}
     \label{Table::WeightsPoly}
 \end{table}

\textbf{Step 3.1:} Not all the effects deduced from this procedure are extreme. We found that all the effects corresponding to weights $(3+\alpha)/(1+5\alpha), (1+\alpha)/3\alpha, 1/2\alpha$ and $(1+\alpha)/(5\alpha-1)$ can be written as a convex sums of effects obtained from the first four rows of Table~\ref{Table::WeightsPoly}. From each weight and class combination we pick one effect and show their convex decompositions below. Note that the matrix representation we use for effects has a redundancy meaning that there are many matrices that represent the same effect. We use $\nseq$ to represent an equivalence of the effects on the left and right where the matrices themselves may not satisfy the equality (such an equivalence can be checked, for instance, by computing the inner product between the representation and every local deterministic distribution, see e.g.~\cite{PhysRevA.108.062212} for more detail).
\begin{equation}
    \frac{3+\alpha}{1+5\alpha} e_{\rm CH_2} +  \frac{4\alpha-2}{1+5\alpha} e_{\rm Class I}\  \nseq\  \frac{2\alpha}{1+5\alpha} \left(
\begin{array}{cc|cc}
 0 & 0 & 0 & 0 \\
 0 & \frac{\alpha -1}{\alpha } & 0 & \frac{1-\alpha }{\alpha } \\ \hline
 1-\frac{1-\alpha }{\alpha } & 0 & \frac{1-\alpha }{\alpha } & 0 \\
 0 & \frac{1-\alpha }{\alpha } & 0 & 1-\frac{1-\alpha }{\alpha } \\
\end{array}
\right) + \frac{3 \alpha +1}{5 \alpha +1} \left(
\begin{array}{cc|cc}
 0 & 0 & 1-\frac{3-\alpha }{3 \alpha +1} & 0 \\
 0 & \frac{\alpha -3}{3 \alpha +1} & 0 & \frac{3-\alpha }{3 \alpha +1} \\ \hline
 0 & 0 & \frac{3-\alpha }{3 \alpha +1} & 0 \\
 0 & \frac{3-\alpha }{3 \alpha +1} & 0 & 0 \\
\end{array}
\right)
\end{equation}
To see that the effects on the right are indeed arise from the first four rows, notice that
$$
\left(
\begin{array}{cc|cc}
 0 & 0 & 0 & 0 \\
 0 & \frac{\alpha -1}{\alpha } & 0 & \frac{1-\alpha }{\alpha } \\ \hline
 1-\frac{1-\alpha }{\alpha } & 0 & \frac{1-\alpha }{\alpha } & 0 \\
 0 & \frac{1-\alpha }{\alpha } & 0 & 1-\frac{1-\alpha }{\alpha } \\
\end{array}
\right) = \frac{1-\alpha}{\alpha} e_{\rm CH_2} + \left(1-\frac{1-\alpha}{\alpha}\right) \left(
\begin{array}{cc|cc}
 0 & 0 & 0 & 0 \\
 0 & 0 & 0 & 0 \\ \hline
 1 & 0 & 0 & 0 \\
 0 & 0 & 0 & 1 \\
\end{array}
\right) \quad \text{and}
$$
$$
\left(
\begin{array}{cc|cc}
 0 & 0 & 1-\frac{3-\alpha }{3 \alpha +1} & 0 \\
 0 & \frac{\alpha -3}{3 \alpha +1} & 0 & \frac{3-\alpha }{3 \alpha +1} \\ \hline
 0 & 0 & \frac{3-\alpha }{3 \alpha +1} & 0 \\
 0 & \frac{3-\alpha }{3 \alpha +1} & 0 & 0 \\
\end{array}
\right) =\frac{3-\alpha }{3 \alpha +1} e_{\rm CH_2} + \left(1-\frac{3-\alpha }{3 \alpha +1}\right) \left(
\begin{array}{cc|cc}
 0 & 0 & 1 & 0 \\
 0 & 0 & 0 & 0 \\ \hline
 0 & 0 & 0 & 0 \\
 0 & 0 & 0 & 0 \\
\end{array}
\right), 
$$
where the local effect with all 0 entries except for two 1s is from Class III and the one with all 0 entries apart from one 1 is from Class I. For the remaining effects we omit this decomposition which can be readily deduced from the form of the matrices.
 \begin{align}
 \begin{split}
     \frac{1+\alpha}{\alpha} e_{\rm CH_2} +\frac{2\alpha - 1}{3\alpha} e_{\rm Class II} &\  \nseq\  \frac{1}{3}\left(
\begin{array}{cc|cc}
 0 & 0 & 1-\frac{1-\alpha }{\alpha } & 0 \\
 0 & \frac{\alpha -1}{\alpha } & 0 & 1 \\ \hline
 0 & 0 & \frac{1-\alpha }{\alpha } & 0 \\
 0 & \frac{1-\alpha }{\alpha } & 0 & 0 \\
\end{array}
\right) + \frac{1}{3} \left(
\begin{array}{cc|cc}
 0 & 0 & 0 & 0 \\
 0 & \frac{\alpha -1}{\alpha } & 0 & \frac{1-\alpha }{\alpha } \\ \hline
 1-\frac{1-\alpha }{\alpha } & 0 & \frac{1-\alpha }{\alpha } & 0 \\
 0 & 1 & 0 & 0 \\
\end{array}
\right)\\
&+ \frac{1}{3}\left(
\begin{array}{cc|cc}
 0 & 0 & 0 & 0 \\
 0 & \frac{\alpha -1}{\alpha } & 0 & \frac{1-\alpha }{\alpha } \\ \hline
 0 & 0 & 1 & 0 \\
 0 & \frac{1-\alpha }{\alpha } & 0 & 1-\frac{1-\alpha }{\alpha } \\
\end{array}
\right)\\
\end{split}
 \end{align}

\begin{align}
    \begin{split}
         \frac{1+\alpha}{\alpha} e_{\rm CH_2} +\frac{2\alpha - 1}{3\alpha} e_{\rm Class III} &\  \nseq\  \ \frac{1}{3}\left(
\begin{array}{cc|cc}
 0 & 1-\frac{1-\alpha }{\alpha } & 0 & 0 \\
 0 & \frac{\alpha -1}{\alpha } & 0 & \frac{1-\alpha }{\alpha } \\ \hline
 1-\frac{1-\alpha }{\alpha } & 0 & \frac{1-\alpha }{\alpha } & 0 \\
 0 & \frac{1-\alpha }{\alpha } & 0 & 0 \\
\end{array}
\right) + \frac{1}{3}\left(
\begin{array}{cc|cc}
 0 & 0 & 1-\frac{1-\alpha }{\alpha } & 0 \\
 0 & \frac{\alpha -1}{\alpha } & 0 & 1 \\ \hline
 0 & 0 & \frac{1-\alpha }{\alpha } & 0 \\
 0 & \frac{1-\alpha }{\alpha } & 0 & 0 \\
\end{array}
\right) \\
         &+ \frac{1}{3}\left(
\begin{array}{cc|cc}
 0 & 0 & 0 & 0 \\
 0 & \frac{\alpha -1}{\alpha } & 0 & \frac{1-\alpha }{\alpha } \\ \hline
 0 & 0 & 1 & 0 \\
 0 & \frac{1-\alpha }{\alpha } & 0 & 1-\frac{1-\alpha }{\alpha } \\
\end{array}
\right)
    \end{split}
\end{align}
\begin{align}    
         \frac{1}{2\alpha} e_{\rm CH_2} +\frac{2\alpha - 1}{2\alpha} \left(
\begin{array}{cc|cc}
 0 & 1 & 0 & 0 \\
 0 & 0 & 1 & 0 \\ \hline
 0 & 0 & 0 & 0 \\
 0 & 0 & 0 & 0 \\
\end{array}
\right) &\  \nseq\  \ \frac{1}{2}\left(
\begin{array}{cc|cc}
 0 & 1-\frac{1-\alpha }{\alpha } & 0 & 0 \\
 1-\frac{1-\alpha }{\alpha } & \frac{\alpha -1}{\alpha } & 0 & \frac{1-\alpha }{\alpha } \\ \hline
 0 & 0 & \frac{1-\alpha }{\alpha } & 0 \\
 0 & \frac{1-\alpha }{\alpha } & 0 & 0 \\
\end{array}
\right)+ \frac{1}{2} \left(
\begin{array}{cc|cc}
 0 & 0 & 0 & 0 \\
 0 & \frac{\alpha -1}{\alpha } & 0 & \frac{1-\alpha }{\alpha } \\ \hline
 0 & 0 & 1 & 0 \\
 0 & 1 & 0 & 0 \\
\end{array}
\right)   
\end{align}
\begin{equation}
    \frac{1}{2\alpha} e_{\rm CH_2} +\frac{2\alpha - 1}{2\alpha} e_{\rm Class IV} \  \nseq\  \ \frac{1}{2}\left(
\begin{array}{cc|cc}
 0 & 1-\frac{1-\alpha }{\alpha } & 0 & 0 \\
 0 & \frac{\alpha -1}{\alpha } & 0 & \frac{1-\alpha }{\alpha } \\ \hline
 1-\frac{1-\alpha }{\alpha } & 0 & \frac{1-\alpha }{\alpha } & 0 \\
 0 & \frac{1-\alpha }{\alpha } & 0 & 0 \\
\end{array}
\right) + \frac{1}{2}\left(
\begin{array}{cc|cc}
 0 & 0 & 1-\frac{1-\alpha }{\alpha } & 0 \\
 0 & \frac{\alpha -1}{\alpha } & 0 & \frac{1-\alpha }{\alpha } \\ \hline
 0 & 0 & \frac{1-\alpha }{\alpha } & 0 \\
 0 & \frac{1-\alpha }{\alpha } & 0 & 1-\frac{1-\alpha }{\alpha } \\
\end{array}
\right)
\end{equation}
\begin{equation}
     \frac{1}{2\alpha} e_{\rm CH_2} +\frac{2\alpha - 1}{2\alpha} e_{\rm CH_1} \  \nseq\  \ \frac{1}{2}\left(
\begin{array}{cc|cc}
 0 & 0 & 1-\frac{1-\alpha }{\alpha } & 0 \\
 0 & \frac{\alpha -1}{\alpha } & 0 & 1 \\ \hline
 0 & 0 & \frac{1-\alpha }{\alpha } & 0 \\
 0 & \frac{1-\alpha }{\alpha } & 0 & 0 \\
\end{array}
\right) + \frac{1}{2}\left(
\begin{array}{cc|cc}
 0 & 0 & 0 & 0 \\
 0 & \frac{\alpha -1}{\alpha } & 0 & \frac{1-\alpha }{\alpha } \\ \hline
 1-\frac{1-\alpha }{\alpha } & 0 & \frac{1-\alpha }{\alpha } & 0 \\
 0 & 1 & 0 & 0 \\
\end{array}
\right)
\end{equation}

\begin{equation}
\begin{split}
     \frac{1+\alpha}{5\alpha -1 }  e_{\rm CH_2} + \frac{4\alpha -2}{5\alpha -1}\left(
\begin{array}{cc|cc}
 1 & 1 & 0 & 0 \\
 0 & 1 & 0 & 0 \\ \hline
 0 & 0 & 0 & 0 \\
 0 & 0 & 0 & 0 \\
\end{array}
\right)& \  \nseq\  \ \frac{3 \alpha -1 }{5 \alpha -1} \left(
\begin{array}{cc|cc}
 0 & 0 & 0 & 0 \\
 0 & \frac{\alpha -1}{3 \alpha -1} & 0 & \frac{1-\alpha }{3 \alpha -1} \\ \hline
 1-\frac{1-\alpha }{3 \alpha -1} & 1-\frac{1-\alpha }{3 \alpha -1} & \frac{1-\alpha }{3 \alpha -1} & 0 \\
 0 & 1 & 0 & 0 \\
\end{array}
\right)\\ &+  \left(1- \frac{3 \alpha -1 }{5 \alpha -1}\right) \left(
\begin{array}{cc|cc}
 0 & 0 & 1-\frac{1-\alpha }{\alpha } & 0 \\
 0 & \frac{\alpha -1}{\alpha } & 0 & \frac{1-\alpha }{\alpha } \\ \hline
 0 & 0 & \frac{1-\alpha }{\alpha } & 0 \\
 0 & \frac{1-\alpha }{\alpha } & 0 & 1-\frac{1-\alpha }{\alpha } \\
\end{array}
\right)
\end{split}   
\end{equation}
Similar decompositions are possible for every other element arising from combinations of the last four non zero weights and the various classes.

\textbf{Step 3.2:} The effects arising from the first four rows of Table~\ref{Table::WeightsPoly}, we claim, are extreme. We call them Type 1 for weight $\frac{1-\alpha}{\alpha}$, Type 2 for weight $\frac{1-\alpha}{3\alpha-1}$, Type 3 for weight $\frac{3-\alpha}{3\alpha +1}$ and Type 4 for weight $\frac{2}{1+2\alpha}$. Notice that when $\alpha=1$, the Type 3 effects correspond to the noisy couplers and the Type 4 effect corresponds to the pure coupler.  To show that these effects are extreme, first consider that if a point on a polytope is extreme, the shape of the polytope will change if that point is removed and the new polytope is constructed from the convex hull of the remaining vertices. In essence, there will be a supporting hyperplane corresponding to a face of this new polytope that will witness the removed point (hyperplane separation theorem). For our purposes, we first collect all the effects from Table~\ref{Table:InnerProdEffectPoly} that satisfy $0 \leq \left<\Tilde{e},\pr_{2,\alpha} \right> \leq 1$ and all the effects generated from Table~\ref{Table::WeightsPoly} and their complementary effects lying on the hyperplane $ \left<\mathbf{x},\pr_{2,\alpha} \right>=0$. From this collection we remove one effect from either Type 1 and then perform a facet enumeration on the reduced set. This gives us a list of inequalities corresponding to the face-defining supporting hyperplanes of the reduced polytope. We then filter out the hyperplane that witnesses the removed effect. For instance, consider
$$
e_1 = \frac{1-\alpha}{\alpha} e_{\rm CH_2} +\left(1-\frac{1-\alpha}{\alpha}\right)\left(
\begin{array}{cc|cc}
 0 & 1 & 0 & 0 \\
 1 & 0 & 0 & 0 \\ \hline
 0 & 0 & 0 & 0 \\
 0 & 0 & 0 & 0 \\
\end{array}
\right) \quad \text{and}$$ 
$$ W_1 \coloneq -\frac{1}{2 \alpha -6}\left(
\begin{array}{cc|cc}
  (\alpha -1) & - (\alpha -3) &  (\alpha +1) & - (\alpha -1) \\
 - (\alpha -3) & 3 (\alpha -1) & - (\alpha -1) & 5 \alpha -3 \\ \hline
  (\alpha +1) & - (\alpha -1) & - (\alpha -3) &  (\alpha -1) \\
  (\alpha -1) &  (\alpha +1) & -3 (\alpha -1) & 3 \alpha -1 \\
\end{array}
\right).
$$
One can verify that every effect $f_1$ in the reduced polytope obtained after removing $e_1$ satisfies $\left<f_1.W_1\right> \leq 1$. However, 
$$
\left<e_1.W_1\right> = \frac{3 \alpha (\alpha -2)  +1}{\alpha(\alpha -3)  }
$$
which is $1$ when $\alpha = 1/2$ or $1$ but greater than $1$ for $\alpha\in(1/2,1)$. Since $e_1$ converges to $e_{\rm CH_2}$ as $\alpha \rightarrow 1/2$ and converges to the deterministic effect as $\alpha \rightarrow 1$, $W_1$ witnesses $e_1$. For a Type $1'$ effect, consider
$$
e_{1'} = \frac{1-\alpha}{\alpha} e_{\rm CH_2} +\left(1-\frac{1-\alpha}{\alpha}\right)\left(
\begin{array}{cc|cc}
 0 & 1 & 0 & 0 \\
 0 & 0 & 0 & 0 \\ \hline
 1 & 0 & 0 & 0 \\
 0 & 0 & 0 & 0 \\
\end{array}
\right) \quad \text{and}$$ 
$$ W_{1'} \coloneq -\frac{1}{2 \alpha -6}\left(
\begin{array}{cc|cc}
  (\alpha -1) & - (\alpha -3) &  (\alpha +1) & - (\alpha -1) \\
 (\alpha +1) & 3 (\alpha -1) & - (\alpha -1) & 5 \alpha -3 \\ \hline
  - (\alpha -3) & - (\alpha -1) & - (\alpha -3) &  (\alpha -1) \\
  (\alpha -1) &  (\alpha +1) & -3 (\alpha -1) & 3 \alpha -1 \\
\end{array}
\right),
$$
with which, one gets $\left<f_{1'},W_{1'}\right> \leq 1$ for any effect $f_{1'}$ in the reduced polytope obtained after removing $e_{1'}$, but $\left<e_{1'}.W_{1'}\right>=\left<e_{1}.W_{1}\right> > 1$ when $\alpha \in (0,1)$. For a Type 2 effect, consider
$$
e_2= \frac{1-\alpha}{3\alpha-1}e_{\rm CH_2} + \left(1- \frac{1-\alpha}{3\alpha-1}\right) \left(
\begin{array}{cc|cc}
 1 & 1 & 0 & 0 \\
 1 & 0 & 0 & 0 \\ \hline
 0 & 0 & 0 & 0 \\
 0 & 0 & 0 & 0 \\
\end{array}
\right) \quad \text{and}
$$
$$
W_2 \coloneq  \frac{1}{22-10 \alpha } \left(
\begin{array}{cc|cc}
 7-7 \alpha  & 7- \alpha  & 3 \alpha +3 & 8-8 \alpha  \\
 8-2 \alpha  & 4-4 \alpha  & 3-3 \alpha  & 8-2 \alpha  \\ \hline
 9-3 \alpha  & 5-5 \alpha  & 7- \alpha  & 5-5 \alpha  \\
 5-5 \alpha  &  \alpha +5 & 7-7 \alpha  & 3 \alpha +3 \\
\end{array}
\right)
$$
with which, one gets $\left<f_2,W_2\right> \leq 1$ for any effect $f_2$ in the reduced polytope obtained after removing $e_2$ but
$$
\left<e_2.W_2\right>= \frac{21 \alpha ^2-47 \alpha +14}{15 \alpha ^2-38 \alpha +11}
$$
which is $1$ when $\alpha =1/2$ or $1$ but greater than $1$ for $\alpha\in(1/2,1)$. For a Type 3 effect, consider
$$
e_3=\frac{3-\alpha}{3\alpha+1} e_{\rm CH_2} + \left(1-\frac{3-\alpha}{3\alpha+1}\right)\left(
\begin{array}{cc|cc}
 0 & 1 & 0 & 0 \\
 0 & 0 & 0 & 0 \\ \hline
 0 & 0 & 0 & 0 \\
 0 & 0 & 0 & 0 \\
\end{array}
\right) \quad \text{and}
$$
$$
W_3 \coloneq \frac{1}{4}\left(
\begin{array}{cc|cc}
 -2 & 4 &  \alpha +1 & 1- \alpha  \\
 2 \alpha -2 & -2 & -2 & 2 \alpha -2 \\ \hline
 2 \alpha -2 & -2&  \alpha +1 & 1- \alpha  \\
 -2 &  \alpha +1 & 1- \alpha  &  \alpha +1 \\
\end{array}
\right)
$$
\noindent
with which one has $\left<f_3.W_1\right> \leq 1$ for any effect $f_3$ in the reduced polytope obtained after removing $e_3$ but 
$$
\left<e_3.W_3\right> = \frac{(13-2 \alpha ) \alpha -1}{6 \alpha +2}
$$ 
which is $1$ when $\alpha=1/2$ but greater than $1$ otherwise. Finally for Type 4, consider $e_4 = 2/(1+2\alpha) e_{\rm CH_2}$ and
$$ W_4 \coloneq \left(
\begin{array}{cc|cc}
 - \alpha -3 &  \alpha -3 &  \alpha -3 & - \alpha -3 \\
  \alpha -3 & - \alpha -7 & - \alpha -3 &  \alpha -3 \\ \hline
  \alpha -3 & - \alpha -11 &  \alpha -3 & - \alpha -3 \\
 - \alpha -3 & \alpha -3 & - \alpha -3 &  \alpha -3 \\
\end{array}
\right).$$
One then has $\left<f_4.W_4\right> \geq 0$ for any effect $f_4$ from the reduced polytope obtained after removing $e_4$ but
$$
\left<e_4.W_4\right>  = 4-\frac{8}{2 \alpha +1}
$$
which is $0$ when $\alpha=1/2$ but negative otherwise. We suppress the details of the rest of the witnesses for the remaining effects from these 4 types. For the hyperplane $\left<\mathbf{x},\pr_{2,\alpha}\right>=0$ the extreme effects are exactly the complementary effects obtained above.

\textbf{Step 4:} Taking the union we find that this effect polytope is the convex hull of 146 extreme effects. These include 82 BW effects, 6 CH type effects, 29 effects lying on the hyperplane $\left<\mathbf{x},\pr_{2,\alpha}\right>=0$ and $\left<\mathbf{x},\pr_{2,\alpha}\right>=1$ each.

\subsection{Construction of the Effect Polytope of \texorpdfstring{$\mathbb{H}^{[1]}_{\alpha (2,2)}[\pr_{2,2'}]$}{}}

\label{Appendix::EffectPoly22p}

Next, take the state space $\mathbb{H}^{[2]}_{\alpha (2,2)}[\pr_{2,2'}]$  the state space characterised by the convex hull of $\mathbb{H}^{[1]}_{\alpha (2,2)}[\pr_2]$ and the noisy PR box $\pr_{2,\alpha}'$. The addition of $\pr_{2,\alpha}'$ to $\mathbb{H}^{[1]}_{\alpha (2,2)}[\pr_2]$ introduces two hyperplanes through its effect polytope, given by $\left<\mathbf{x},\pr_{2,\alpha}'\right>=1$ and $\left<\mathbf{x},\pr_{2,\alpha}'\right>=0$. One can perform a similar analysis as shown in the previous section to check which of the extreme effects of the full effect polytope of $\mathbb{H}^{[1]}_{\alpha (2,2)}[\pr_2]$ are still valid effects by ensuring that their inner products with $\pr_{2,\alpha}'$ is in the interval $[0,1]$. We found that the set of Type 1 and Type $1'$ effects from the previous section are extreme and lie on the hyperplane $\left<\mathbf{x},\pr_{2,\alpha}'\right>=0$. The only other class of extreme effect lying on this hyperplane is of the Type 2 form, a candidate of which is
$$
\left(
\begin{array}{cc|cc}
 \frac{2-4 \alpha }{3 \alpha -1} & \frac{2-4 \alpha }{3 \alpha -1} & 0 & 0 \\
 0 & -1 & 0 & \frac{1-\alpha }{3 \alpha -1} \\ \hline
 0 & 0 & 1 & \frac{2-4 \alpha }{1-3 \alpha } \\
 0 & \frac{1-\alpha }{3 \alpha -1} & \frac{2-4 \alpha }{1-3 \alpha } & \frac{2-4 \alpha }{1-3 \alpha }
   \\
\end{array}
\right) = \frac{1-\alpha}{3\alpha-1} e_{\mathrm{CH}_2} + \frac{4\alpha-2}{3\alpha-1}\left(u-\left(
\begin{array}{cc|cc}
 1 & 1 & 0 & 0 \\
 0 & 1 & 0 & 0 \\ \hline
 0 & 0 & 0 & 0 \\
 0 & 0 & 0 & 0 \\
\end{array}
\right)\right);
$$

More precisely, from Table~\ref{Tab:InnerProductOpp}, we can ensure that the Type 1, Type $1'$ and Type 2 effects of  $\mathbb{H}^{[1]}_{\alpha (2,2)}[\pr_2]$ are still valid effects but the effects in Type 3 and Type 4 are not. The extreme effects of $\mathbb{H}^{[1]}_{\alpha (2,2)}[\pr_2]$ that are not lying on the hyperplanes $\left<\mathbf{x},\pr_{2,\alpha}\right>=1$ and $\left<\mathbf{x},\pr_{2,\alpha}\right>=0$ are still valid. To calculate the new effects generated on the hyperplanes $\left<\mathbf{x},\pr_{2,\alpha}'\right>=1$ and $\left<\mathbf{x},\pr_{2,\alpha}'\right>=0$, we can follow the algorithm described in the previous section. Note from Table~~\ref{Tab:InnerProductOpp}, that the Type 1 effects of  $\mathbb{H}^{[1]}_{\alpha (2,2)}[\pr_2]$ lie on the second hyperplane. Upon calculating all the effects lying on this hyperplane and filtering out the extreme effects as described in the previous section, we found that these Type 1 effects are still extreme. The remaining extreme effects are of the Type 2 form. In particular, they can be written as $(1-\alpha)/(3\alpha -1)e_{\rm CH_2} + (4\alpha-2)/(3\alpha-1)(u-f_{V})$ where $f_{V}$ is a Class V effect with $\left<f_{V},\pr_{2'}\right>=1$. Since there are 8 Class V effects that has an inner product of 1 with $\pr_{2'}$, there are 8 corresponding extreme effects of this form. The extreme effects on the hyperplane $\left<\mathbf{x},\pr_{2,\alpha}'\right>=1$ can be calculated as the complements of the extreme effects on the hyperplane $\left<\mathbf{x},\pr_{2,\alpha}'\right>=0$.

\begin{table}[ht]
\centering
\begin{tabular}{|c|c|c|c|c|}
\hline
 $\phantom{\Big(} $&  $\phantom{\Big(} $  Type $1/1'$  $\phantom{\Big(} $ & $\phantom{\Big(} $  Type 2  $\phantom{\Big(} $ & $\phantom{\Big(} $  Type 3  $\phantom{\Big(} $ & $\phantom{\Big(} $  Type 4  $\phantom{\Big(} $ \\
 \hline
 $k(\alpha)$ & $0$  & $\phantom{\Big(}  \frac{1-2 \alpha }{1-3 \alpha }$ & $\frac{1-2 \alpha }{3 \alpha +1}$ & $\frac{1-2\alpha}{1+2\alpha}$ \\
 \hline
 $\phantom{\Big(} k(1/2)$ & $0$  & $0$ &  $0$ & $0$ \\
 \hline
 $\phantom{\Big(}  k(1)$ & $0$  & $1/2$ & $-1/4$ & $-1/3$ \\
 \hline
 \end{tabular}
  \caption{Inner product between extreme vectors $\Tilde{e}$  from the four types and $\pr_{2,\alpha}'$. Here $k(\alpha) = \left<\Tilde{e},\pr_{2,\alpha}'\right>$. Notice that Type 3 and Type 4 effects of $\mathbb{H}^{[1]}_{\alpha (2,2)}[\pr_2]$ are no longer valid effects of $\mathbb{H}^{[1]}_{\alpha (2,2)}[\pr_{22'}]$ because of the negative inner product.}
   \label{Tab:InnerProductOpp}
\end{table}

\subsection{Construction of the Effect Polytope of \texorpdfstring{$\mathbb{H}^{[1]}_{\alpha (2,2)}[\pr_{1,2}]$}{}}

\label{Appendix::EffectPoly12}

\begin{table}[ht]
    \centering
    \begin{tabular}{|c|c|c|c|c|}
    \hline
       $\phantom{\Big(} $ & Type 1 $\cup$ Type $1'$& Type 2 & Type 3 & Type 4  \\
       \hline
        $\phantom{\Big(} \{k(\alpha)\}$ & $\left\{1-\alpha ,\alpha ,\frac{1}{2}\right\}$ & $\left\{\frac{\alpha^2-3\alpha+1}{1-3 \alpha },\frac{\alpha^2+2\alpha-1}{3 \alpha -1}\right\}$ & $\left\{\frac{-\alpha ^2+\alpha +1}{3 \alpha +1},\frac{\alpha ^2+1}{3 \alpha +1}\right\}$ & $\frac{1}{2 \alpha +1}$  \\
        \hline
       $\phantom{\Big(} \{k(1/2)\}$ & $1/2$ & $1/2$ & $1/2$  & $1/2$   \\
       \hline
       $\phantom{\Big(}  \{k(1)\}$ & $\{0,1,1/2\}$ & $\{1,1/2\}$ & $1/4$ & $1/3$  \\
       \hline
    \end{tabular}
    \caption{Inner product between extreme vectors $\Tilde{e}$  from the four types and $\pr_{1,\alpha}$. Here $k(\alpha) = \left<\Tilde{e},\pr_{1,\alpha}\right>$ and the set $\{k(\alpha)\}$ runs over all effects from a given type. Since all the inner products are between 0 and 1 in the range $1/2 \leq \alpha \leq1 $, all extreme effects of $\mathbb{H}^{[1]}_{\alpha (2,2)}[\pr_2]$ are also effects of $\mathbb{H}^{[1]}_{\alpha (2,2)}[\pr_{12}]$.}
    \label{Tab:InnerProductNonOpp}
\end{table}

Next, let us consider the second state space $\mathbb{H}^{[2]}_{\alpha (2,2)}[\pr_{12}]$ where the two noisy PR boxes are not isotropically opposite to each other. Following the previous analysis we construct Table~\ref{Tab:InnerProductNonOpp} to check the inner product between the extreme effects of $\mathbb{H}^{[1]}_{\alpha (2,2)}[\pr_{2}]$ lying on the hyperplane $\left< \mathbf{x}, \mathrm{PR}_{2,\alpha} \right>=1$ and $\pr_{1,\alpha}$. From this table it is clear that all the extreme effects of $\mathbb{H}^{[1]}_{\alpha (2,2)}[\pr_{2}]$ have an inner product between 0 and 1 in the range $1/2 \leq \alpha \leq 1$ and therefore are valid effects of $\mathbb{H}^{[2]}_{\alpha (2,2)}[\pr_{12}]$ and in fact extreme and similarly the complementary effects. However the effects of the local polytope $e_{\rm CH_1}$ and $e_{\rm CH_{1'}}$ are no longer valid. One can use the algorithm from Subsection~\ref{Appendix::EffectPoly} to find that effects of the form Type 1,2,3 and 4 are new extreme effects on the hyperplane $\left<\mathbf{x},\pr_{2,\alpha}'\right>=1$ and their complimentary effects on the hyperplane $\left<\mathbf{x},\pr_{2,\alpha}'\right>=0$. The effect polytope of $\mathbb{H}^{[2]}_{\alpha (2,2)}[\pr_{12}]$ can be calculated by separately constructing the effect polytopes of $\mathbb{H}^{[1]}_{\alpha (2,2)}[\pr_{1}]$ and $\mathbb{H}^{[1]}_{\alpha (2,2)}[\pr_{2}]$, taking their union and then discarding the extreme effects whose inner products with either $\pr_1$ or $\pr_2$ is outside the interval $[0,1]$. In particular, these effects are  $e_{\rm CH_1},e_{\rm CH_1}',e_{\rm CH_2}$ and $e_{\rm CH_2}'$.

\section{Pure Couplers of \texorpdfstring{$\mathbb{H}^{[1]}_{(3,2)}[{\mathrm{N_1}}]$}{}}
\label{Appendix:PureCouplersN13322}

A representative of each class of minimally 2-preserving pure coupler is shown below. Subscripts denote the number of elements in each class. The full set of 88 can be generated by applying all the relabelling symmetries, then removing elements that are not minimally 2-preserving, then removing elements that are not pure couplers and deleting any duplicates.

$$
\frac{1}{3}\left(
\begin{array}{cc|cc|cc}
 1 & 0 & 0 & 0 & 1 & 0 \\
 0 & 0 & 0 & 1 & 0 & 0 \\ \hline
 0 & 0 & 0 & 0 & 0 & 0 \\
 0 & 1 & 0 & -1 & -1 & 0 \\ \hline
 1 & 0 & 0 & 0 & 0 & 0 \\
 0 & 0 & 0 & 0 & 1 & 0 \\
\end{array}
\right)_8,
\frac{2}{3}\left(
\begin{array}{cc|cc|cc}
 1 & 0 & 0 & 0 & 0 & 0 \\
 0 & 0 & 0 & 1 & 0 & 0 \\ \hline
 0 & 0 & 0 & 0 & 0 & 0 \\
 0 & 1 & 0 & -1 & 0 & 0 \\ \hline
 0 & 0 & 0 & 0 & 0 & 0 \\
 0 & 0 & 0 & 0 & 0 & 0 \\
\end{array}
\right)_1, \frac{1}{3}\left(
{\begin{array}{cc|cc|cc}
 1 & 0 & 0 & 0 & 1 & 0 \\
 0 & 0 & 0 & 1 & 0 & 0 \\ \hline
 0 & 0 & 0 & 0 & 0 & 0 \\
 0 & 1 & 0 & -1 & 1 & 0 \\ \hline
 0 & 0 & 0 & 1 & 0 & 0 \\
 0 & 0 & 0 & 0 & 0 & 0 \\
\end{array}}
\right)_8
$$
$$
\frac{1}{3}\left(
\begin{array}{cc|cc|cc}
 1 & 0 & 0 & 0 & 0 & 0 \\
 0 & 0 & 0 & 1 & 0 & 0 \\ \hline
 0 & 0 & 0 & 0 & 0 & 0 \\
 0 & 1 & 0 & -1 & 1 & 0 \\ \hline
 1 & 0 & 0 & 1 & 0 & 0 \\
 0 & 0 & 0 & 0 & 1 & 0 \\
\end{array}
\right)_8, \frac{1}{3}\left(
\begin{array}{cc|cc|cc}
 1 & 0 & 0 & 0 & 0 & 0 \\
 0 & 0 & 0 & 1 & 0 & 0 \\  \hline
 0 & 0 & 0 & 0 & 0 & 0 \\
 0 & 1 & 0 & -1 & 1 & 0 \\ \hline
 0 & 1 & 0 & -1 & 0 & 1 \\
 0 & 0 & 0 & 0 & 1 & 0 \\
\end{array}
\right)_{8},\frac{1}{3}\left(
{\begin{array}{cc|cc|cc}
 2 & 0 & 0 & 0 & 0 & 0 \\
 0 & 0 & 0 & 2 & 0 & 0 \\\hline
 0 & 0 & 0 & 0 & 0 & 0 \\
 0 & 2 & 0 & -2 & 1 & 0 \\\hline
 0 & 0 & 0 & 0 & 0 & 0 \\
 0 & 0 & 0 & 0 & 0 & 0 \\
\end{array}}
\right)_{8}
$$
$$
\frac{1}{3}\left(
\begin{array}{cc|cc|cc}
 2 & 0 & 0 & 0 & 0 & 0 \\
 0 & 0 & 0 & 2 & 0 & 0 \\ \hline
 0 & 0 & 0 & 0 & 0 & 0 \\
 0 & 2 & 0 & -2 & 0 & 0 \\ \hline
 0 & 0 & 0 & 0 & 1 & 0 \\
 0 & 0 & 0 & 0 & 0 & 0 \\
\end{array}
\right)_{3}, \frac{1}{3}\left(
\begin{array}{cc|cc|cc}
 2 & 0 & 0 & 0 & 0 & 0 \\
 0 & 0 & 0 & 2 & 0 & 0 \\ \hline
 0 & 0 & 0 & 0 & 0 & 0 \\
 0 & 2 & 0 & -2 & 0 & 0 \\ \hline
 0 & 1 & 0 & 0 & 0 & 0 \\
 0 & 0 & 0 & 0 & 1 & 0 \\
\end{array}
\right)_{8}, \frac{1}{3}\left(
\begin{array}{cc|cc|cc}
 2 & 0 & 0 & 0 & 1 & 0 \\
 0 & 0 & 0 & 2 & 1 & 0 \\ \hline
 0 & 0 & 0 & 0 & 0 & 0 \\
 0 & 2 & 0 & -2 & 0 & 0 \\ \hline
 0 & 0 & 0 & 0 & 0 & 0 \\
 0 & 0 & 0 & 0 & 0 & 0 \\
\end{array}
\right)_{2}
$$
$$
\frac{1}{3}\left(
\begin{array}{cc|cc|cc}
 2 & 0 & 0 & 0 & 1 & 0 \\
 0 & 0 & 0 & 2 & 1 & 0 \\ \hline
 0 & 0 & 0 & 0 & 0 & 0 \\
 0 & 2 & 0 & -2 & 0 & 0 \\\hline
 0 & 0 & 0 & 0 & 0 & 1 \\
 0 & 0 & 0 & 0 & 0 & 0 \\
\end{array}
\right)_{1}, \frac{1}{6}\left(
{\begin{array}{cc|cc|cc}
 4 & 0 & 0 & 0 & 1 & 0 \\
 0 & 0 & 0 & 4 & 0 & 0 \\ \hline
 0 & 0 & 0 & 0 & 1 & 0 \\
 0 & 4 & 0 & -4 & 0 & 0 \\ \hline
 0 & -1 & 0 & 0 & 0 & 1 \\
 0 & 0 & 0 & 0 & 0 & 0 \\
\end{array}}
\right)_{8}, \frac{1}{6}\left(
\begin{array}{cc|cc|cc}
 4 & 0 & 0 & 0 & 1 & 0 \\
 0 & 0 & 0 & 4 & 0 & 0 \\ \hline
 0 & 0 & 0 & 0 & 0 & 0 \\
 0 & 4 & 0 & -4 & 0 & 0 \\ \hline
 1 & 0 & 0 & -1 & 0 & 1 \\
 0 & 0 & 0 & 0 & 1 & 0 \\
\end{array}
\right)_{8}
$$
$$
\frac{1}{6}\left(
{\begin{array}{cc|cc|cc}
 4 & 0 & 0 & 0 & 0 & 0 \\
 0 & 0 & 0 & 4 & 1 & 0 \\ \hline
 0 & 0 & 0 & 0 & 1 & 0 \\
 0 & 4 & 0 & -4 & 0 & 0 \\ \hline
 0 & 0 & 1 & 0 & 0 & 0 \\ 
 0 & 0 & 0 & 0 & 0 & 0 \\
\end{array}}
\right)_{8},
\frac{1}{6}\left(
{\begin{array}{cc|cc|cc}
 4 & 0 & 0 & 0 & 0 & 0 \\
 0 & 0 & 0 & 4 & 1 & 0 \\ \hline
 0 & 0 & 0 & 0 & 0 & 0 \\
 0 & 4 & 0 & -4 & 0 & 0 \\ \hline
 1 & 0 & 1 & 0 & 0 & 0 \\
 0 & 0 & 0 & 0 & 1 & 0 \\
\end{array}}
\right)_{8},
\frac{1}{3}\left(
\begin{array}{cc|cc|cc}
 2 & 0 & 0 & 0 & 0 & 0 \\
 0 & 0 & 0 & 2 & 0 & 0 \\ \hline
 0 & 0 & 0 & 0 & 0 & 0 \\
 0 & 2 & 0 & -2 & 0 & 0 \\ \hline
 0 & 0 & 0 & 0 & 0 & 1 \\
 0 & 0 & 0 & 0 & 1 & 0 \\
\end{array}
\right)_{1}
$$

\section{Equivalence Classes of \texorpdfstring{$\npr$}{\npr} PR boxes}
\label{Appendix::EquivClasses}

\begin{table}[H]
    \centering
    \begin{tabular}{|c|c|c|c|c|}
    \hline
        \multicolumn{5}{|c|}{$\npr=2$}\\
        \hline
        Class & Description & Example & Party Symmetric & Count \\ 
        \hline \hline
        1 & Isotropically opposite pairs & $\{ \pr_1,\pr_1' \}$  & Yes & $4$ \\
        \hline
       2 & 1 party symmetric,&&&\\
       &1 party asymmetric& $\{ \pr_1,\pr_3 \}$  & No & $16$ \\
        \hline
       3 & Party symmetric or &&&\\
       &asymmetric that are&&&\\
       &not isotropically opposite&$\{ \pr_1,\pr_2 \}$  & Yes & $8$ \\
        \hline
    \end{tabular}
    \caption{$\npr=2$}
    \label{Tab::TwoRoofs}
\end{table}

\begin{table}[H]
    \centering
    \begin{tabular}{|c|c|c|c|c|}
    \hline
        \multicolumn{5}{|c|}{$\npr=3$}\\
        \hline
        Class & Description &   Example & Party Symmetric & Count \\ 
        \hline \hline
        1 &  Two PR boxes isotropically&&&\\
        &opposite to each other. If they are&&&\\
        &party symmetric, the third is not&&&\\
        &and vice versa & $\{ \pr_1,\pr_1',\pr_3 \}$  & No & $16$ \\
        \hline
        2 &  Two PR boxes that are not&&&\\
        &isotropically opposite. If &&&\\
        &these two are party symmetric&&&\\
        &the third is party asymmetric&&&\\
        &and vice versa. &  $\{ \pr_3,\pr_4', \pr_1 \}$  & Yes & $32$ \\
        \hline
       3 &  Either all party symmetric&&&\\
       &or party asymmetric. &  $\{ \pr_1,\pr_1', \pr_2 \}$  & Yes & $8$ \\
        \hline
    \end{tabular}
    \caption{$\npr=3$}
    \label{Tab::ThreeRoofs}
\end{table}

\begin{table}[H]
    \centering
    \begin{tabular}{|c|c|c|c|c|}
    \hline
        \multicolumn{5}{|c|}{$\npr=4$}\\
        \hline
        Class & Description &  Example & Party Symmetric & Count \\ 
        \hline \hline
        1 &   Two isotropically opposite pairs. &&&\\
        &1 pair party symmetric and 1 &&&\\
        &pair party asymmetric  &  $\{ \pr_1,\pr_1',\pr_3, \pr_3' \}$  & No & $4$ \\
        \hline
        2 &  Three party asymmetric with one &&&\\
        &party symmetric/three party &&&\\
        &symmetric with one party&&&\\
&asymmetric &  $\{ \pr_3,\pr_4,\pr_4', \pr_1 \}$  & No & $32$ \\
        \hline
       3 &     1 pair of party symmetric PR &&&\\
       &boxes and 1 pair of party &&&\\
       &asymmetric PR boxes. 1 pair&&&\\
       &isotropically opposite and 1&&&\\
       &1 pair isotropically not opposite. &  $\{ \pr_1,\pr_1', \pr_3, \pr_4' \}$  & Yes & $16$ \\
        \hline
        4 &  1 pair of party symmetric and 1&&&\\
        &pair of party asymmetric. Pairs differ by&&&\\
        &the same diagonal block. & \centering $\{ \pr_1,\pr_2',\pr_3, \pr_4' \}$  & Yes & $16$ \\
        \hline
        5 &  1 pair of party symmetric and 1 &&&\\
        &pair of party asymmetric. Pairs &&&\\
        &differ in different diagonal blocks. & \centering $\{ \pr_1,\pr_2,\pr_3, \pr_4' \}$  & Yes & $8$ \\
        \hline
       6 &  All party symmetric/asymmetric &  $\{ \pr_1,\pr_1', \pr_2, \pr_2' \}$  & Yes & $2$ \\
        \hline
    \end{tabular}
    \caption{$\npr=4$}
    \label{Tab::FourRoofs}
\end{table}

\begin{table}[H]
    \centering
    \begin{tabular}{|c|c|c|c|c|}
    \hline
        \multicolumn{5}{|c|}{$\npr=5$}\\
        \hline
        Class & Description &  Example & Party Symmetric & Count \\ 
        \hline \hline
        1 & 1 isotropically opposite pair party&&&\\
        &symmetric/asymmetric pair with &&&\\
        &three party asymmetric/symmetric &  $\{ \pr_1,\pr_1',\pr_3 \pr_3', \pr_4\}$  & No & $16$ \\
        \hline
        2 &  1 isotropically non-opposite pair&&&\\
        &party symmetric/asymmetric pair with &&&\\
        &three party asymmetric/symmetric &  $\{ \pr_1,\pr_2,\pr_2', \pr_3,\pr_4' \}$  & Yes & $32$ \\
        \hline
       3 &  All party symmetric/asymmetric&&&\\
       &and one party asymmetric/symmetric. & $\{ \pr_1, \pr_3, \pr_3', \pr_4, \pr_4' \}$  & Yes & $8$ \\
        \hline
    \end{tabular}
    \caption{$\npr=5$}
    \label{Tab::FiveRoofs}
\end{table}

\begin{table}[H]
    \centering
    \begin{tabular}{|c|c|c|c|c|}
    \hline
        \multicolumn{5}{|c|}{$\npr=6$}\\
        \hline
        Class & Description &  Example & Party Symmetric & Count \\ 
        \hline \hline
        1 &  Three party symmetric PR boxes&&&\\
        &with three party asymmetric PR&&&\\
        &boxes &  $\{ \pr_1,\pr_1', \pr_2, \pr_3 \pr_3', \pr_4\}$  & No & $16$ \\
        \hline
        2 &  1 isotropically opposite party&&&\\
        &symmetric/asymmetric pair with&&&\\
        &four party asymmetric/symmetric &  $\{ \pr_1, \pr_1', \pr_3,\pr_3',\pr_4, \pr_4' \}$  & Yes & $4$ \\
        \hline
       3 &  4 party asymmetric/symmetric PR&&&\\
       &boxes with two party symmetric/&&&\\
       &asymmetric PR boxes that are &&&\\
       &not isotropically opposite.&  $\{ \pr_1, \pr_2, \pr_3,\pr_3',\pr_4, \pr_4' \}$  & Yes & $8$ \\
        \hline
    \end{tabular}
    \caption{$\npr=6$}
    \label{Tab::SixRoofs}
\end{table}

\begin{table}[H]
    \centering
    \begin{tabular}{|c|c|c|c|c|}
    \hline
        \multicolumn{5}{|c|}{$\npr=7$}\\
        \hline 
        Class & Description &  Example & Party Symmetric & Count \\ 
        \hline \hline
         1 & All but one PR box &  $\{ \pr_1, \pr_2, \pr_2', \pr_3,\pr_3',\pr_4, \pr_4' \}$  & Yes & $8$ \\
        \hline
    \end{tabular}
    \caption{$\npr=7$}
    \label{Tab::SevenRoofs}
\end{table}

\section{Proof of Theorem~\ref{Theorem::NoCouplers}}
\label{Appendix::NoCouplers}

We show that $\tilde{e}$ cannot be both a coupler and weakly minimally $2$-preserving. This implies the statement in Theorem~\ref{Theorem::NoCouplers}.
 
 Let $\mathrm{Extreme}[\mathcal{E}_{\mathbb{H}^{[\npr]}_{\alpha(2,2)}} ]$ be the set of extremal effects of the effect polytope $\mathcal{E}_{\mathbb{H}^{[\npr]}_{\alpha(2,2)}}$ and $n$ denote the cardinality of $\mathrm{Extreme}[\mathcal{E}_{\mathbb{H}^{[\npr]}_{\alpha(2,2)}} ]$. Let us denote by $\underline{\mathds{O}}$ and $\underline{\mathds{1}}$ the vectors $\left(0 \ 0\  \ldots 0\right)_{1 \times n}$ and  $\left(1\ 1\  \ldots 1\right)_{1 \times n}$ respectively. Since the effects space is convex, any effect $e$ can be expressed as

\begin{equation}
    e = \sum_{j=1}^n \mathrm{x}_j e_j
    \label{Eq::EffectOrdering}
\end{equation}
where $e_j$ is an extremal effect and $\mathrm{x}_j \in [0,1]$ such that $\sum_{j}^n \mathrm{x}_j=1$. For $e$ to be weakly minimally $2$-preserving, we require that for any extremal effect $e_j$ and a pair of PR boxes $\pr_{k,\alpha}$  and $\pr_{l,\alpha}$ the inner product between $e_j$ and the sub-normalised state $\Phi_{e}\left(k,l\right)$ is non-negative. Since for every extremal effect $e_j$, the effect $u-e_j$ is also an extremal effect, this condition also implies that the above inner product cannot be more than 1. In other words, for an arbitrary pair of noisy PR boxes indexed by $(k,l)$, one requires that if $\underline{\mathbf{x}} \in \mathbb{R}^n_{\geq 0}$ represents the convex support of the effect $e$, then every entry of the vector,
\begin{equation}
    M_{k,l}.\underline{\mathbf{x}}^T\coloneqq  \begin{pmatrix}
        \left<e_1,\Phi_{e_1}\left(k,l\right)\right> && \left<e_1,\Phi_{e_2}\left(k,l\right)\right> && \ldots && \left<e_1,\Phi_{e_n}\left(k,l\right)\right>\\
        \left<e_2,\Phi_{e_1}\left(k,l\right)\right> && \left<e_2,\Phi_{e_2}\left(k,l\right)\right> && \ldots && \left<e_2,\Phi_{e_n}\left(k,l\right)\right>\\
        \vdots && \vdots && \ddots && \vdots \\
        \left<e_n,\Phi_{e_1}\left(k,l\right)\right> && \left<e_n,\Phi_{e_2}\left(k,l\right)\right> && \ldots && \left<e_n,\Phi_{e_n}\left(k,l\right)\right> \\
    \end{pmatrix}_{n \times n}.\begin{pmatrix}
        \mathrm{x_1}\\
        \mathrm{x_2}\\
        \vdots \\
        \mathrm{x_n}\\
    \end{pmatrix}_{n \times 1}, \label{expression::Mmatrix}
\end{equation}
must be non-negative. This can be viewed as the constraint:
\begin{equation}
    -M_{k,l}.\underline{\mathbf{x}}^T \leq \underline{\mathds{O}}^T  
\end{equation}
Additionally, since the vector $\underline{\mathbf{x}}$ represents the convex weights, one also needs the following convexity conditions to hold:
\begin{equation}
    \underline{\mathds{1}}.\underline{\mathbf{x}}^T \leq 1 \quad \text{and }  \quad -\underline{\mathds{1}}.\underline{\mathbf{x}}^T \leq -1.
\end{equation}
With this one can define a \textit{constraint} matrix \textbf{C}  and a \textit{bound} vector \underline{\textbf{b}} as:
\setcounter{MaxMatrixCols}{20}
\begin{equation}
    \mathbf{C} \coloneqq 
        \begin{pmatrix}
        \underline{\mathds{1}} \\ -\underline{\mathds{1}} \\ -M_{1,1} \\ -M_{1,2} \\ \vdots \\ -M_{\npr,\npr-1} \\ -M_{\npr,\npr}\\
        \end{pmatrix}_{(\npr^2n+2) \times n}
  \text{and } \quad \quad \underline{\mathbf{b}} \coloneqq \begin{pmatrix}
        1 \\ -1 \\ \underline{\mathds{O}}^T \\ \vdots \\   \underline{\mathds{O}}^T\\
    \end{pmatrix}_{(\npr^2n+2) \times 1}
    \label{expression::C}
    \end{equation}
respectively. The effect $e$, if weakly minimally $2$- preserving, will satisfy $\mathbf{C}.\underline{\mathbf{x}}^T \leq \underline{\mathbf{b}}$. Next, since we are interested in finding weakly minimally $2$-preserving couplers, we would also like $e$ to satisfy 

\begin{equation}
    \mathrm{CHSH}_{i}\left[\Tilde{\Phi}^{(2,3)}_{e}(\pr_{k,\alpha},\pr_{l,\alpha})\right] > \frac{3}{4} 
\end{equation}
 where ${\mathrm{ CHSH}}_i$ is a CHSH game that can be won by an amount more than 3/4 by correlations obtained upon performing the fiducial measurements on the allowed noisy PR boxes and $\Tilde{\Phi}^{(2,3)}_{e}(\pr_{k,\alpha},\pr_{l,\alpha})$ is the normalised state $\Phi^{(2,3)}_{e}(\pr_{k,\alpha},\pr_{l,\alpha})$. This is equivalent to requiring
 \begin{equation}
 \begin{split}
     &{\mathrm{CHSH}_i}\left[\Phi^{(2,3)}_{e}(\pr_{k,\alpha},\pr_{l,\alpha})\right] > \frac{3}{4}\left<u,\Phi^{(2,3)}_{e}(\pr_{k,\alpha},\pr_{l,\alpha})\right> \\
      \implies &{\mathrm{CHSH}_i}\left[\Phi^{(2,3)}_{e}(\pr_{k,\alpha},\pr_{l,\alpha})\right] - \frac{3}{4}\left<u,\Phi^{(2,3)}_{e}(\pr_{k,\alpha},\pr_{l,\alpha})\right>   >0\\
      \implies &\left<\mathrm{CHSH}_i-\frac{3}{4}u,\Phi^{(2,3)}_{e}(\pr_{k,\alpha},\pr_{l,\alpha})\right> > 0 \\
     \implies &\left<\mathrm{CHSH}_i-\frac{3}{4}u,\id \otimes \sum_j x_j e_j \otimes \id (\pr_{k,\alpha},\pr_{l,\alpha})\right> > 0 \\
      \implies &  \left(
          \left<\mathrm{CHSH}_i-\frac{3}{4}u, \Phi_{e_1}^{(2,3)}(k,l) \right>, \ldots  ,\left<\mathrm{CHSH}_i-\frac{3}{4}u, \Phi_{e_n}^{(2,3)}(k,l) \right>
      \right).\underline{\textbf x}^T  \eqqcolon \underline{\mathbf{f}}_{k,l|i}.\underline{\mathbf{x}}^T  > 0 \\
 \end{split}    
\end{equation}

\noindent
We do not know whether there exists any effect at all such that for the choice of $k,l$ and $i$, $\underline{\mathbf{f}}_{k,l|i}.\underline{\mathbf{x}}^T > 0$. Therefore, one can alternatively look for a vector $\underline{\mathbf{x}}$ which maximises the value $\underline{\mathbf{f}}_{k,l|i}.\underline{\mathbf{x}}^T $. This can be done with the help of a Linear Program (LP) defined below:
\begin{equation}
\mathbb{P}_{k,l|i} \coloneqq \quad
\begin{aligned}
& \underset{\underline{\mathbf{x}} \in \mathbb{R}^n}{\text{maximise:}}
& &\quad \underline{\mathbf{f}}_{k,l|i}.\underline{\mathbf{x}}^T\\
& \text{subject to:}& & \quad \mathbf{C}.\underline{\mathbf{x}}^T \leq \underline{\mathbf{b}} \\
& & & \quad \underline{\mathbf{x}}  \geq 0  \\
\end{aligned}
\label{Eq::primal}
\end{equation}
The dual program is defined as:
\begin{equation}
\mathbb{D}_{k,l|i} \coloneqq \quad
\begin{aligned}
& \underset{\underline{\mathbf{y}} \in \mathbb{R}^{|\underline{\mathbf{b}}|}}{\text{minimise:}}
& &\quad \underline{\mathbf{b}}^T.\underline{\mathbf{y}} \\
& \text{subject to:}& & \quad \mathbf{C}^T.\underline{\mathbf{y}} \geq \underline{\mathbf{f}}_{k,l|i} \\
& & & \quad \underline{\mathbf{y}}  \geq 0  \\
\end{aligned}.
\label{Eq::dual}
\end{equation}
To prove that a weakly minimally $2$-preserving coupler exists, it suffices to show that for at least one choice of $k',l'$ and $i'$, 
\begin{equation}
    \mathbb{P}_{k',l'|i'}>0. 
\end{equation}
Since when $\alpha \leq 1/\sqrt{2}$, no extremal effects of party symmetric state spaces are couplers, we only need to evaluate these LP pairs in the range $1/\sqrt{2} < \alpha \leq 1$. 

To get the analytic solution to these primal and dual problems we proceed as follows. For each case, we considered the effect polytope $\mathcal{E}_{\mathbb{H}^{[\npr]}_{\alpha (2,2)}}$ where we run through a discrete set of values of $\alpha$ between $22/30>1/\sqrt{2}$ and $1$ with a step-size $1/30$. For every step we have then solved the primal and dual problem pairs and used the solutions to make a guess of the analytic forms in terms of $\alpha$. We have then checked that these pair of guess vectors satisfy all of the analytic constraints for their respective problem and give the same optimal value for all $\alpha\in[1/\sqrt{2},1]$, confirming that we have found the optimum, since this shows $\mathbb{P}_{k',l'|i'}= \mathbb{D}_{k',l'|i'}$. [Note that when $\alpha=1/\sqrt{2}$, the solution to both problems is zero in every case, as expected.]

\subsection{\texorpdfstring{$\npr=2$}{m=2}}

There are 3 equivalent local relabelling classes of state spaces with 2 PR boxes. Out of these, party symmetric state spaces exist only in Class 1 and Class 3.  However, since the PR boxes in Class 1 are isotropically opposite pairs, there are no couplers for this state space (see Section~\ref{Subsection::2Roofs}) and one therefore only needs to check for couplers for a state space in Class 2. In Table~\ref{Table:2RoofsClass3}, we focus on $\mathrm{CHSH}_1$ scores.  The set of PR box pairs such that there exists an extremal effect in $\mathcal{E}_{\mathbb{H}^{[2]}_{\alpha (2,2)}[\pr_{2,3}]}$ for which a score of more than $3/4$ can be achieved in the $\mathrm{CHSH}_1$ game are:
$$
 \{(\pr_{1,\alpha},\pr_{1,\alpha}),(\pr_{1,\alpha},\pr_{2,\alpha}),(\pr_{2,\alpha},\pr_{1,\alpha})\}.
$$
In the table below and all following tables, we will only consider pairs of noisy PR boxes for which such violations are possible using extremal effects. In addition we directly provide the effects as a convex combination of extremal effects and the vectors $\mathbf{C}^T.\mathbf{y}$. From this data one can construct the optimising vectors $\mathbf{x}$ and $\mathbf{y}$.

To obtain the corresponding vector $\mathbf{y}$, first note from~\eqref{expression::C} that $\mathbf{C}$ can be seen as a column vector with each entry being a matrix. The first two entries are row matrices containing all 1s and have dimension $1 \times n$.  The entries labelled $-M_{k,l}$  are $n \times n$ matrices. Therefore, $\mathbf{C}^T$ is an $n \times (2+ n\npr^2)$ dimensional matrix in which the first two columns are all 1s, followed by an $n \times n $ block containing $-M^T_{1,1}$, and so on. Note that $(M^T_{k,l})_{i,j}=\langle e_j,\Phi_{e_i}(k,l)\rangle$, and we write $\left[e_j\right]_{{-M_{k,l}}}= - \left(\langle e_j,\Phi_{e_1}(k,l)\rangle, \langle e_j,\Phi_{e_2}(k,l)\rangle, \ldots, \langle e_j,\Phi_{e_n}(k,l)\rangle \right)$. For $\mathbf{C}^T.\mathbf{y}=a\left[e_{j}\right]_{{-M_{k,l}}}$, the vector $\mathbf{y}$ has a single non-zero entry, $a$, at the row $2+n(k-1)\npr+(l-1)n+j$.

For instance, the effect in the first row of Table~\ref{Table:2RoofsClass3} is $\theta e_{\mathrm{CH}_{1,\alpha}}+ (1-\theta) u$. The corresponding vector $\mathbf{x}$ has all zero entries, with the exception of $\theta$ at position $j$ such that $e_j=e_{\mathrm{CH_{1,\alpha}}}$ in~\eqref{Eq::EffectOrdering}, and $(1-\theta)$ at the index $j'$ such that $e_{j'}=u$ in~\eqref{Eq::EffectOrdering}. The respective entry under $\mathbf{C}^T.\mathbf{y}$ is $\frac{1+2\alpha}{4}\left[e_{\mathrm{CH_{1,\alpha}}}\right]_{{-M_{2,2}}}$. This corresponds to taking the aforementioned $j$ and $k=l=2$.

 This presentation style has been chosen to compress the data. Additionally, in the tables below we take
$$
\theta \coloneq \frac{3 \alpha  (\alpha +1)}{4 \alpha  (\alpha +1)-2} \text{ and } \theta':=\frac{2}{2 \alpha ^2+1}.
$$

\begin{table}[H]
    \centering

\right);
$$

There are 8 CHSH inequalities that can lead to 2 classes of pairs up to relabelling symmetry: $\{\chsh_1,\chsh_3\}$ and $\{\chsh_1,\chsh_5\}$. Therefore it suffices to prove the results for these pairs. Let us assume that ${\rm CHSH_1}[\rm{p(A,B|X,Y)}]> 3/4$. Note that the $2 \times 2$ blocks of $\rm{CHSH_{j \neq 1}}$ are all different when $\rm j=5$ whereas others are different from $\rm{CHSH_1}$ by two blocks. Let us collect them in two sets $\rm{CHSH_{5}}$ and $\rm{CHSH_{j \neq 1,5}}$. First, let us assume that ${\rm CHSH_3}[\p(A,B|X,Y)] > 3/4$ as well. Then it follows that
\begin{equation}
    \begin{split}
      (\rm CHSH_1 + CHSH_3)  [\p(A,B|X,Y)] & > \frac{3}{2}\\
        \implies  \frac{1}{4}\left(\left(\begin{array}{cc|cc}
   1  & 0 & 1 & 0 \\
   0  & 1 & 0 & 1 \\
   \hline
   1  & 0 & 0 & 1 \\
   0  & 1 & 1 & 0 \\ 
\end{array}\right) + \left(\begin{array}{cc|cc}
   1  & 0 & 0 & 1 \\
   0  & 1 & 1 & 0 \\
   \hline
   1  & 0 & 1 & 0 \\
   0  & 1 & 0 & 1 \\ 
\end{array}\right) \right) [\rm{p(A,B|X,Y)}] & > \frac{3}{2}\\
\implies 1 + 1 + 2\left(\rm p(0,0|0,0)+ \rm p(1,1|0,0)+ \rm p(0,0|1,0)+ \rm p(1,1|1,0)\right) &> 6\\
\implies \left(\rm p(0,0|0,0)+ \rm p(1,1|0,0)+ \rm p(0,0|1,0)+ \rm p(1,1|1,0)\right) &> 2\\
    \end{split}
\end{equation}
On the other hand, for any conditional probability distribution $ \rm{p(A,B|X,Y)}$, we have 
\begin{equation}
    \begin{split}
        \max_{\rm{p} \in \mathcal{P} } \left(\rm p(0,0|0,0)+ \rm p(1,1|0,0) + \rm p(0,0|1,0)+ \rm p(1,1|1,0) \right) &\leq 2,\\       
    \end{split}
\end{equation}
which is a contradiction. A similar contradiction can be reached if any other element from the second set was chosen. Next, let us assume that ${\rm CHSH_1}[\rm{p(A,B|X,Y)}]> 3/4$ and ${\rm CHSH_5}[\rm{p(A,B|X,Y)}]> 3/4$, then

\begin{equation}
    \begin{split}
      ( {\rm CHSH_1 + CHSH_5})[\rm{p(A,B|X,Y)}] &> \frac{3}{2}\\
        \implies   \left(\left(\begin{array}{cc|cc}
   1  & 0 & 1 & 0 \\
   0  & 1 & 0 & 1 \\
   \hline
   1  & 0 & 0 & 1 \\
   0  & 1 & 1 & 0 \\ 
\end{array}\right) + \left(\begin{array}{cc|cc}
   0  & 1 & 0 & 1 \\
   1  & 0 & 1 & 0 \\
   \hline
   0  & 1 & 1 & 0 \\
   1  & 0 & 0 & 1 \\ 
\end{array}\right)\right) [\rm{p(A,B|X,Y)}] &>6\\
\implies \left(\begin{array}{cc|cc}
   1  & 1 & 1 & 1 \\
   1  & 1 & 1 & 1 \\
   \hline
   1  & 1 & 1 & 1 \\
   1  & 1 & 1 & 1 \\ 
\end{array}\right) [\rm{p(A,B|X,Y)}] &> 6\\
\implies 4 &> 6 
    \end{split}
\end{equation}
which is also a contradiction.

\section{Proof of Theorem~\ref{Proposition::WinningBound}}
\label{Appendix::WinningBound}

    In the adaptive CHSH game, Bob performs a four outcome measurement $M=\{e_{b}\}_{b \in{00,01,10,11}}$. Corresponding to each outcome Alice and Charlie need to win 4 different CHSH games labelled $\{\chsh_b \}_b$. To perform entanglement swapping, Bob shares two instances of the maximally entangled state, $\rm N_\alpha$, one with Alice and one with Charlie. Since $e_{00}$ is minimally $2$-preserving,  $s_{e_{00}} \in \mathrm{ConvHull}\{\rm{L}_1, \rm{L}_2, \ldots , \mathrm{L}_n, {\rm N}_\alpha \}$. Let,    
    \begin{equation}      
    s_{e_{00}}=\sum_{j=1}^n \lambda_{00,j} L_j + \delta{\rm N}_\alpha,
    \label{eq:cvx_post_state}
    \end{equation}
    \noindent
    such that $\sum_{j=1}^n \lambda_{00,j}+\delta=1$ and $\lambda_{00,j},\delta\geq 0$ for all $j \in \{1,2, \ldots , n\}$. Recall further that Alice and Charlie fix their measurements and do not change them throughout the run of the game. Since $s_{e_{00}}$ admits the decomposition in Eq.~\eqref{eq:cvx_post_state}, the probability distribution $ \mathrm{p}_{s_{e_{00}}}\left(\rm A,B|X,Z\right)$ obtained after Alice and Charlie measure the state $s_{e_{00}}$  can be expressed as
    \begin{equation}
        \mathrm{p}_{s_{e_{00}}}\left(\rm A,B|X,Z\right) = \sum_{j=1}^n \lambda_{00,j} \mathrm{p}_{L_j}\left(\rm A,B|X,Z\right) + \delta \mathrm{p}_{{\rm N}_\alpha}\left(\rm A,B|X,Z\right),
        \label{eq:cvx_post_dist}
    \end{equation}
    where $\mathrm{p}_{L_j}\left(\rm A,B|X,Z\right)$ is the distribution obtained if Alice and Charlie had measured the local deterministic state $L_j$ and $\mathrm{p}_{{\rm N}_\alpha}\left(\rm A,B|X,Z\right)$  is the distribution obtained if Alice and Charlie had measured the entangled state ${\rm N}_\alpha$. Their objective is that $\mathrm{p}_{s_{e_{00}}}\left(\rm A,B|X,Z\right)$ wins the CHSH game $\chsh_{00}$. For this, the state $s_{e_{00}}$ must be entangled, i.e., $s_{e_{00}} \notin \mathrm{ConvHull} \left\{\mathrm{ L_1, L_2, \ldots, L_n} \right\}$. In other words $e_{00}$ must be a coupler. Assume that ${\rm CHSH_{00}}[\p_{e_{00}}\left(\rm A,B|X,Z\right)] > 3/4$.  Next, consider the state $s_{e_{01}}$ left with Alice and Charlie corresponding to the outcome of the effect $e_{01}$. From minimal $2$-preservability of $e_{01}$, $s_{e_{01}}$ will also have a decomposition as in Eq.~\eqref{eq:cvx_post_state} and since the measurements of Alice and Charlie are fixed, $s_{e_{01}}$ will generate a conditional probability distribution, $\p_{s_{e_{01}}}(A,B|X,Y)$, that admits a decomposition similar to Eq.~\eqref{eq:cvx_post_dist}. If $e_{01}$ is a coupler, the distribution $\p_{s_{e_{01}}}(A,B|X,Y)$ will win the game $\chsh_{00}$ by an amount more than 3/4. However, this time Alice and Charlie need to win $\chsh_{01}$ by an amount more than 3/4.  But, by Lemma~\ref{Lemma::IntersectionEmpty}, no 2-input 2-output conditional distribution can simultaneously win two CHSH games by an amount more than 3/4. This implies that Alice and Charlie can only win $\chsh_{01}$ by a score of at most 3/4. This argument can be extended to the remaining two post-selected states as well.  The maximum winning probability is achieved if the measurement choice helps Alice and Charlie to have a score of 3/4 for the remaining games. Replacing $e_{00}$ by $e$ gives the following upper bound on the winning probability:
    \begin{equation}
        \begin{split}
            p_{\rm win} &\leq p_{\mathrm{succ}}(e) \zeta_e + (1- p_{\mathrm{succ}}(e)) \frac{3}{4}\\
                        &\leq \frac{3}{4} +p_{\mathrm{succ}}(e) \left(\zeta_{e} - \frac{3}{4} \right) \\
        \end{split}
    \end{equation}
    \noindent
    This upper bound is maximised when the product $ p_{\mathrm{succ}} (e) \left(\zeta_{e} - 3/4 \right)$ is maximised. When there are no minimally $2$-preserving couplers, $\zeta_{e}$  can be at most $3/4$. Putting these together, we get
    \begin{equation} 
           p_{\mathrm{win}} \leq \begin{cases}
              \phantom{\Big(} \frac{3}{4} &\text{if} \quad E_{\rm coup} = \emptyset\\
              \phantom{\Big(} \frac{3}{4} + \max\limits_{e \in E_{\rm coup}} p_{\mathrm{succ}}(e)\left(\zeta_e -\frac{3}{4}\right) &\text{if} \quad E_{\rm coup} \neq \emptyset
           \end{cases}.
       \end{equation}

\end{document}